\Crefname{algocf}{Algorithm}{Algorithms}
\crefname{algocfline}{line}{lines}
\Crefname{invariant}{Invariant}{Invariants}
\Crefname{claim}{Claim}{Claims}
\Crefname{subclaim}{Subclaim}{Subclaims}
\definecolor{DarkGray}{rgb}{0.66, 0.66, 0.66}
\definecolor{DarkPowderBlue}{rgb}{0.0, 0.2, 0.6}
\definecolor{fluorescentyellow}{rgb}{0.8, 1.0, 0.0}
\definecolor{cerulean}{rgb}{0.0, 0.48, 0.65}
\definecolor{bleudefrance}{rgb}{0.19, 0.55, 0.91}
\newcounter{note}[section]
\newcommand{\initOneLiners}{%
    \setlength{\itemsep}{0pt}
    \setlength{\parsep }{0pt}
    \setlength{\topsep }{0pt}
}
  \def\\{}%
  \def\texttt#1{<#1>}%
  \def\textsf#1{<#1>}%
  \def\mathsf#1{<#1>}%
  \def\ensuremath#1{#1}%
  \def\Cref#1{<Label:#1>}%
  \def\eqref#1{<Eq.:#1>}%
\newtheorem{theorem}{Theorem}[section]
\newtheorem{lemma}[theorem]{Lemma}
\newtheorem{claim}[theorem]{Claim}
\newtheorem{fact}[theorem]{Fact}
\newtheorem{corollary}[theorem]{Corollary}
\theoremstyle{definition}
\newtheorem{defn}[theorem]{Definition}
\theoremstyle{remark}
\renewcommand{\theinvariant}{(I\@arabic\c@invariant)}
\newcommand{\tO}{\tilde{O}}
\newcommand{\var}{{\mathbb{V}}}
\newcommand{\ex}{{\mathbb{E}}}
\newcommand{\E}{\mathbb{E}}
\newcommand{\relv}{\eta}
\newcommand{\ugp}{u_G(p)}
\newcommand{\zgp}{z_G(p)}
\newcommand{\xgp}{x_G(p)}
\newcommand{\eps}{\varepsilon}
\newcommand{\EE}{\mathbb{E}}
\newcommand{\poly}{\operatorname{poly}}
\newcommand{\nf}{\nicefrac}
\newcommand{\junk}[1]{}
\newcommand{\eat}[1]{}
\newif\ifhideproofs
\title{Network Unreliability in Almost-Linear Time}
\author{Ruoxu Cen\\Duke University \and Jason Li\\Carnegie Mellon University \and Debmalya Panigrahi\\Duke University}
\date{}
\begin{document}

\maketitle

\begin{abstract}
    The network unreliability problem asks for the probability that a given undirected graph gets disconnected when every edge independently fails with a given probability $p$. Valiant (1979) showed that this problem is \#P-hard; therefore, the best we can hope for are approximation algorithms. In a classic result, Karger (1995) obtained the first FPTAS for this problem by leveraging the fact that when a graph disconnects, it almost always does so at a near-minimum cut, and there are only a small (polynomial) number of near-minimum cuts. Since then, a series of results have obtained progressively faster algorithms  to the current bound of $m^{1+o(1)} + \tilde{O}(n^{3/2})$ (Cen, He, Li, and Panigrahi, 2024). 
%
    In this paper, we obtain an $m^{1+o(1)}$-time algorithm for the network unreliability problem. This is essentially optimal, since we need $O(m)$ time to read the input graph. Our main new ingredient is relating network unreliability to an {\em ideal} tree packing of spanning trees (Thorup, 2001). 
\end{abstract}


\section{Introduction}
\label{sec:intro}
The unreliability of an undirected graph $G = (V, E)$, denoted $\ugp$, is the probability that it gets disconnected when every edge independently fails with a given probability $p$. It measures the robustness of a network to random edge failures (as against worst case failures, measured by the minimum cut). Reliability problems are extensively studied (see the books \cite{colbourn1987combinatorics, chaturvedi2016network}) and the problem of estimating $\ugp$ has been dubbed ``the most fundamental'' problem in this space~\cite{Karger20}. This problem was shown to be \#P-hard by Valiant in 1979~\cite{Valiant79}. The first fully polynomial-time randomized approximation scheme (FPRAS) was given in a seminal work by Karger \cite{Karger99}. For any constant $\eps \in (0, 1)$, this algorithm outputs a $(1\pm\eps)$-approximation for $\ugp$ with high probability (whp)\footnote{In this paper, as in the network unreliability literature, a result is said to hold with high probability if it holds with probability $1-\frac{1}{\poly(n)}$.} in $\tO(mn^4)$ time, where $n$ is the number of vertices in the graph and $m$ is the number of edges. The running time was later improved by Harris and Srinivasan \cite{HarrisS18} to $n^{3+o(1)}$ using a finer-grained analysis of Karger's algorithmic framework. Further improvements came in a series of works by Karger \cite{Karger16, Karger17, Karger20} that utilized progressively optimized versions of a recursive random contraction framework to eventually attain a running time of $\tO(n^2)$. This quadratic bound is a natural threshold because the techniques developed for network unreliability could also (implicitly or explicitly) enumerate the possibly $\Omega(n^2)$ minimum cuts of the graph. The quadratic barrier was recently breached by Cen, He, Li, and Panigrahi~\cite{CenHLP24} who obtained a running time of $m^{1+o(1)} + \tO(n^{3/2})$ by using an importance sampling technique. The ultimate goal in this line of work, explicitly conjectured by Karger in \cite{Karger20}, is to obtain a network unreliability algorithm that runs in (essentially) linear time in the size of the input graph. This last bound is also the best possible, since any network unreliability algorithm must read its entire input.

\subsection{Our Result}

In this paper, we give a fully polynomial-time randomized approximation scheme (FPRAS) for the network unreliability problem that has a running time of $m^{1+o(1)}$. We show the following theorem:

\begin{theorem}\label{thm:main}
    For any \eat{constant} $\eps\in (0, 1)$, there is a randomized Monte Carlo algorithm for the network unreliability problem that runs in $m^{1+o(1)}\eps^{-O(1)}$ time and outputs a $(1\pm\eps)$-approximation to $\ugp$ whp.
\end{theorem}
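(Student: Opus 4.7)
I would follow Karger's regime decomposition. Let $c$ denote the minimum cut of $G$. In the ``easy'' regime where $p^c \ge 1/\poly(n)$, the unreliability $\ugp$ is at least $1/\poly(n)$, so a direct Monte Carlo estimator with $\poly(n/\eps)$ edge-failure samples, each checked for connectivity in $m^{1+o(1)}$ time, meets the target. The nontrivial regime is the complementary one, where $\ugp$ is tiny but dominated by cuts of size at most $\alpha c$ for some $\alpha = 1 + O(\log\log n/\log n)$ (via Karger's near-minimum cut counting). Here I would leverage Thorup's ideal tree packing to construct a low-variance unbiased estimator.

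\textbf{Estimator via ideal packing.} Compute in $m^{1+o(1)}$ time an ideal packing $\cT = \{T_1, \ldots, T_\tau\}$ of $\tau = \Theta(c)$ spanning trees, with the property that every cut of $G$ of size at most $\alpha c$ is $2$-respected (cut in $\le 2$ tree edges) by at least $(1-\delta)\tau$ trees, with $\delta \ll \eps$. For each $T \in \cT$, let $\mathcal{C}^{(2)}_T$ be its family of $2$-respecting cuts of $G$. Given an edge-failure configuration $F \subseteq E$ sampled from the product Bernoulli, define
\[
Y(F) \;:=\; \sum_{T \in \cT}\; \sum_{C \in \mathcal{C}^{(2)}_T} \mathbf{1}[C \subseteq F].
\]
I would show that $\ex[Y]/\tau = (1\pm\eps)\,\ugp$. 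The upper bound follows from Karger's near-min cut counting: $\ex[Y] \le \tau \sum_C p^{|C|} \le (1+\eps)\,\tau\,\ugp$ in the small-cut regime. For the lower bound, whenever $G$ disconnects, the smallest failing cut has size $\le \alpha c$ with conditional probability $\ge 1-\eps$, and by the ideal property is then $2$-respected by at least $(1-\delta)\tau$ trees, contributing at least that much to $Y$.

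\textbf{Variance, running time, and main obstacle.} To turn this into an algorithm, I would prove $\var[Y]/\tau^2 \le \polylog(n) \cdot (\ex[Y]/\tau)^2$ and subsample both $\polylog(n)\,\eps^{-O(1)}$ edge-failure configurations and $\polylog(n)\,\eps^{-O(1)}$ trees from the packing per estimate. The variance bound hinges on controlling the correlations $\ex[\mathbf{1}[C \subseteq F]\,\mathbf{1}[C' \subseteq F]] = p^{|C \cup C'|}$, again through the ideal property: near-min cuts are consistently $2$-respected by most trees, so paired terms decouple. Per (tree, failure) sample, enumerating the failing $2$-respecting cuts of $T$ is a Karger-style tree DP in $\tilde O(m)$ time, yielding an overall $m^{1+o(1)}\eps^{-O(1)}$ runtime. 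I expect the crux to be the joint analysis of the variance bound and tree subsampling, both of which rest on the quantitative ``$(1-\delta)$ fraction of trees $2$-respect every near-min cut'' guarantee; losing this even slightly would either bias the estimator downward or blow up the variance from near-min cuts that share many edges.
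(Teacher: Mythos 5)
Your high-level decomposition (Monte Carlo for large $\ugp$, tree-packing-based estimator for small $\ugp$) points in the right direction, and the idea of leaning on Thorup's ideal packing is the right intuition, but there are two substantial gaps.

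\textbf{The 2-respecting-by-$(1-\delta)\tau$-trees claim is false.} Even with the ideal packing and after contracting $3\lambda$-strong components, the loads only satisfy $\ell^*(e)<2/\lambda$, so for a minimum cut $C$ one has $\ell^*(C)=\E_{T}[|C\cap T|]<2$ but not $\le 1$. Since $|C\cap T|\ge1$ for any spanning tree, the best you can conclude from the load bound is $\Pr_T[|C\cap T|\le2]\ge 1/2$, and this cannot in general be pushed up to $1-\delta$ for small $\delta$: the factor-$2$ duality gap between packing value and min-cut is exactly what obstructs it. Your estimator $\ex[Y]/\tau = \frac1\tau\sum_T\sum_{C\in\mathcal C^{(2)}_T}p^{|C|}$ weights each cut $C$ by the fraction of trees that 2-respect it; if that fraction is only $\approx 1/2$ for the dominant cuts, the estimator is biased by a constant factor, not $(1\pm\eps)$. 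The paper's sampler avoids this: instead of restricting to 2-respecting cuts, it allows up to $O(\log n)$ tree edges, picks a random tree $T$, random counts $j_1,\ldots,j_L$ per ideal-load level, and then samples edges, and proves via a Markov argument on $\ell^{\mathcal T}(C_i)$ that \emph{some} tree in a small random subsample respects a cut-specific quota; it then corrects via importance weighting $p^{|C|}/q(C)$ rather than hoping almost all trees cooperate. This is why the paper's Lemma~\ref{lem:reliable-main} is the technical heart of the reliable case, and its proof requires the per-level decomposition, the $k_{\pi_i}$/$\tilde k_{\pi_i}$ relations (Lemma~\ref{lem:tilde-ki-range}), and the surrogate criterion of Definition~\ref{def:surrogate-criteria}—none of which appear in your proposal.

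\textbf{The two-regime dichotomy does not cover all inputs efficiently.} Your ``easy'' case $p^c\ge 1/\poly(n)$ only gives $\ugp\ge1/\poly(n)$, and naive Monte Carlo at that scale needs $\poly(n)$ disconnectivity checks at $\Theta(m)$ each—far above $m^{1+o(1)}$. The paper's Monte Carlo base case (Lemma~\ref{lem:mc}) is only invoked when $p\ge\theta$, i.e., $\ugp\ge n^{-\Omega(1/\log\log n)}$, and even then uses correlated sampling from recursive contraction to amortize the cost. The real work of the paper is the large middle regime $n^{-1-O(1/\log\log n)}\le \ugp\le n^{-\Omega(1/\log\log n)}$, handled by a Karger--Stein-style recursive contraction (Section~\ref{sec:contraction}) in which the contraction rate $q$ is calibrated by the parameter $\gamma$ extracted from the approximate ideal loads (so that $q^\gamma=1/2$). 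Your proposal has no analogue of this recursion and therefore cannot achieve $m^{1+o(1)}$ across the full parameter range.

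A smaller but real gap: sampling $F$ and then computing $Y(F)=\sum_T\sum_{C\in\mathcal C^{(2)}_T}\mathbf 1[C\subseteq F]$ requires deciding, for each of $\Theta(n^2)$ two-edge tree cuts, whether the entire cut is contained in $F$; this is not the same as evaluating the 2-respecting cut \emph{values} (Karger's mincut DP), and you would need to argue that this counting can also be done in $\tilde O(m)$ time, which is nontrivial and not addressed. The paper sidesteps this by sampling a cut directly and querying its size with the data structure of Lemma~\ref{lem:data-structure}.
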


Up to lower order terms, the above theorem brings the line of research into faster network unrelibility algorithms to a close. One might still hope to improve on the lower order terms: specifically, improve $m^{o(1)}$ to $\poly\log(m)$ in the running time. As we will see, the $m^{o(1)}$ overhead in our running time comes from two sources: The first is the use of maximum flow algorithms, the fastest of which run in $m^{1+o(1)}$ time~\cite{ChenKLPGS22,BrandCPKLGSS23}; this will automatically improve if faster maximum flow algorithms are discovered. The second source is more intrinsic to our method and relates to the amount of variance in the estimator used in our work; it is an interesting question as to whether this variance can be reduced from $m^{o(1)}$ to $\poly\log(m)$ in our framework.

\eat{
\subsection{Related Work}
Our work primarily builds upon a line of works on the unreliability problem \cite{Karger99, Karger16, Karger17, HarrisS18, Karger20}. These previous algorithms for approximating graph unreliability all consist of two different algorithms handling different cases of the problem. When the graph is unreliable ($u_G(p)$ is large), then these algorithms all run a naive Monte Carlo approximation of $u_G(p)$. Simply sampling $G$ with failure probability $p$ a few times and returning the ratio of disconnected samples to total samples suffices. The case in which $u_G(p)$ is small is handled using various other techniques in each of these algorithms.

Karger initialized the study of randomized approximation algorithms for network unreliability in \cite{Karger99}. Here the observation was made that by a cut-counting argument, when $u_G(p)$ is small, only a small number of cuts can significantly contribute to the probability that the graph disconnects. Using this observation, Karger reduced the unreliability problem to the problem of approximate DNF counting with a DNF of polynomial size. Such a DNF counting problem can be solved using the approximation algorithm of Karp, Luby, and Madras \cite{karp1989monte}, and the resulting algorithm runs in time $\tO(mn^4\eps^{-3})$. 

Following this work, Harris and Srinivasan \cite{HarrisS18} improved upon this algorithm to run in time $\tO(n^{3}\eps^{-2})$ by showing that an even smaller number of cuts can significantly contribute to $u_G(p)$ and improving upon the use of \cite{karp1989monte} by taking advantage of the structure of DNF counting problems arising from unreliability problems.

The later works of Karger \cite{Karger16, Karger17, Karger20} all use the recursive contraction algorithm to approximate $u_G(p)$, and this algorithm most closely resembles the algorithm we give in this work. The algorithm similarly applies naive Monte Carlo in the case where $u_G(p)$ is large. However, rather than reducing to DNF counting in the case of small $u_G(p)$ the algorithm instead reduces the problem to a (likely) smaller case by fixing $q\geq p$ and sampling an $H$ from $G$ by contracting edges with probability $1-q$ and then estimating $u_H(p/q)$. This quantity is an unbiased estimator for $u_G(p)$, so to bound its probability of error it suffices to bound its variance. 

In \cite{Karger16} this $q$ is chosen so that $H$ is very likely to be of constant size, and bounding variance is then accomplished using a coupling argument. The algorithm of \cite{Karger17} sets $q$ differently, so that sampling $H$ from $G$ by contracting edges independently with probability $1-q$ results in a graph that is most likely a multiplicative constant size smaller than $G$. Here a variance bound is obtained by a cut-counting argument. The recursive structure of the resulting algorithm very closely resembles the Karger-Stein algorithm \cite{karger1996new} for minimum cut. Finally, \cite{Karger20} provides a refined analysis of this recursive contraction algorithm for unreliability, and shows that it can run in time \textcolor{blue}{(Check this quantity)} $\tO(n^2 \eps^{-1})$.
}

\subsection{Our Techniques}

Intuitively, we divide into two cases depending on the value of $\ugp$: if (roughly) $\ugp < \nf{1}{n}$, we say the graph is reliable, else it is unreliable. Morally, we would like to use Monte Carlo sampling for the unreliable case, i.e., sample edge failure outcomes and use an indicator variable for disconnection to estimate $\ugp$. But, we cannot do this directly since generating each independent sample takes $\tO(m)$ time, and we need $\tO(n)$ samples. Instead, we use the recursive contraction algorithm of Karger and Stein~\cite{karger1996new} to (intuitively) generate $\tO(n)$ samples in a correlated manner such that each sample only takes $\tO(1)$ amortized time. We explain how the recursive contraction algorithm can be used to generate Monte Carlo samples in the next paragraph.

A mechanism for generating a Monte Carlo sample is to randomly order edges, contract a prefix of edges in this order, and ask if the resulting graph is a singleton vertex. If this process is independently repeated multiple times, it produces a set of independent samples. But, now, suppose that instead of creating completely independent random permutations, we create correlations in the following manner: we first select a random set of edges that appear at the head of the permutations for all the samples; then, we create two independent copies and independently sample the next set of edges in these two copies; then, we subdivide further by creating two independent copies of each of the previous copies, and so on. This creates a correlated set of permutations on which we perform contractions to generate samples of the network. The advantage is that this latter process can be implemented very efficiently, because this is exactly the recursive contraction algorithm~\cite{karger1996new}. Normally, a recursive contraction algorithm would generate $\tO(n^2)$ samples in $\tO(n^2)$ time. However, using the assumption that $\ugp > \nf{1}{n}$, we show that $\tO(n)$ samples generated by this process are sufficient, and that this can be produced in $m^{1+o(1)}$ time.

Next, we consider the reliable case, i.e., when $\ugp < \nf{1}{n}$. Recall that we would like to estimate the probability that at least one cut fails (i.e., all edges are deleted). In the reliable case, this is a rare event and can be well-approximated by the expected number of failed cuts, i.e., the corresponding union bound. Furthermore, the large cuts can be ignored (their failure is unlikely), but we are still left with $\Omega(n^2)$ cuts, e.g., the minimum cuts themselves. To do this in close to $O(m)$ time, we sample a surrogate distribution $q(C)$ that approximates the real distribution $p^{|C|}$. The distribution $q(C)$ is defined via a distribution on spanning trees, from which we sample a spanning tree, and perform a sampling procedure on this tree to define a cut. We describe the spanning tree distribution next. 

We use a fractional spanning tree packing that was introduced by
Thorup~\cite{Thorup01, Thorup08}. 
The total value of the packing equals the min-ratio value of all multiway cuts in the graph, i.e., $\pi := \min_{\text{multiway cuts } C} \frac{|C|}{k(C)-1}$, where $k(C)$ is the number of sides of $C$. The packing is also ``edge-disjoint'' in the sense that each edge is used only in spanning trees that add up to a total value at most $1$. This latter property ensures that every edge in the min-ratio cut appears in exactly $\nf{1}{\pi}$ fraction of the trees by value. More generally, the tree packing also ensures that this property holds recursively on the induced subgraph defined on each side of the min-ratio cut, which has a larger cut ratio than $\pi$.
When everything is settled, any edge $e$ will be in $\nf{1}{\pi(e)}$ fraction of trees for some cut ratio $\pi(e)$. We call $\pi(e)$ the {\em level} of edge $e$, and the entire collection of fractional trees is called an {\em ideal} tree packing.

So, the first step of our sampling is to generate a spanning tree from the ideal tree packing distribution. Next, we need to sample edges in the sampled spanning tree to define the sampled cut. For intuition, let us imagine an idealized scenario where for any cut $C$, every spanning tree $T$ in the ideal tree packing contains $\frac{|C_i|}{\pi_i}$ edges at level $i$, 
where $C_i$ is the set of level-$i$ edges in $C$. Then, an unbiased estimator for $p^{|C|}$ can be obtained by sampling every edge $e$ in a sampled spanning tree $T$ with probability $p^{\pi(e)}$. The advantage of this process is that we are sampling a small number of edges that appear in the spanning tree, much smaller than the size of the cut, which can be done much more efficiently. This intuition is not precise because edges of a cut are not proportionately distributed across the spanning trees of the ideal tree packing. Nevertheless, we show that the error due to the non-uniformity (i.e., variance) can be bounded.

A final complication is that we need to use an approximation algorithm for ideal tree packing since exact algorithms are too slow for our purpose. This means the levels of edges are now only approximately correct, which leads to dependence between the random sampling process across the different (approximate) levels. We show that this additional error can be controlled for the above arguments to still hold approximately, but in the process, we lose additional sub-polynomial terms in the running time to counter the variance of the estimator.

\subsection{Overall Algorithm and Paper Organization} 
\label{sec:organization}


We now give an overview of the algorithm. 
Our overall algorithm builds a recursive computation tree. There are several base cases that we describe first. The first two base cases also appear in prior work~\cite{Karger20,CenHLP24}. The interesting (and new) base case is the third one, which is also the main contribution of this paper. 
\begin{itemize}
    \item If $n \le \eps^{-O(1)}$, then any $n^{O(1)}\eps^{-O(1)}$ running time is still $\eps^{-O(1)}$, so we run Karger's algorithm \cite{Karger20} that gives an unbiased estimator for $\ugp$. The following theorem states the properties of this estimator
    (relative variance of a random variable $X$ is $\frac{\var[X]}{(\E[X])^2}$): 
    \begin{theorem}[\cite{Karger20}]\label{thm:karger-estimator}
        Given a graph $G$ with vertex size $n$, an unbiased estimator of $\ugp$ with $O(1)$ relative variance can be computed in $\tO(n^2)$ time. As a consequence, a $(1\pm\eps)$-approximation to $\ugp$ can be computed in time $\tO(n^2\eps^{-2})$.
    \end{theorem}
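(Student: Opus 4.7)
The plan is to produce an unbiased estimator for $\ugp$ using the Karger--Stein recursive contraction framework, adapted to network reliability, and then argue a constant-factor bound on its relative variance. At the topmost level, the basic unbiased building block is completely trivial: draw $G(p)$ by failing each edge independently with probability $p$, and return the indicator that the resulting graph is disconnected; its expectation is exactly $\ugp$. The challenge is not correctness but speed, since a naive realization costs $\tO(m)$ per sample and we would need $\tO(n)$ samples in the reliable regime.

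The speedup comes from the recursive contraction scheme. I would maintain a binary recursion tree whose root is $G$. At a node holding a current multigraph $H$ of size $s$, I would draw a random permutation of its edges, mark each independently as ``failed'' with probability $p$, contract the \emph{non-failed} prefix up to the moment $H$ shrinks to size $s/\sqrt{2}$, and then recurse on two independent copies starting from the shared contracted graph. The two subtrees reuse the contraction history accumulated so far but make independent random choices afterwards, which is exactly how Karger--Stein amortizes contraction cost across many correlated samples. Each leaf of the recursion tree produces one Monte Carlo sample of whether the failed-edge graph disconnects $G$; there are $\tO(n^2)$ leaves, and the standard recurrence $T(s) = 2T(s/\sqrt{2}) + \tO(s^2)$ resolves to $T(n) = \tO(n^2)$, matching the claimed time bound.

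The main obstacle, and the place where a proof from scratch would be hardest, is bounding the relative variance of the average-of-leaves estimator by $O(1)$. The natural way to attack this is to expand $\var(\hat u) = \sum_{C,C'} \Pr[\text{$C$ and $C'$ both realized}] - \ugp^2$, where the sum is over pairs of cuts of $G$, and then use the cut-counting bound that there are at most $n^{O(\alpha)}$ cuts of value at most $\alpha\lambda$, together with the fact that two leaves sharing a deep ancestor only retain correlation through cuts surviving many random contractions. One separates pairs of leaves by the depth of their lowest common ancestor: shallow LCAs contribute weak correlations near independence, while deep LCAs contribute strong correlations but through few surviving cuts. Combining the cut-count bound with the contraction survival probability $O(s^{-2\alpha})$ for an $\alpha$-approximate cut at recursion size $s$ should make the geometric series over levels collapse to $O(1)\cdot \ugp^2$, yielding the $O(1)$ relative variance claim.

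Finally, the stated ``$(1 \pm \eps)$-approximation in $\tO(n^2 \eps^{-2})$'' is an immediate corollary: average $\Theta(\eps^{-2})$ independent copies of the unbiased, bounded-relative-variance estimator and apply Chebyshev's inequality to boost to arbitrary confidence $1 - n^{-c}$ via the median-of-means trick, paying only a $\polylog$ factor absorbed into the $\tO$. The delicate piece is entirely the variance analysis; once that is in place the time and approximation guarantees follow mechanically.
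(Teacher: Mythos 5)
The paper does not prove \Cref{thm:karger-estimator}; it imports it wholesale from \cite{Karger20} as a black box, so there is no ``paper's own proof'' to compare against. Taking your proposal on its own terms: the high-level framework you describe --- generate correlated Monte Carlo samples by sharing a random edge-permutation prefix across two branches, recurse with shrink factor $\sqrt{2}$, collect $\tO(n^2)$ leaf indicators --- is essentially the intuitive picture the present paper gives in Section~1.2 as background on Karger's recursive-contraction approach, and your time recurrence $T(s)=2T(s/\sqrt 2)+\tO(s^2)\Rightarrow \tO(n^2)$ is correct.

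The genuine gap is the variance bound, which you acknowledge is ``the delicate piece'' but then dispatch with a sentence. Two concrete problems. First, the expansion you write, $\var(\hat u) = \sum_{C,C'}\Pr[\cdot] - \ugp^2$, is not the variance of the estimator: $\hat u$ is an average over $\Theta(n^2)$ correlated leaf indicators, so one must expand $\var(\hat u) = N^{-2}\sum_{i,j}\bigl(\Pr[X_i{=}1,X_j{=}1]-\ugp^2\bigr)$ and only then, for each leaf pair at a given LCA depth, bring in cuts; conflating the sum over leaf pairs with a sum over cut pairs skips the heart of the argument. Second, and more seriously, the ingredients you name --- cut-counting at rate $n^{O(\alpha)}$ plus survival probability $s^{-\Theta(\alpha)}$ in the recursion --- are exactly what Karger's earlier papers \cite{Karger16,Karger17} used, and those gave relative variance growing with $n$ (hence running times worse than $\tO(n^2)$). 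Pushing the relative variance down to $O(1)$ is precisely the refinement that \cite{Karger20} supplies, via a considerably sharper accounting of how the per-level second moment compounds; your sketch does not indicate where that extra factor is recovered. So the proposal is a fair account of the algorithm's shape but does not constitute a proof of the stated variance bound, which is the non-trivial content of the theorem.
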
 

    \item The second base case is to run Monte Carlo sampling when $p \ge \theta$ for some threshold $\theta$ whose value is given by \Cref{lem:z-approx-u}. The properties of the estimator are summarized below:
            \begin{lemma}[Lemma 1.3 in \cite{CenHLP24}]\label{lem:mc}
                For any $p\ge\theta$, an unbiased estimator of $\ugp$ with relative variance $O(1)$ can be computed in time $m^{1+o(1)}$. As a consequence, a $(1\pm\eps)$-approximation to $\ugp$ can be computed in $m^{1+o(1)}\eps^{-2}$ time under the condition that $p\ge\theta$.
            \end{lemma}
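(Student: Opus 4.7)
My plan is to instantiate naive Monte Carlo sampling. In one trial, the estimator is the Bernoulli indicator $X$ that the graph is disconnected after independently failing each edge with probability $p$. Since $\E[X] = \ugp$ and $\var[X] \le \ugp$, the relative variance of $X$ is at most $1/\ugp$; averaging $T$ independent trials reduces it to $1/(T \cdot \ugp)$. A single trial is drawn in $\tO(m)$ time by flipping a $p$-biased coin for each of the $m$ edges and then testing connectivity of the surviving subgraph with a union-find data structure.

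The threshold $\theta$ provided by \Cref{lem:z-approx-u} is engineered precisely so that whenever $p \ge \theta$, the unreliability $\ugp$ is bounded below by essentially $m^{-o(1)}$---this is the ``unreliable'' regime, in which naive sampling is effective. Taking $T = m^{o(1)}$ then yields an unbiased estimator with $O(1)$ relative variance at total cost $T \cdot \tO(m) = m^{1+o(1)}$, establishing the first statement. The $(1 \pm \eps)$-approximation is obtained by standard median-of-means boosting: average $O(\eps^{-2})$ copies of the $O(1)$-relative-variance estimator for constant-probability accuracy (via Chebyshev), then take the median of $O(\log n)$ such averages for whp correctness, at total cost $m^{1+o(1)}\eps^{-2}$.

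The only non-trivial ingredient is the lower bound $\ugp \ge m^{-o(1)}$ in the regime $p \ge \theta$, but this is exactly the property by which $\theta$ is defined in \Cref{lem:z-approx-u}: $\theta$ is the threshold separating the ``reliable'' regime (where $\zgp$ closely approximates $\ugp$) from the ``unreliable'' regime (where $\ugp$ is large enough for naive Monte Carlo to succeed). Since the lemma is stated verbatim from \cite{CenHLP24}, the formal proof can be inherited from that paper without modification.
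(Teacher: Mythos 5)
Your proposal is correct and captures exactly the intended argument. Since \Cref{lem:mc} is imported verbatim from \cite{CenHLP24} and this paper cites it without reproving it, the comparison is against the standard argument that the citation is relying on, which is precisely what you reconstruct: by \Cref{lem:z-approx-u}, $u_G(\theta) = n^{-\Omega(1/\log\log n)}$, and monotonicity of $u_G$ in $p$ gives $\ugp \ge u_G(\theta) = n^{-o(1)} \ge m^{-o(1)}$ for $p\ge\theta$; the indicator of disconnection has relative variance at most $1/\ugp \le m^{o(1)}$; averaging $T=m^{o(1)}$ independent trials (each a $\tO(m)$-time coin-flip plus connectivity check) brings the relative variance to $O(1)$ via \Cref{fact:rel-var-decrease} at total cost $m^{1+o(1)}$; and \Cref{lem:mc-sample} then gives the $(1\pm\eps)$-approximation in an extra factor of $\tO(\eps^{-2})$. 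The only place to be mildly careful is the passage from $\ugp \ge n^{-o(1)}$ to the sample-count bound $T = m^{o(1)}$, but $n \le m+1$ for a connected graph so $n^{o(1)} = m^{o(1)}$ and the bound goes through.
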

    We are left to define the value of $\theta$. We set $\theta$ such that when $p < \theta$, it is very unlikely that more than one cut fails. To quantify this, define $\zgp$ as the expected number of failed cuts $\zgp := \sum_{C_i} p^{|C_i|}$ and $\xgp$ as the expected number of failed cut pairs $\xgp := \sum_{C_i \ne C_j} p^{|C_i\cup C_j|}$. By the inclusion-exclusion principle, $\ugp \ge \zgp - \xgp$. We define $\theta$ as follows: 
    \begin{lemma}[Phase transition, Lemma 2.1 of \cite{CenHLP24}]\label{lem:z-approx-u}
    There exists a threshold $\theta$ such that $u_G(\theta) = n^{-\Omega(1/\log \log n)}$ satisfying the following property: When $p<\theta$, we have $\frac{x_G(p)}{z_G(p)} \le \frac{1}{\log n}$, and therefore,\footnote{All logarithms are base $e$ unless otherwise stated.}
    \[
    \left(1-\frac{1}{\log n}\right) z_G(p) \le \ugp \le z_G(p).
    \]
    \end{lemma}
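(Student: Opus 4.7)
The plan is to define $\theta$ as the largest value of $p$ for which $\zgp \le n^{-c/\log\log n}$, for a small absolute constant $c > 0$ to be chosen at the end. Since $\ugp \le \zgp$ by a union bound over all cuts, this immediately gives the upper bound $u_G(\theta) \le n^{-c/\log\log n}$. The matching lower bound $u_G(\theta) \ge n^{-\Omega(1/\log\log n)}$, as well as the displayed sandwich inequality, both follow from the Bonferroni bound $\ugp \ge \zgp - \xgp$ once the key ratio bound $\xgp/\zgp \le 1/\log n$ is established for all $p \le \theta$. Thus the entire statement reduces to proving this single ratio bound.

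To bound $\xgp/\zgp$, I would start by classifying cuts by size using Karger's cut-counting theorem: the number of cuts with $|C| \le \alpha \lambda$ is at most $n^{O(\alpha)}$, where $\lambda$ is the minimum cut value. Partitioning all cuts into geometric bands $B_k = \{C : |C|/\lambda \in [(1+\gamma)^k,(1+\gamma)^{k+1})\}$ for a small constant $\gamma$, one gets
\[
\zgp \le \sum_k n^{O((1+\gamma)^{k+1})}\, p^{(1+\gamma)^k \lambda}.
\]
Setting $t := n^{2} p^\lambda$, this behaves like $\sum_k t^{(1+\gamma)^k}$, which is dominated either by the first term (when $t$ is very small) or by a controlled tail. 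Combining this with the defining condition $\zgp \approx n^{-c/\log\log n}$ at $p = \theta$ pins down $p^\lambda$ at the threshold up to sub-polynomial corrections; this is the key quantity controlling all downstream estimates.

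Next, I would rewrite
\[
\xgp \;=\; \sum_i p^{|C_i|} \sum_{j \ne i} p^{|C_j \setminus C_i|}
\]
and bound the inner sum for each fixed $C_i$. This is where the main obstacle lies: a naive application of Karger's bound to $|C_j \setminus C_i|$ loses too much, because many cuts can share almost all their edges with $C_i$ (e.g., other near-minimum cuts sitting close to $C_i$ in the lattice of near-minimum cuts), so those terms contribute factors close to $p^{0}=1$. The way around this is a structural counting argument: using submodularity of the cut function (equivalently, the crossing structure captured by the cactus representation of near-minimum cuts), one shows that the number of cuts $C_j$ with $|C_j \setminus C_i| \le r$ is at most $n^{O(r/\lambda)}$. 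The inner sum then becomes a geometric series in $n^{2} p^\lambda$, dominated by its first few terms under the threshold regime. Summing over $i$, dividing by $\zgp$, and comparing with the estimate from the previous step yields $\xgp/\zgp \le 1/\log n$, provided the constant $c$ in the definition of $\theta$ is chosen sufficiently small to absorb all hidden constants. The technical core of the argument is the structural counting of near-cut pairs; the rest is a routine first-moment-versus-second-moment comparison.
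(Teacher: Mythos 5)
The paper itself does not prove this lemma; it is cited verbatim as Lemma~2.1 of [CenHLP24], so I am comparing your proposal against the proof in that reference rather than against text in this paper.

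Your proposal has a genuine gap at what you yourself identify as its ``technical core'': the claim that for a fixed cut $C_i$ the number of cuts $C_j$ with $|C_j \setminus C_i| \le r$ is $n^{O(r/\lambda)}$. This is not a known consequence of Karger's cut-counting theorem or of the cactus/submodularity structure, and it is false as stated. Take $G$ bipartite with sides $A,B$ and let $C_i$ be the full bipartition cut: then \emph{every} cut $C_j$ of $G$ satisfies $C_j \subseteq C_i$, i.e.\ $|C_j \setminus C_i| = 0$, giving $2^{n-1}-1$ cuts where your bound demands $n^{O(0)} = O(1)$. You could try to restrict to near-minimum $C_i$ (which is what actually matters in the sum $\sum_i p^{|C_i|} \sum_{j\ne i} p^{|C_j\setminus C_i|}$), but even then this ``relative'' cut-counting bound is not something that follows routinely from submodularity or from the near-min-cut cactus, and you would have to prove it from scratch; as written it is an unsupported assertion carrying the entire weight of the argument. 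Worse, the constant hidden in $O(r/\lambda)$ must be at most $2$ for the resulting geometric series in $n^2 p^\lambda$ to converge to $O(1/\log n)\cdot z_G(p)$, so even a correct qualitative version of the lemma with a loose constant would not suffice.

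The actual proof in [CenHLP24] (and in Karger's earlier work that it builds on) takes a different and more delicate route: it introduces the auxiliary value $b$ defined by $u_G(b)=\tfrac12$, proves a comparison of the form $x_G(p)\le O(\log n)\,(p/b)^{\lambda/2}\,z_G(p)$ (Theorem~5.1 of [Karger20], quoted in this paper as \Cref{lem:karger-x-to-z-detail-bound}), and separately lower-bounds $z_{G,\alpha}(b)\ge n^{-O(1/\log\alpha)}$ (Theorem~7.1 of [Karger20], quoted here as \Cref{lem:karger-partial-z-bound}). Combining these pins down $(p/b)^\lambda$ once $p<\theta$, which yields $x_G(p)/z_G(p)\le 1/\log n$. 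The $b$-parameterization sidesteps exactly the pair-counting problem you run into: instead of directly counting cuts with small $|C_j\setminus C_i|$, it leverages monotonicity in $p$ and a global cut-counting estimate at the ``balance point'' $b$. If you want to complete your route you would need to actually prove a restricted relative cut-counting lemma with explicit constants; otherwise, following the $b$-parameter argument is the path of least resistance.
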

    Intuitively, this means that when $p < \theta$, we can use an estimator of $\zgp$ as a surrogate for that of $\ugp$, although we need to be careful about the $\nf {1}{\log n}$ gap between $\zgp$ and $\ugp$.
    
    \item {\em The final base case is the most interesting new contribution of this paper.} It is invoked when $\ugp < n^{-1-\Omega(1/\log \log n)}$. In this case, we run our new algorithm (\Cref{sec:packing}) and
    prove the following lemma (an estimator $X$ for $\ugp$ with relative bias $\delta$ satisfies $\E[X] \in (1\pm \delta)\ugp$):
    \begin{restatable}{lemma}{importance}\label{lem:interface-importance}
        If $\ugp < n^{-1-\Omega(1/\log\log n)}$, an estimator for $\ugp$ with relative bias $n^{-\Omega(1)}$ and relative variance $\le 1$ can be computed in $m^{1+o(1)}$ time. As a consequence, a $(1\pm\eps)$-approximation to $\ugp$ can be computed in $m^{1+o(1)} \eps^{-O(1)}$ time under the condition that $\ugp < n^{-1-\Omega(1/\log\log n)}$.
    \end{restatable}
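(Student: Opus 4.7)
The plan is to construct an importance-sampling estimator for $\zgp := \sum_C p^{|C|}$, then convert to $\ugp$. Under the hypothesis $\ugp < n^{-1-\Omega(1/\log\log n)}$, we have $p \ll \theta$, so \Cref{lem:z-approx-u} gives $\zgp$ as a $(1 \pm 1/\log n)$-approximation of $\ugp$; pushing further below threshold drives $\xgp/\zgp$ down to $n^{-\Omega(1)}$ via the same cut-counting bound used in \Cref{lem:z-approx-u}, so an $n^{-\Omega(1)}$-biased estimator for $\zgp$ is automatically an $n^{-\Omega(1)}$-biased estimator for $\ugp$. Thus it suffices to produce an estimator $X$ with $\E[X] = (1 \pm n^{-\Omega(1)})\zgp$ and $\mathrm{Var}[X] \le \E[X]^2$.

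The estimator is built as follows. First, compute an approximate ideal tree packing of $G$ using Thorup's algorithm~\cite{Thorup01, Thorup08} in $m^{1+o(1)}$ time, yielding a distribution $\mu$ over spanning trees together with edge-levels $\pi(e)$ such that each $e$ lies in a $1/\pi(e)$-fraction of the trees by $\mu$-weight. Second, sample a tree $T \sim \mu$ and independently mark each $e \in T$ with probability $p^{\pi(e)}$. Third, let $F \subseteq T$ be the marked set, let $\mathcal{P}$ be the partition of $V$ into connected components of $T \setminus F$, and let $C = \delta(\mathcal{P})$ be the resulting multiway cut in $G$. The estimator output is $p^{|C|}/q(C)$, where $q(C)$ is the probability that the procedure returns this particular $C$. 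In the \emph{idealized} scenario where every tree contains exactly $|C_i|/\pi_i$ level-$i$ edges of each cut $C$, the per-level marking factor $(p^{\pi_i})^{|C_i|/\pi_i} = p^{|C_i|}$ telescopes to give $q(C) \propto p^{|C|}$, so $X = p^{|C|}/q(C)$ is a direct importance-sampler for $\zgp$ whose second moment is essentially $\zgp^2$.

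Bias control reduces to quantifying the gap between the approximate packing and the idealized count $|C_i|/\pi_i$; Thorup's framework gives each ratio correct to within $(1 \pm n^{-\Omega(1)})$, and this error propagates multiplicatively through the telescoped product to yield the claimed $n^{-\Omega(1)}$ relative bias. \emph{The main obstacle is the variance bound}: here I would expand $\E[X^2]$ as a double sum over pairs of cuts $(C_1, C_2)$ contributing via the event that both are realized as sampled partitions, and control each pair by combining (i) the edge-disjointness property of the ideal packing, which limits the number of trees in which large portions of two fixed cuts simultaneously appear, with (ii) a Karger-Stein-style counting bound on the number of near-minimum multiway cuts at each level. The hypothesis $\ugp < n^{-1-\Omega(1/\log\log n)}$ ensures the global min cut-ratio $\pi$ is large enough that the damping factor $p^{\pi(e)}$ compresses these cross terms down to $O(\E[X]^2)$. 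Finally, the running time is dominated by the tree packing computation ($m^{1+o(1)}$), one tree sample ($\tO(n)$), and one multiway-cut-size computation ($\tO(m)$), for a total of $m^{1+o(1)}$.
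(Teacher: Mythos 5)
Your high-level architecture (importance sampling a cut via an approximate ideal tree packing, comparing to $\zgp$ and then to $\ugp$) matches the paper, but several of the load-bearing steps don't hold up as stated, and the variance accounting in particular conceals a quantitative error.

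\textbf{Variance and running time don't reconcile.} You claim a \emph{single} importance sample $X = p^{|C|}/q(C)$ has $\mathrm{Var}[X] \le \E[X]^2$, and you charge the algorithm with one tree sample and one cut-value query ($\tO(m)$). That combination cannot be right. A one-shot likelihood-ratio estimator over a polynomially-sized cut family will generically have relative variance polynomial in $n$, not $O(1)$; the paper's sampler achieves per-sample relative variance only $n^{1+o(1)}$ (via the pointwise likelihood-ratio bound $\rho(C)=\frac{p^{|C|}}{q(C)\,z'}\le n^{1+o(1)}$, \Cref{lem:relative-variance} and \Cref{lem:reliable-main}), and then averages $n^{1+o(1)}$ independent samples. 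To make the total time $m^{1+o(1)}$ one cannot afford $\tO(m)$ per sample: the paper builds a data structure (\Cref{lem:data-structure}) so that, after $m^{1+o(1)}$ preprocessing, each sampled cut's value and its probability $q(C)$ are computable in $\tO(1)$ amortized time. Your proposal has no analogue of this, so even if the variance claim were fixed the running time would be $m \cdot n^{1+o(1)}$.

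\textbf{The Bernoulli sampler is not enough.} You mark each tree edge independently with probability $p^{\pi(e)}$ and take the induced multiway cut. This has no cap on the number of marked edges, so the sample space over which $q(\cdot)$ distributes mass is enormous, and the "telescoping" heuristic $q(C)\propto p^{|C|}$ does not survive the normalization. The paper instead samples a \emph{bounded} number of tree edges per layer ($j_1,\dots,j_L$ with $\sum_i j_i \le 8\log n$, uniform within that constraint), then argues (\Cref{lem:ignore-cuts-log-n}) that discarding cuts with $\ell^{\mathcal T}(C)>2\log n$ loses only an $O(1/n)$ fraction of $z_G(p)$. This truncation is what keeps the combinatorial factor $(8\log n)^{O(1/\delta)}$ under control and makes the likelihood-ratio bound provable. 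Similarly, the paper bounds variance through the worst-case per-cut ratio (\Cref{lem:relative-variance}) rather than a double sum over cut pairs; your Karger–Stein-style pair expansion is the method of Section 4 (recursive contraction), and it is not obvious that it gives the needed bound here.

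\textbf{Bias improvement is nontrivial.} You assert the approximate tree packing yields per-level ratios correct to $(1 \pm n^{-\Omega(1)})$. The greedy packing only gives a $(1\pm\delta)$ multiplicative load approximation with $\delta = \Theta(1/\log\log n)$ (\Cref{cor:greedy-packing}), and the resulting error inflates the likelihood ratio by $n^{O(\delta)}=n^{o(1)}$, not $1+n^{-\Omega(1)}$. The path to $n^{-\Omega(1)}$ relative bias for $\ugp$ is the bootstrapping step in Section~3.4: obtain a 2-approximation $\tilde z$ of $\zgp$, check $\tilde z < \frac1{2n}$, and then invoke Karger's cut-counting theorems (\Cref{lem:karger-x-to-z-detail-bound}, \Cref{lem:karger-partial-z-bound}) to get $\xgp/\zgp \le n^{-\Omega(1)}$. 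You also cannot simply "assume the hypothesis $\ugp < n^{-1-\Omega(1/\log\log n)}$" --- the algorithm doesn't know $\ugp$; the paper replaces this with surrogate criteria (\Cref{def:surrogate-criteria}, \Cref{def:surrogate-criteria-2}) that are checkable and whose failure implies the unreliable case. Finally, you omit the separate very reliable case $p^\lambda < n^{-1000}$ (Section 3.3.1), where the tree-packing sampler alone is insufficient and a Gomory--Hu-tree-based multilevel contraction is used instead.
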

    
\end{itemize}

We have described the base cases, all of which are non-recursive algorithms. The remaining case is when $\ugp \ge n^{-1-O(1/\log\log n)}$ and $p < \theta$. In this case, we run a step of recursive contraction (\Cref{sec:contraction}). As in \cite{CenHLP24}, we interleave recursive contraction calls to a standard sparsification algorithm. However, unlike prior work, we obtain a sharp tradeoff between the rate of progress and the amount of variance in this recursion, by leveraging the parameters given by the ideal tree packing. We obtain the following lemma (the relative second moment of a random variable $X$ is defined as $\EE[X^2]/(\EE[X])^2$):
\begin{restatable}{lemma}{contract}\label{lem:interface-contract}
    Suppose $\ugp \ge n^{-1-O(1/\log\log n)}$ and $p < \theta$. An estimator $X$ for $\ugp$ with relative bias $n^{-\Omega(1)}$ and relative second moment $(\log n)^{O(\log \log n)}$ can be computed in $m^{1+o(1)}$ time.
\end{restatable}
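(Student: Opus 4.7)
The plan is to execute one step of Karger--Stein-style recursive contraction whose parameters are driven by an (approximate) \emph{ideal tree packing}. First, I would compute an approximate ideal tree packing of $G$ in $m^{1+o(1)}$ time; this partitions edges into levels, where edges of level $\pi$ sit in cuts of ratio roughly $\pi$. Given this, I would fix a contraction rate $q \in (p, 1)$ tuned to the level spacing, sample a multigraph $H$ from $G$ by independently contracting each edge with probability $1-q$, and recursively invoke the top-level algorithm on $(H, p/q)$ to obtain $X := \hat u_H(p/q)$. By the standard contraction--failure coupling, $\EE[u_H(p/q)] = \ugp$; the only bias of $X$ comes from the recursive estimate and from the approximation error in the packing, both absorbed into an $n^{-\Omega(1)}$ relative bias.

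For the variance, I would use the fact (following from the coupling and $u\approx z$ in this regime) that
\[
\EE_H\!\left[z_H(p/q)^{2}\right] \;=\; \sum_{C, C'} \frac{p^{|C| + |C'|}}{q^{|C \cap C'|}},
\]
so the relative second moment of the $H$-sampling step is essentially $\bigl(\sum_{C,C'} p^{|C|+|C'|}/q^{|C\cap C'|}\bigr) / z_G(p)^2$. I would stratify the pairs $(C,C')$ by the first level in the tree packing at which they ``split,'' and use the cut-counting bound implicit in the tree packing --- that the number of $\alpha$-approximate cuts is $n^{O(\alpha)}$ --- to control each level's contribution by a $\poly\log n$ factor. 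Because $\ugp \ge n^{-1-O(1/\log\log n)}$ and $p<\theta$, the phase-transition analysis of \Cref{lem:z-approx-u} limits the number of recursion levels before one of the base cases fires to $O(\log\log n)$. Multiplying the per-level factors then yields a relative second moment of $(\log n)^{O(\log\log n)}$.

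To control the running time, I would interleave each contraction step with a near-linear-time cut sparsifier (as in \cite{CenHLP24}) so that every recursive call operates on a graph with $\tilde O(|V(H)|)$ edges. The per-level cost --- ideal tree packing, sparsification, and sampling $H$ --- is $m^{1+o(1)}$, and because each recursion step strictly shrinks $|V(H)|$ at a rate guaranteed by the tree-packing-based choice of $q$, the total running time telescopes to $m^{1+o(1)}$.

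The main obstacle is that in near-linear time we can compute only an \emph{approximate} ideal tree packing, so the levels $\pi(e)$ are correct only up to a $(1\pm o(1))$ factor, and edges sharing the same true level are not exchangeable across trees in the packing. This breaks the clean stratification sketched above. To handle it, I would show that the approximation error can be absorbed into (i) an $m^{o(1)}$ overhead per level in the second-moment bound --- still within the $(\log n)^{O(\log\log n)}$ budget --- and (ii) an $n^{-\Omega(1)}$ top-level bias. The sharp tradeoff between shrinkage rate and per-level variance that \cite{CenHLP24} could not achieve is enabled exactly by tuning $q$ to the level spacing from the approximate packing.
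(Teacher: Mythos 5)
Your framework --- random contraction with $q$ tuned by an approximate ideal tree packing, interleaved with sparsification --- matches the paper's, and the identity
\[
\EE_H\!\left[z_H(p/q)^{2}\right] \;=\; \sum_{C, C'} \frac{p^{|C| + |C'|}}{q^{|C \cap C'|}}
\]
is indeed the starting point of the paper's variance analysis. But the variance argument has a genuine gap that would make the claimed $(\log n)^{O(\log\log n)}$ bound fail.

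First, the claim that ``the phase-transition analysis limits the number of recursion levels to $O(\log\log n)$'' is wrong. Each contraction step roughly halves the vertex count, so the recursion depth to reach the $n\le\eps^{-O(1)}$ base case is $\Theta(\log n)$, and there is no guarantee the other base cases ($p\ge\theta$, or the reliable criterion of \Cref{def:surrogate-criteria}) fire any sooner. With depth $\Theta(\log n)$ and a per-step relative second moment of $\poly\log n$, as you propose, the overall second moment would be $(\poly\log n)^{\Theta(\log n)} = n^{\Theta(\log\log n)}$, far exceeding the target. What actually makes the bound work is that the choice of $\gamma$ (the largest threshold so that contracting edges with $\ell^*(e)\le 1/\gamma$ leaves $\le \delta n$ nodes, together with \Cref{lem:ugp-ktau} and \Cref{fact:u-to-p-gamma}) forces the per-step relative second moment of $z_H(p/q)$ down to $2 + O\!\left(\frac{\log\log n}{\log n}\right)$ (\Cref{lem:relvar-1step}), not $\poly\log n$. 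The constant $2$ is essential: it is exactly cancelled by the Karger--Stein style two-branch averaging (you sample $H_1,H_2\sim G(q)$ independently and return $\tfrac{X_1+X_2}{2}$), leaving only the $1+O(\log\log n/\log n)$ error to compound over $\Theta(\log n)$ levels, which the paper controls by an induction on the potential $(\log n)^{K\log\log n}$ combined with the decrease $\log n_i \le \log n - 0.1$ (\Cref{lem:contract-size-dec,lem:contract-var}). Your writeup has only a single recursive call $X := \hat u_H(p/q)$; without the two-branch average even a per-step factor of exactly $2$ compounds to $2^{\Theta(\log n)}=n^{\Theta(1)}$.

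Second, the stratification ``by the first level at which a pair $(C,C')$ splits'' plus a per-level $n^{O(\alpha)}$ cut-counting bound is not what controls the cross terms. The paper instead splits the double sum into $C_i=C_j$, the pairs with $|C_i\cap C_j|\le\tilde\gamma$, and those with $|C_i\cap C_j|>\tilde\gamma$; the heavy-overlap pairs are handled by \Cref{lem:z-approx-u}'s bound $x_G(p)\le z_G(p)/\log n$ (not by cut counting), and the diagonal uses a calculus estimate together with $\ugp\ge p^{(1+\delta)\gamma}$ from \Cref{fact:u-to-p-gamma}. Without the precise choice of $\gamma$ to make $q^{-\tilde\gamma} = 2+o(1)$, and without invoking the phase-transition bound on $x_G(p)/z_G(p)$, the proposed per-level $\poly\log n$ bound has no derivation and, as noted above, would be too weak even if it held.
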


We now show that \Cref{thm:main} follows from these lemmas:
\begin{proof}[Proof of \Cref{thm:main}]
The base cases are immediate from above. 
So, we focus on the recursive contraction case.
By setting a sufficiently large constant in the exponent of the first base case, we can ensure in the recursive case that the $n^{-\Omega(1)}$ upper bound of relative bias is less than $\frac{\eps}{2}$.
Let $X$ be the estimator output by the recursive contraction algorithm.
From \Cref{lem:interface-contract}, we have that the relative variance $\eta[X]\le n^{o(1)}$. 
By standard techniques (see \Cref{lem:mc-sample}), we can run the algorithm $n^{o(1)}\eps^{-2}$ times to get a $(1\pm\frac{\eps}{2})$-approximation of $\E[X]$ whp.
Because $\E[X]$ is a $(1\pm n^{-\Omega(1)})$-approximation of $\ugp$, the aggregated estimator is a $(1\pm \eps)$-approximation of $\ugp$.
Each run takes $m^{1+o(1)}\eps^{-O(1)}$ time, so the overall running time is also $m^{1+o(1)}\eps^{-O(1)}$.
\end{proof}

Finally, we describe how the algorithm decides which case it is in at any node of the computation tree. The first base case can be identified based on the size of the graph. To identify the second base case, we use an indirect method as in \cite{CenHLP24} since we do not know the value of $\theta$. We first run the Monte Carlo algorithm and calculate the estimator of $\ugp$ given by this algorithm (which is the empirical probability of disconnection). If the value of this estimator is at least $n^{-o(1)}$, then we can conclude that the estimator $(1\pm\eps)$-approximates $\ugp$ whp. If the estimator returns a smaller value, then we are in the case $p < \theta$. Next, we need to determine if we are in the third base case. This is more complicated since we do not know the value of $\ugp$, and cannot estimate it by Monte Carlo simulation since we are in a reliable case.
Instead of detecting the condition on $\ugp$ directly, we (approximately) detect the consequence of this condition using two surrogate conditions. The first surrogate condition given in \Cref{def:surrogate-criteria} is based on a lower bound of $\zgp$ (\Cref{lem:ugp-ktau}). 
It either decides we are not in the third case, or it allows us to obtain a tight estimator $\tilde{z}$ of $\zgp$ with relative bias $O(\nf 1n)$. 
Since we have $p<\theta$, $\tilde{z}$ is already an estimator of $\ugp$ with relative bias $O(\nf {1}{\log n})$, by \Cref{lem:z-approx-u}. Still, we need a second condition to achieve a better relative bias of $n^{-\Omega(1)}$.
The second surrogate condition we use is $\tilde{z} < \frac{1}{2n}$, which is a close approximation of the original condition $\ugp < n^{-1-\Omega(1/\log\log n)}$. Combined together, the two conditions either decide that we are not in the third base case, or imply that the conclusions of \Cref{lem:interface-contract} hold.


\medskip\noindent{\bf Roadmap.}
We set up basic notation and discuss some preliminary facts in \Cref{sec:prelim}.
In \Cref{sec:packing}, we give the importance sampling algorithm using an ideal tree packing to handle the reliable case (the third base case) --  this is the main contribution of this paper.
In \Cref{sec:contraction}, we give the recursive contraction algorithm for the unreliable case, show the sharp tradeoff between progress and variance in the recursion, and combine it with all the base cases.

\section{Preliminaries}
\label{sec:prelim}
In the network unreliability problem, we are given an undirected multigraph $G$, and an edge failure probability $p\in(0,1)$. The goal is to $(1\pm \eps)$-approximate the unreliability $u_G(p)$, which is the probability that $G$ disconnects when each edge is independently deleted with probability $p$. 

Throughout the paper, we use $\lambda$ to denote the value of a minimum cut of the graph. Note that $\ugp = \Pr[\cup_{\text{cuts } C}\, C \text{ disconnects}]$, and the probability that a cut $C$ disconnects is given by $p^{|C|}$.




\eat{
\noindent{\bf Linear Approximation.}
Let $\zgp$ be the expected number of failed cuts, $\zgp = \sum_{\text{cuts } C} p^{|C|}$, which is an upper bound on $\ugp$. In the reliable case, it is very unlikely that more than one cut fails; therefore, $\zgp$ is a tight upper bound of $\ugp$. To quantify this, define $\xgp$ as the expected number of failed cut pairs, i.e., $\xgp = \sum_{C_i \ne C_j} p^{|C_i\cup C_j|}$. By the inclusion-exclusion principle, $\ugp \ge \zgp - \xgp$. The following property, that holds when $p < \theta$ for some threshold $\theta$, allows us to estimate $\zgp$ instead of $\ugp$: 
\begin{lemma}[Phase transition, Lemma 2.1 of \cite{CenHLP24}]\label{lem:z-approx-u}
There exists a threshold $\theta$ such that $u_G(\theta) = n^{-\Omega(1/\log \log n)}$ satisfying the following property: When $p<\theta$, we have $\frac{x_G(p)}{z_G(p)} \le \frac{1}{\log n}$, and therefore, 
\[
\left(1-\frac{1}{\log n}\right) z_G(p) \le \ugp \le z_G(p).
\]
\end{lemma}
}

\eat{
\begin{lemma}[Lemma 4.6 of \cite{CenHLP24}]\label{lem:u-from-z}
Assume $p<\theta$.
We can estimate $\ugp$ by first sample $H\sim G(q)$, then estimate $z_H(q)$.
If $z_{H}(q)$ is an unbiased estimator of $\zgp$ with relative variance $\eta$, and $u_{H}(q)$ is an unbiased estimator of $\ugp$, then $u_{H}(q)$ has relative variance at most $\left(1+O\left(\frac{1}{\log n}\right)\right)\eta+O\left(\frac{1}{\log n}\right)$.
\end{lemma}
}

\eat{\noindent{\bf Random Contraction.}
%
%
We use $G(p)$ to denote the random graph formed by deleting each edge in $G$ with probability $p$.
We state a bound on the expected number of uncontracted edges after random edge contractions.
\begin{lemma}[Lemma 2.1 of \cite{KargerKT95}]\label{lem:kkt-contraction-size-bound}
Given an undirected multigraph, if we contract each edge independently with probability $\pi$, then the expected number of uncontracted edges is at most $n/\pi$. 
\end{lemma}
}

\noindent {\bf Relative Variance and Bias.} We give estimators for $\ugp$ (and $\zgp$) in this paper. Here,  we give (standard) definitions of relative variance and bias of an estimator:
\begin{defn}
The {\em relative variance} of a random variable $X$, denoted $\relv[X]$, is defined as the ratio of its variance and its squared expectation, i.e., $\relv[X] = \frac{\var[X]}{\ex^2[X]} = \frac{\ex[X^2]}{\ex^2[X]}-1$. 
We also define {\em relative second moment} of $X$ as $\frac{\ex[X^2]}{\ex^2[X]} = \relv[X]+1$.
\end{defn}

Similar to variance, relative variance can be decreased by taking multiple independent samples.

\begin{fact}[Lemma I.4 of \cite{Karger17}]\label{fact:rel-var-decrease}
The relative variance of the average of $N$ independent samples of $X$ is $\frac{\eta[X]}{N}$.
\end{fact}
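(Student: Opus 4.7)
The plan is a direct computation from the definitions of expectation, variance, and relative variance. Let $X_1,\ldots,X_N$ be i.i.d.\ copies of $X$ and set $\bar X = \frac{1}{N}\sum_{i=1}^N X_i$ to be their empirical average, which is the quantity whose relative variance we want to bound.

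For the denominator of $\relv[\bar X]$, linearity of expectation gives $\E[\bar X] = \E[X]$, so $\E[\bar X]^2 = \E[X]^2$ and the denominator is unchanged by averaging. For the numerator, I would use that independence makes variances add, together with the scaling identity $\var[cY] = c^2\var[Y]$ applied with $c = 1/N$, to obtain $\var[\bar X] = \frac{1}{N^2}\sum_{i=1}^N \var[X_i] = \var[X]/N$.

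Dividing, $\relv[\bar X] = \var[\bar X]/\E[\bar X]^2 = \var[X]/(N\,\E[X]^2) = \relv[X]/N$, which is the claim. There is no genuine obstacle here: the fact is a textbook consequence of variance being additive under independence and homogeneous of degree two under scalar multiplication, and the proof is essentially two lines once the definitions of $\bar X$ and $\relv$ are unpacked. The only thing worth emphasizing is that the factor of $1/N$ (rather than $1/\sqrt{N}$) arises because relative variance normalizes by $\E[\bar X]^2$, which does not shrink with $N$, so all the decay comes from $\var[\bar X]$.
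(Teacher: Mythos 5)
Your proof is correct and is the standard two-line argument. The paper itself gives no proof of this fact, citing it directly from Karger's Lemma I.4, so there is nothing to compare against in the text; your derivation via linearity of expectation and additivity of variance under independence is exactly the canonical argument behind the cited result.
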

This leads to the following property:
\begin{lemma}[Lemma I.2 of \cite{Karger17}]\label{lem:mc-sample}
Fix any $\eps, \delta \in (0, 1)$. For a random variable $X$ with relative variance $\relv[X]$,  the median of $O\left(\log\frac{1}{\delta}\right)$ averages of $O\left(\frac{\relv[X]}{\eps^2}\right)$ independent samples of $X$ is a $(1\pm \eps)$-approximate estimate of $\E[X]$ with probability $1-\delta$.
\end{lemma}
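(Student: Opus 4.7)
The plan is the standard median-of-means boosting argument. I would proceed in two stages: first show that a single average of sufficiently many samples gives a $(1\pm\eps)$-approximation with constant probability via Chebyshev's inequality, then boost the success probability to $1-\delta$ by taking the median of independent trials via a Chernoff bound.

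For the first stage, fix a constant $c$ (say $c=8$) and let $N = \lceil c\,\relv[X]/\eps^2 \rceil$. Let $Y$ denote the average of $N$ independent samples of $X$. Then $\ex[Y] = \ex[X]$, and by \Cref{fact:rel-var-decrease}, the relative variance satisfies $\relv[Y] = \relv[X]/N \le \eps^2/c$. Chebyshev's inequality, written in the normalized form $\Pr[|Y - \ex[Y]| > \eps\,\ex[Y]] \le \var[Y]/(\eps\,\ex[Y])^2 = \relv[Y]/\eps^2$, then gives
\[
\Pr\bigl[\,|Y - \ex[X]| > \eps\,\ex[X]\,\bigr] \;\le\; \frac{1}{c} \;\le\; \frac{1}{8}.
\]
So a single $N$-sample average lies in $(1\pm\eps)\ex[X]$ with probability at least $7/8$.

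For the second stage, let $k = \lceil C\log(1/\delta)\rceil$ for a sufficiently large constant $C$, and let $Y_1,\dots,Y_k$ be independent copies of the average above, with $M$ their median. Define indicators $Z_i = \mathbf{1}[|Y_i - \ex[X]| > \eps\,\ex[X]]$, so $\ex[Z_i] \le 1/8$ and the $Z_i$ are independent. The median $M$ fails to lie in $(1\pm\eps)\ex[X]$ only if $\sum_i Z_i \ge k/2$, i.e., at least half of the trials are bad. Since the expected number of bad trials is at most $k/8$, a standard Chernoff bound gives
\[
\Pr\!\left[\sum_{i=1}^{k} Z_i \ge \frac{k}{2}\right] \;\le\; \exp(-\Omega(k)) \;\le\; \delta,
\]
for $C$ chosen large enough. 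Therefore $M \in (1\pm\eps)\ex[X]$ with probability at least $1-\delta$, and the total number of samples used is $kN = O\!\left(\frac{\relv[X]}{\eps^2}\log\frac{1}{\delta}\right)$, as claimed.

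I do not anticipate any real obstacle; the only thing to be careful about is the normalization when applying Chebyshev (using relative rather than absolute variance), which is exactly what makes the bound scale-free in $\ex[X]$ and hence allows a multiplicative $(1\pm\eps)$ guarantee rather than an additive one.
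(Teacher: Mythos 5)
Your proof is correct and is the standard median-of-means argument (Chebyshev for a constant-probability single-average guarantee, then a Chernoff bound on the median of independent repetitions). The paper does not reprove this statement but simply cites Lemma I.2 of Karger (2017), whose underlying proof is exactly the argument you give.
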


Next, we define bias and relative bias of an estimator:
\begin{defn}
    Let $X$ be an estimator of $u$. Then, the bias of $X$ is $|\E[X]-u|$, and assuming $u > 0$, its relative bias is $\frac{|\E[X]-u|}{u}$.
\end{defn}

\smallskip\noindent{\bf A Coarse Bound on $\ugp$.} We will use the following known bound:
    \begin{fact}[Corollary II.2 of \cite{Karger16}]\label{fact:ugp-basic-range}
        For any graph on $n$ vertices with mincut value $\lambda$ and any $p\in (0, 1)$, we have $p^\lambda \le \ugp \le n^2 p^\lambda$.
    \end{fact}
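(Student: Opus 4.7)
The plan is to prove the two inequalities separately. The lower bound $p^\lambda \le \ugp$ is essentially immediate: fix any minimum cut $C^\star$, which consists of exactly $\lambda$ edges. The event that every edge of $C^\star$ fails is contained in the disconnection event, and since edges fail independently with probability $p$, the former has probability $p^\lambda$. Thus $\ugp \ge p^\lambda$.

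For the upper bound $\ugp \le n^2 p^\lambda$, I would first dispose of the ``easy'' regime: if $p \ge n^{-2/\lambda}$, then $n^2 p^\lambda \ge 1 \ge \ugp$ and there is nothing to prove. In the complementary regime $p < n^{-2/\lambda}$, I would apply a union bound over all cuts and group them by size:
\[
\ugp \;\le\; \sum_{C} p^{|C|} \;=\; \sum_{k \ge \lambda} N_k\, p^k,
\]
where $N_k$ denotes the number of cuts of value exactly $k$. Invoking Karger's cut-counting theorem -- the number of cuts of value at most $\alpha\lambda$ is at most $n^{2\alpha}$ -- I would bound $N_k \le n^{2k/\lambda}$. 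Setting $\beta := n^{2/\lambda} p$, the tail then becomes a geometric series
\[
\sum_{k \ge \lambda} \beta^k \;=\; \frac{\beta^\lambda}{1-\beta} \;=\; \frac{n^2 p^\lambda}{1 - n^{2/\lambda} p},
\]
which, in the regime where $\beta$ is bounded away from $1$, is $O(n^2 p^\lambda)$.

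The only real obstacle is pinning down the constant so that the bound is exactly $n^2 p^\lambda$ rather than $O(n^2 p^\lambda)$: the geometric series introduces a $1/(1-\beta)$ factor that blows up as $\beta \uparrow 1$. To close this gap I would use the sharper form of Karger's bound, $N_k \le \binom{n}{\lfloor 2k/\lambda\rfloor}$, whose factorial denominator decays quickly enough in $k$ to trade against $p^k$ and absorb the constant into the leading $n^2$ factor (also using $\binom{n}{2} \le n^2/2$ at $k=\lambda$). Alternatively, since this is quoted as a fact from Karger~\cite{Karger16}, one may simply cite the identical calculation there. Either way, the argument is standard and the only genuinely new content is the careful bookkeeping of constants between the two regimes.
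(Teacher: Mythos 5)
The paper takes this fact entirely on citation (Corollary II.2 of \cite{Karger16}) and supplies no proof of its own, so there is no in-paper argument to compare against; your proposal must stand or fall on its own.

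Your lower bound is correct and is the standard one-liner. Your upper bound correctly identifies the two regimes, the union bound, and the cut-counting theorem as the ingredients, and this is indeed the right framework; but I want to push back on the closing claim that what remains is merely ``careful bookkeeping of constants.'' The geometric bound you write, $\sum_{k\ge\lambda} n^{2k/\lambda}p^k = n^2 p^\lambda / (1-n^{2/\lambda}p)$, overshoots $n^2 p^\lambda$ by the factor $1/(1-n^{2/\lambda}p)$, and this factor is unbounded precisely at the boundary $n^2 p^\lambda \uparrow 1$ between your two regimes --- so the regime split does not by itself tame it. Switching to the binomial form $\binom{n}{\cdot}$ is the right instinct, but two nontrivial points remain: (i) the cut-counting theorem controls the \emph{cumulative} count of cuts of value $\le\alpha\lambda$, while your sum runs over cut sizes $k$ in single-edge increments, so you need an Abel-summation step, and the bookkeeping there reintroduces a $1/(1-p)$-type factor that must be cancelled; and (ii) the exact form of the Karger--Stein bound for non-integer $2\alpha$ is delicate --- writing $\binom{n}{\lfloor 2k/\lambda\rfloor}$ versus $\binom{n}{\lceil 2k/\lambda\rceil}$ shifts the exponent of $p$ by as much as $\lambda/2$, which is the difference between getting $\sum_C p^{|C|} \le (e-2)n^2 p^\lambda$ (which works) and a spurious $n^2 p^{\lambda/2}$ (which does not). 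So the claimed constant does not follow automatically from the ingredients you list; a correct write-up must commit to a precise form of the counting theorem and carry the summation through. Given that you also offer the alternative of citing \cite{Karger16} directly --- which is exactly what the paper does --- I'd regard your proposal as a reasonable sketch with the right ideas, but with a genuine unfinished step at the point you yourself flag.
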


\smallskip\noindent{\bf $k$-Strong Components.} 
In the reliable case, it is unlikely that a well-connected part of the graph disconnects; therefore, these parts can be contracted before estimating $\ugp$. To formalize this, recall that a $k$-strong component is a maximal $k$-edge-connected induced subgraph. The strength $k_e$ of an edge $e$ is the maximum value $k$ such that a $k$-strong component contains $e$.
\eat{
Recall the following properties~\cite{BenczurK15}:
\begin{enumerate}
    \item Suppose the strength of $e_1$ is larger than $e_2$. Then, deleting $e_2$ does not change the strength of $e_1$, and contracting $e_1$ does not change the strength of $e_2$.
    \item The $k$-strong components of a graph, for all $k$, form a laminar family.
    \item $\sum_{e\in E(G)} w_e/k_e \le n_G-1$.
\end{enumerate}
}
The next lemma shows that $3\lambda$-strong components can be contracted before estimating $\ugp$:
\begin{lemma}\label{lem:contract-strong-oneshot}
    Suppose $p^\lambda \le \eps n^{-1}$.
    Then, we can contract all $3\lambda$-strong components to form $H$, such that $u_H(p)$ $(1\pm\eps)$-approximates $u_G(p)$.
\end{lemma}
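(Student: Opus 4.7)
The plan is to couple $G(p)$ with $H(p)$ and bound the gap between $\ugp$ and $u_H(p)$ by the probability that some $3\lambda$-strong component disconnects internally.

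First I would set up a natural coupling: sample $H(p)$ by independently failing each edge of $H$ (which corresponds to an inter-component edge of $G$) with probability $p$, then extend to a sample of $G(p)$ by independently failing each intra-component edge with probability $p$. Under this coupling, whenever $H(p)$ is disconnected the extended $G(p)$ is also disconnected, because reintroducing the contracted edges---even with some of them failed---cannot merge distinct super-vertices. This immediately gives $u_H(p) \le \ugp$.

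For the matching lower bound, let $A$ be the event that $G(p)$ disconnects and $B$ the event that $H(p)$ disconnects, so $B \subseteq A$. On the bad event $A \setminus B$, some $3\lambda$-strong component must become internally disconnected: if every strong component $S$ stays internally connected, then each $S$ lies inside a single connected component of $G(p)$, so the connectivity of $G(p)$ coincides with that of the quotient $H(p)$, contradicting $A \setminus B$. A union bound then gives $\Pr[A \setminus B] \le \sum_S u_S(p)$, summing over the $3\lambda$-strong components $S$.

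Finally, each such $S$ has its own minimum cut at least $3\lambda$, so \Cref{fact:ugp-basic-range} applied to the induced subgraph on $S$ yields $u_S(p) \le n_S^2 p^{3\lambda}$, where $n_S = |V(S)|$. Summing over the partition gives $\sum_S u_S(p) \le n^2 p^{3\lambda}$. Combining with the trivial bound $\ugp \ge p^\lambda$ and the hypothesis $p^\lambda \le \eps/n$, I get
\[
\frac{\ugp - u_H(p)}{\ugp} \;\le\; \frac{n^2 p^{3\lambda}}{p^\lambda} \;=\; (n p^\lambda)^2 \;\le\; \eps^2 \;\le\; \eps,
\]
which together with $u_H(p) \le \ugp$ yields the claimed $(1\pm\eps)$-approximation. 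The main subtle step is the ``$A \setminus B$ forces internal disconnection'' argument; the rest is direct arithmetic from \Cref{fact:ugp-basic-range} and the hypothesis.
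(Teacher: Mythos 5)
Your proof is correct and follows essentially the same approach as the paper: both bound $u_G(p) - u_H(p)$ by $\sum_S u_{G[S]}(p)$ via the observation that $G(p)$ disconnects only if $H(p)$ disconnects or some $3\lambda$-strong component disconnects internally, then apply \Cref{fact:ugp-basic-range} and the hypothesis to get $n^2 p^{2\lambda} \le \eps^2$. The paper phrases the event inclusion directly as $D[G] \implies D[H] \lor \bigvee_i D[G[S_i]]$ rather than in coupling language, but the argument is identical.
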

\begin{proof}
    Let the $3\lambda$-strong components be $S_1, S_2, \ldots, S_r$. Define $H$ by contracting the sets $S_1, S_2, \ldots, S_r$ in $G$. Let $D[G]$ be the event that $G$ disconnects when each edge fails independently with probability $p$. 
    Clearly, $D[G]$ implies $D[H]\lor \bigvee_i D[G[S_i]]$, where $G[S_i]$ is the induced subgraph on $S_i$. 
    By the union bound,
    \[u_G(p) \le u_H(p) + \sum_i u_{G[S_i]}(p).\]
    We also have $D[H]$ implies $D[G]$, so $u_G(p) \ge u_H(p)$. It remains to prove that $\sum_i u_{G[S_i]}(p) \le \eps \cdot u_G(p)$.

    Let $n_i, \lambda_i$ respectively denote the number of vertices and mincut value of $G[S_i]$. We have $$\frac{\sum_i u_{G[S_i]}(p)}{u_G(p)} \le \frac{\sum_i n_i^2 p^{\lambda_i}}{p^\lambda} \le  p^{3\lambda-\lambda} \sum_i n_i^2 \le n^2 p^{2\lambda} \le \eps^2 < \eps,$$
    where the first inequality uses \Cref{fact:ugp-basic-range}.
\end{proof}

\section{The Reliable Case: Importance Sampling on Ideal Tree Packing}
\label{sec:packing}

This section solves the reliable case (the third base case from \Cref{sec:organization}), whose main lemma is restated below.
\importance*

We will first design an importance sampling algorithm for a slightly different goal, which is an estimator for $\zgp$ instead of $\ugp$.
\begin{lemma}\label{lem:interface-importance-zgp}
    If $\ugp < n^{-1-\Omega(1/\log\log n)}$, an estimator for $\zgp$ with relative bias $O(1/n)$ and relative variance $\le 1$ can be computed in $m^{1+o(1)}$ time.
\end{lemma}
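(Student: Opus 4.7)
The plan is to construct an importance-sampling estimator for $\zgp$ via Thorup's ideal tree packing. Since $\ugp < n^{-1-\Omega(1/\log\log n)}$, \Cref{fact:ugp-basic-range} gives $p^\lambda$ polynomially small: the regime is ``reliable'' ($p < \theta$), the dominant contribution to $\zgp$ comes from near-minimum cuts, and \Cref{lem:contract-strong-oneshot} lets us contract $3\lambda$-strong components losslessly as a preliminary step. I would then compute an approximate ideal tree packing in $m^{1+o(1)}$ time: a sampleable distribution $\mu$ on spanning trees and edge levels $\pi(e)$ satisfying $\Pr_{T \sim \mu}[e \in T] = (1 \pm o(1))/\pi(e)$, inheriting Thorup's recursive min-ratio structure.

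The estimator follows the blueprint outlined in \Cref{sec:organization}: sample $T \sim \mu$, then independently include each tree edge $e$ in a set $S$ with probability $p^{\pi(e)}$. The pair $(T, S)$ determines a cut $C$ of $G$ with $C \cap T = S$. The naive importance weight $X = p^{|C|}/\mu_{\mathrm{samp}}(C)$ (where $\mu_{\mathrm{samp}}(C)$ is the probability of producing $C$) gives $\E[X] = \zgp$ by a one-line calculation, but has relative variance blowing up like $1/\zgp$ in the reliable regime, which is way too large. The key modification is to have each trial return a sum over many cuts compatible with $(T, S)$ --- concretely, all $k$-respecting cuts of $T$ (for small $k$), each weighted by $1/\Pr_{T'}[C\text{ is }k\text{-respecting }T']$. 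Under the packing, this probability is $1 - o(1)$ for every cut that contributes non-negligibly to $\zgp$, so each trial returns a quantity close to $\zgp$ itself, pinning the relative variance to $\le 1$.

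The bias analysis combines two sources of error: the tree distribution is only approximately ideal (edge-inclusion probabilities off by a $(1 \pm o(1))$ factor), and for any specific $C$ the count $|C_i \cap T|$ deviates from its idealized value $|C_i|/\pi_i$. Summed over cuts of size at most $(1 + o(1))\lambda$ (the rest contribute negligibly to $\zgp$ in the reliable regime), these errors yield relative bias $O(\nf 1n)$. The main obstacle, flagged already in \Cref{sec:organization}, is the variance bound; the crucial ingredient is Thorup's recursive min-ratio structure, which spreads cut edges near-uniformly across the packing and keeps both the per-cut sampling probability and the re-weighting factor close to their ideal values. The running time is $m^{1+o(1)}$ per sample: build the packing once in $m^{1+o(1)}$, sampling $T$ and $S$ costs $\tO(m)$, and enumerating or summing over the relevant $k$-respecting cuts of $T$ with Karger-style tree-cut data structures also fits in $\tO(m)$.
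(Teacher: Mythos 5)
Your opening moves --- contract $3\lambda$-strong components, build an approximate ideal tree packing, importance-sample cuts through tree edges --- match the paper's, but your ``key modification'' breaks on both counts. Summing, for each sampled $T$, over all $k$-respecting cuts of $T$ weighted by the inverse inclusion probability is not computable in $\tilde O(m)$ time for any useful $k$: by \Cref{fact:pi-range} the duality gap pushes $\ell^*(C)$ up to $2-o(1)$ even for minimum cuts, so $\E_T[|C\cap T|]\approx2$, meaning $k=1$ misses a constant fraction of near-minimum cuts while $k\ge2$ already forces $\Theta(n^2)$-size enumeration (the paper explicitly flags this $\Omega(n^2)$ sampling-space obstruction at the start of \Cref{sec:packing}). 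To make $\Pr_T[|C\cap T|\le k]=1-o(1)$ you would need $k=\omega(1)$, at which point the enumeration is $n^{\Omega(k)}$. There is no choice of $k$ for which your procedure both hits the relevant cuts and runs in the stated time, so the claim that each trial returns a quantity close to $\zgp$ and therefore has relative variance $\le1$ is unsupported.

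The paper's actual design accepts a much weaker per-trial guarantee: it samples a \emph{single} cut per trial (tree $T$ from a sparsified packing, then per-level counts $j_1,\ldots,j_L$, then the edges themselves), reduces the variance question via \Cref{lem:relative-variance} to bounding the per-cut likelihood ratio $\rho(C)=p^{|C|}/(q(C)z'_G(p))$, and proves $\rho(C)\le n^{1+o(1)}$ in \Cref{lem:reliable-main} --- the heart of the section, leaning on the level parameters $\{k_{\pi_i}\}$ of the ideal packing and on \Cref{lem:tilde-ki-range}. It then averages $n^{1+o(1)}$ samples, each costing $\tilde O(1)$ amortized time via the cut-query structure of \Cref{lem:data-structure}, to bring the relative variance down to $\le1$. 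Your proposal has no analogue of this likelihood-ratio argument and substitutes a procedure it cannot execute. A secondary gap: you truncate at cuts of size $(1+o(1))\lambda$, but in the regime $\ugp<n^{-1-\Omega(1/\log\log n)}$ cuts of size up to $\Theta(\lambda\log n)$ can still contribute non-negligibly to $\zgp$; the paper truncates at $\ell^{\mathcal T}(C)\le2\log n$, proves the residual is $O(1/n)$ via the multi-level counting argument of \Cref{lem:ignore-cuts-log-n}, and handles the very reliable sub-case $p^\lambda<n^{-1000}$ with an entirely different Gomory-Hu-tree sampler (\Cref{sec:very-reliable}), none of which your scheme addresses.
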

We will use \Cref{lem:interface-importance-zgp} to prove \Cref{lem:interface-importance} in \Cref{sec: bias-z-to-u}. The rest of this section is devoted to \Cref{lem:interface-importance-zgp}.

One complexity here is that the algorithm does not know the value of $\ugp$, so it cannot determine if the assumption is true. Instead, we will use an alternative assumption based on tree packings that implies the original assumption $\ugp\le n^{-1-\Omega(1/\log \log n)}$.

At a high level, our algorithm uses importance sampling. Note that we need to preferentially sample the near-minimum cuts since they are more likely to fail. Moreover, we need to sample a cut efficiently, e.g., by only picking $O(1)$ edges from the cut. To attain these two properties, we use a fractional spanning tree packing to guide our sampling process. The advantage is that it distributes a near-minimum cut among $\Omega(\lambda)$ spanning trees, creating an $O(1)$-size projection on average in each of them. If we can sample the edges in the projection from a spanning tree, we can recover the near-minimum cut without having to explicitly sample all edges of the cut. 

One limitation of the above approach is there is a duality gap of 2 between the tree packing value and min-cut value $\lambda$. 
As a consequence, to sample a near min-cut, we need to pick $\ge 2$ tree edges on average, which results in a sampling space of $\Omega(n^2)$. 
Since we can only afford $o(n^2)$ samples, a naive approach yields only $o(1)$-fraction of these near min-cuts; it is not hard to create examples where the resulting estimator has large error because it misses the critical near min-cuts that determine the value of $\ugp$. 

To alleviate this concern, consider contracting well-connected vertex sets in the graph. This preserves near min-cuts; so sampling is valid on the contracted graph. Importantly, every vertex pair that remains has a near min-cut separating it; this gives us a different structural tool for evaluating how the sample hits the near min-cuts. More generally, if we do these contractions at multiple recursive levels that preserve near min-cuts of different values, than the resulting graphs allow us a tighter control over the set of cuts that we hit in our sampling process. Since we need the tree packing to guide cut sampling across multiple recursive levels, we need it to evenly partition the cuts for different levels with their corresponding ranges of connectivity. This is achieved by an {\em ideal tree packing} that we describe next.

\subsection{Ideal Tree Packing}
Let a \emph{fractional tree packing} be a distribution over spanning trees of a graph. The \emph{load} of each edge is the probability that a tree sampled from this distribution contains that edge. A maximum tree packing is a distribution that minimizes the maximum load of any edge. The \emph{min-ratio cut} of a graph is the partition $P$ of the vertices minimizing the ratio $\frac{d(P)}{|P|-1}$, where $d(P)$ is the cut value $P$. In other words, $d(P)=|\partial P|$, where $\partial P$ is the set of edges whose endpoints are separated by $P$.
A well-known theorem of Nash-Williams~\cite{nash1961edge} and Tutte~\cite{Tutte61} states that the maximum load of any edge in a maximum tree packing equals the inverse of the minimum value of the cut ratio defined above.

Thorup's ideal tree packing~\cite{Thorup01,Thorup08} is a fractional tree packing that is not only (globally) a maximum tree packing, but also maximum in a local sense. The loads $\ell^*(e)$ of edges $e$ in an ideal tree packing are defined by the following \emph{recursive min-ratio cut process}. 
\begin{defn}[Recursive min-ratio cut process]
    Let $P$ be the min-ratio cut of the input graph, i.e., the vertex partition minimizing $\frac{d(P)}{|P|-1}$.
    Define $\ell^*(e) := \frac{|P|-1}{d(P)}$ for every edge $e\in\partial P$.
    Recursively apply this process on the induced subgraph of each non-singleton component of $P$ to set the loads on the other edges.
\end{defn}

We call $\ell^*(e)$ the ideal load of edge $e$. The following lemma ensures that this process uniquely produces the loads of a fractional tree packing.
\begin{lemma}[Lemma 3 of \cite{Thorup08}]\label{lem:ideal-loads-def}
The following hold:
\begin{enumerate}    
    \item The ideal loads are unique (although the min-ratio cuts chosen by the recursive min-ratio cut process may not be unique).
    \item $\ell^*$ is in the spanning tree polytope, i.e., it is the load of a fractional tree packing.
\end{enumerate}
\end{lemma}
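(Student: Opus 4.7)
The plan is to prove the two parts separately: for (2), a direct polytope calculation via Edmonds' characterization and induction on the recursion depth; for (1), an uncrossing argument in the partition lattice of $G$ identifying a canonical ``finest'' min-ratio partition. Before starting, I would record one preliminary observation: if $P = \{V_1, \dots, V_r\}$ is min-ratio with ratio $\pi = d(P)/(r-1)$, then each induced subgraph $G[V_i]$ has min-ratio $\ge \pi$. For if some $G[V_i]$ admitted a partition $R$ with ratio $< \pi$, replacing $V_i$ in $P$ by $R$ would yield a partition of $G$ with ratio $<\pi$ (using $\tfrac{a+b}{c+d} < \pi$ whenever $\tfrac{a}{c} = \pi$ and $\tfrac{b}{d} < \pi$), contradicting minimality.

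For Part (2), Edmonds' characterization reduces the claim to verifying (a) $\sum_e \ell^*(e) = n-1$ and (b) $\sum_{e \in E[S]} \ell^*(e) \le |S|-1$ for every $S \subseteq V$. Both follow by induction on the recursion depth. Part (a) decomposes as
\[
\sum_e \ell^*(e) = \tfrac{1}{\pi}\, d(P) + \sum_i \sum_{e \in E[V_i]} \ell^*(e) = (r-1) + \sum_i (|V_i|-1) = n-1,
\]
using the inductive hypothesis on each $G[V_i]$. For (b), write $S_i := S \cap V_i$ and let $k$ be the number of nonempty $S_i$. The inductive hypothesis gives $\sum_{e \in E[S_i]} \ell^*(e) \le \max(|S_i|-1, 0)$, so the claim reduces to showing $|\partial P \cap E[S]| \le \pi(k-1)$. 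Let $T = \{i : S_i \ne \emptyset\}$ and form a coarser partition $P''$ of $V$ by merging $\{V_i\}_{i \in T}$ into one super-part while leaving the other $V_j$'s alone. Then $|P''| = r - k + 1$ and $d(P'') = d(P) - |\partial P \cap E[\bigcup_{i \in T} V_i]|$; plugging into the min-ratio inequality $d(P'') \ge \pi(|P''|-1)$ and using $d(P) = \pi(r-1)$ yields $|\partial P \cap E[\bigcup_T V_i]| \le \pi(k-1)$. Since $S \subseteq \bigcup_T V_i$, this bounds $|\partial P \cap E[S]|$ as needed.

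For Part (1), the main obstacle is that the recursive definition allows the algorithm to pick any min-ratio partition at each step; I must show the final $\ell^*$ is independent of those choices. The strategy is to prove an uncrossing lemma: if $P_1$ and $P_2$ are both min-ratio with ratio $\pi$, then so are their meet $P_1 \wedge P_2$ and join $P_1 \vee P_2$ in the partition lattice. This combines two submodular inequalities: the cut-count submodularity $d(P_1) + d(P_2) \ge d(P_1 \wedge P_2) + d(P_1 \vee P_2)$ (which follows from $\partial(P_1 \wedge P_2) = \partial P_1 \cup \partial P_2$ and $\partial(P_1 \vee P_2) \subseteq \partial P_1 \cap \partial P_2$), and the geometric-lattice submodularity $|P_1 \wedge P_2| + |P_1 \vee P_2| \ge |P_1| + |P_2|$. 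Applying min-ratio to both $P_1 \wedge P_2$ and $P_1 \vee P_2$ and adding forces equality throughout, proving the uncrossing claim. Iterating, the meet $P^*$ of all min-ratio partitions is itself min-ratio, and $\partial P^* = \bigcup_P \partial P$, so $\partial P^*$ is the canonical set of ``top-level'' edges, each of which should receive load $1/\pi$. Finally, for any min-ratio $P$ the algorithm might pick, $P^*$ refines $P$, and within each $V_i \in P$ the restriction $P^*|_{V_i}$ is again min-ratio in $G[V_i]$ with ratio $\pi$ (by a mediant computation: equality in the averaged min-ratio forces equality in each summand via the preliminary observation). Thus the recursion on $G[V_i]$ reaches the same set of level-$1$ edges in $\partial P^* \cap E[V_i]$, and uniqueness for the remaining loads follows by induction on $|V(G)|$.

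The main technical obstacle is the uncrossing step; once the two submodular inequalities above are established, the rest of Part (1) is a clean algebraic chain followed by induction. Part (2) is then essentially a direct polytope check driven by the min-ratio inequality applied to the merged partition $P''$.
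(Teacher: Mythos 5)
The paper cites this lemma from Thorup (2008) without reproducing a proof, so there is no in-paper argument to compare against; I'll evaluate your proposal on its own merits. Your argument is substantively correct. The Part~(2) proof via Edmonds' forest/spanning-tree polytope characterization is clean: the equality $\sum_e \ell^*(e) = n-1$ telescopes by induction, and the sub-partition constraint reduces, after invoking the inductive hypothesis on each $G[V_i]$, to $|\partial P \cap E[S]| \le \pi(k-1)$, which your merged-partition $P''$ argument delivers. (Note that when $k=r$ the partition $P''$ is trivial; the chain still works because the inequality $d(P'') \ge \pi(|P''|-1)$ degenerates to $0 \ge 0$, and the bound comes out as equality $|\partial P \cap E[\cup_T V_i]| = d(P) = \pi(k-1)$ — worth stating explicitly.)

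For Part~(1), your two submodular inequalities are both right: $\partial(P_1 \wedge P_2) = \partial P_1 \cup \partial P_2$ exactly (common refinement), $\partial(P_1 \vee P_2) \subseteq \partial P_1 \cap \partial P_2$, and semimodularity of the partition lattice gives $|P_1\wedge P_2| + |P_1 \vee P_2| \ge |P_1| + |P_2|$. Sandwiching with the min-ratio lower bound $d(Q) \ge \pi(|Q|-1)$ forces equality throughout, so $P_1\wedge P_2$ (and $P_1\vee P_2$, if nontrivial) are min-ratio. The mediant step correctly shows $P^*|_{V_i}$ is min-ratio in $G[V_i]$ with ratio $\pi$. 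The one place you gloss over: to justify ``the recursion on $G[V_i]$ reaches the same set of level-1 edges $\partial P^* \cap E[V_i]$,'' you implicitly need $P^*|_{V_i}$ to refine \emph{every} min-ratio partition of $G[V_i]$. This is true, but needs the observation that any min-ratio $R$ of $G[V_i]$ extends (by gluing in the other parts $V_j$, $j\ne i$) to a min-ratio partition $\hat R$ of $G$; then $P^* \preceq \hat R$ implies $P^*|_{V_i} \preceq R$. Alternatively, the whole $P^*$ detour can be replaced by the cleaner statement ``for any two min-ratio $P, P'$, $\ell^*_P = \ell^*_{P\wedge P'}$'' proved directly: within each $V_i$, by the inductive-uniqueness hypothesis you may as well pick $(P\wedge P')|_{V_i}$ as the first split, which matches the $(P\wedge P')$-run part by part. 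Either way the argument goes through; you should just make the final induction explicit rather than leaving it as ``follows.''
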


Let $\pi^*$ be the minimum ratio $\frac{d(P)}{|P|-1}$ among all vertex partitions $P$.
The first iteration of the recursive min-ratio cut process described above defines the ideal loads of the edges in the min-ratio cut to be $\frac{1}{\pi^*}$.
We have the following relation between $\pi^*$ and $\lambda$.
\begin{fact}[Lemma 3 of \cite{Thorup01}]\label{fact:pi-range}
    $\frac{\lambda}{2} < \pi^* \le \lambda$.
\end{fact}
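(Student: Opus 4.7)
The plan is to prove the two inequalities separately by elementary counting arguments; both are essentially witnesses/averaging.

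For the upper bound $\pi^* \le \lambda$, I would simply exhibit a partition achieving ratio $\lambda$. Take $P$ to be the bipartition $\{S, V\setminus S\}$ induced by a minimum cut, where $|P| = 2$ and $d(P) = \lambda$; then $\frac{d(P)}{|P|-1} = \lambda$, and since $\pi^*$ is the minimum of this ratio over partitions, $\pi^* \le \lambda$ follows immediately.

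For the lower bound $\pi^* > \lambda/2$, I would fix an arbitrary partition $P = \{V_1, \dots, V_k\}$ with $k = |P| \ge 2$ and double-count the edges of $\partial P$ via the cuts of the individual parts. Each $V_i$ is a nonempty proper subset of $V$, so by minimality of the global minimum cut, $|\delta(V_i)| \ge \lambda$. Every edge $e \in \partial P$ has endpoints in exactly two parts, so it is counted in exactly two of the terms $|\delta(V_i)|$; this gives the identity $\sum_{i=1}^k |\delta(V_i)| = 2\, d(P)$, and hence $2\,d(P) \ge k\lambda$. Rearranging,
\[
\frac{d(P)}{|P|-1} \;\ge\; \frac{k\lambda}{2(k-1)} \;=\; \frac{\lambda}{2}\cdot\frac{k}{k-1} \;>\; \frac{\lambda}{2},
\]
with strict inequality because $k\ge 2$ forces $\frac{k}{k-1} > 1$. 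Minimizing the left-hand side over $P$ (a minimum is attained since there are only finitely many partitions) yields $\pi^* > \lambda/2$.

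There is no real obstacle here: the only point worth being a little careful about is that the strict inequality survives taking the minimum over $P$, which it does because the minimum is attained by some finite $k^*\ge 2$, and for that partition the gap $\frac{k^*}{2(k^*-1)} - \frac12 = \frac{1}{2(k^*-1)}$ is strictly positive. The statement is essentially the Nash--Williams/Tutte duality (already invoked in the discussion preceding the fact) specialized to the ratio at the top of the recursion.
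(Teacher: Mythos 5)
Your proof is correct. The paper itself does not prove this fact---it only cites it as Lemma~3 of Thorup~(2001)---so there is no in-paper proof to compare against; your double-counting argument is the standard elementary derivation. The upper bound via the bipartition of a minimum cut, and the lower bound via $\sum_i |\delta(V_i)| = 2d(P) \ge k\lambda$ followed by the observation $\frac{k}{2(k-1)} > \frac{1}{2}$, are exactly right, and your attention to the fact that strictness survives the minimum (because the minimum is attained by some finite $k^* \ge 2$) is the one subtle point and you handled it. The only implicit hypothesis worth flagging is that $G$ is connected (so $\lambda > 0$ and every nonempty proper $V_i$ indeed has $|\delta(V_i)| \ge \lambda$), which is the standing assumption throughout the paper.
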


A key property of the recursive min-ratio cut process is that the ideal loads are monotone non-increasing in the recursion.
\begin{lemma}[Lemma 14 of \cite{Thorup01}]\label{lem:ideal-loads-monotone}
    Suppose $P$ is a min-ratio cut during the recursive min ratio cut process, $e_1\in \partial P$, and $e_2$ is in an induced subgraph of a component of $P$. Then, $\ell^*(e_1) \ge \ell^*(e_2)$.
\end{lemma}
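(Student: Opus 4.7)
The plan is to prove the monotonicity by induction on the depth of the recursion at which $e_2$ receives its load, with the one-step monotonicity statement serving as the inductive core. Let $H$ denote the current subgraph in which $P$ is the chosen min-ratio cut, so that $\pi_H^* := d_H(P)/(|P|-1)$ and every edge in $\partial_H P$ (in particular $e_1$) receives load $\ell^*(e_1) = 1/\pi_H^*$. The edge $e_2$ lies inside some component $V_i$ of $P$, and its load is determined by a min-ratio cut in a subgraph of $H[V_i]$ reached after some number of further recursive steps. It therefore suffices to establish that for every component $V_i$ of $P$, the min-ratio $\pi_{H[V_i]}^*$ of the induced subgraph $H[V_i]$ satisfies $\pi_{H[V_i]}^* \ge \pi_H^*$; the induction then gives $\ell^*(e_2) \le 1/\pi_{H[V_i]}^* \le 1/\pi_H^* = \ell^*(e_1)$.

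The key step (and the only non-routine one) is the one-step monotonicity $\pi_{H[V_i]}^* \ge \pi_H^*$. My approach is a ``refinement'' argument: suppose for contradiction that some partition $Q$ of $V_i$ achieves $d_{H[V_i]}(Q)/(|Q|-1) < \pi_H^*$. Form a new partition $P'$ of $V(H)$ by replacing the single part $V_i$ of $P$ with the parts of $Q$ and keeping all other parts of $P$ unchanged. Since every edge that $P'$ cuts is either cut by $P$ or lies inside $V_i$ and is cut by $Q$ (and these two sets are disjoint), we obtain the identities
\[
d_H(P') \;=\; d_H(P) + d_{H[V_i]}(Q), \qquad |P'|-1 \;=\; (|P|-1) + (|Q|-1).
\]
By the mediant inequality, if $d_{H[V_i]}(Q)/(|Q|-1) < d_H(P)/(|P|-1)$, then
\[
\frac{d_H(P')}{|P'|-1} \;<\; \frac{d_H(P)}{|P|-1} \;=\; \pi_H^*,
\]
contradicting the minimality of $P$ as a min-ratio cut of $H$. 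Hence $\pi_{H[V_i]}^* \ge \pi_H^*$, as required.

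With one-step monotonicity in hand, the induction is immediate: $e_2$ lies in a chain $H \supseteq H[V_i] \supseteq H'[V_j] \supseteq \cdots$ of induced subgraphs, each obtained as a component of the min-ratio cut of its predecessor, so the sequence of cut ratios $\pi_H^*, \pi_{H[V_i]}^*, \pi_{H'[V_j]}^*, \ldots$ is non-decreasing. The load assigned to $e_2$ is the reciprocal of the last of these ratios, which is therefore at most $1/\pi_H^* = \ell^*(e_1)$.

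I do not anticipate any real obstacle; the only substantive content is the mediant-inequality refinement argument, and the rest is bookkeeping through the recursion. One minor care point is to handle the possibility that a component $V_i$ has $|V_i| = 1$, in which case $H[V_i]$ has no edges and the induction terminates trivially for that branch; this is consistent with the recursive definition, which only recurses on non-singleton components.
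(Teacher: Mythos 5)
Your proof is correct. The paper itself does not reprove this lemma—it cites Thorup's Lemma~14—so there is no internal proof to compare against, but your argument is the standard one: one-step monotonicity $\pi^*_{H[V_i]} \ge \pi^*_H$ via the refinement $P' = (P \setminus \{V_i\}) \cup Q$ together with the mediant inequality, followed by a straightforward induction down the recursion, and it checks out.

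One small point worth making explicit is that the refined partition $P'$ is indeed an admissible competitor in the min-ratio optimization over $H$: since $Q$ is a nontrivial partition of $V_i$, we have $|P'| = |P| + |Q| - 1 \ge |P| + 1 \ge 2$, and the identities $d_H(P') = d_H(P) + d_{H[V_i]}(Q)$ and $|P'| - 1 = (|P|-1) + (|Q|-1)$ hold because $\partial_H P'$ is the disjoint union of $\partial_H P$ (edges crossing between original parts of $P$, including edges leaving $V_i$) and $\partial_{H[V_i]} Q$ (edges internal to $V_i$ that $Q$ separates), with no double-counting. You state the disjointness, so this is just making the admissibility of $P'$ fully explicit; with that, the contradiction with $P$'s minimality is airtight.
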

It follows that the ideal loads defined in the first iteration is the maximum throughout the recursion, i.e., $\frac{1}{\pi^*} = \max_e \ell^*(e)$.

Let $k_\tau$ be the number of vertices if we contract all edges with $\ell^*(e)<\frac{1}{\tau}$. Clearly, $k_\tau$ is monotonically increasing in $\tau$. We give a lower bound on $\zgp$ in terms of $k_\tau$:
\begin{lemma}\label{lem:ugp-ktau}
$\zgp \ge k_\tau p^{2\tau}$ for any $\tau \ge \pi^*$. 
\end{lemma}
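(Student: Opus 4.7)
My proposed approach is to contract every edge $e$ with $\ell^*(e)<1/\tau$ to form a graph $G'$ whose $k_\tau$ supervertices are $V_1,\ldots,V_{k_\tau}\subseteq V$, and then lower-bound $z_G(p)$ using the $k_\tau$ cuts $\{V_i, V\setminus V_i\}$ of $G$ obtained by isolating each supervertex. Each such cut has size $d_G(V_i)=d_{G'}(V_i)$, so the plan reduces to bounding $\sum_i p^{d_{G'}(V_i)}$ from below via Jensen's inequality, provided the total supervertex degree $2|E(G')|$ is not too large.

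The first step, which I view as the main step, is to establish $|E(G')|\le \tau(k_\tau-1)$. To do this I would push the ideal tree packing of $G$ forward along the contraction: each spanning tree of $G$ sampled from the packing contracts to a spanning tree of $G'$ once the loops are deleted, giving a fractional packing of spanning trees of $G'$. Since every spanning tree of $G'$ has exactly $k_\tau-1$ edges, the total load surviving on $E(G')$ equals
\[
\sum_{e\in E(G')} \ell^*(e) \;=\; k_\tau-1,
\]
and combining with $\ell^*(e)\ge 1/\tau$ for each $e\in E(G')$ (by definition of the contraction) gives $|E(G')|\le \tau(k_\tau-1)$.

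With this edge bound, $\sum_{i=1}^{k_\tau} d_G(V_i)=2|E(G')|\le 2\tau(k_\tau-1)$, so the average supervertex degree is at most $2\tau(k_\tau-1)/k_\tau\le 2\tau$. Since $p\in(0,1)$ makes $p^x=e^{x\ln p}$ convex in $x$, Jensen's inequality yields
\[
\sum_{i=1}^{k_\tau} p^{d_G(V_i)} \;\ge\; k_\tau\, p^{\frac{1}{k_\tau}\sum_i d_G(V_i)} \;\ge\; k_\tau\, p^{2\tau(k_\tau-1)/k_\tau} \;\ge\; k_\tau\, p^{2\tau},
\]
where the final inequality uses $p<1$. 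For $k_\tau\ge 3$ the cuts $\{V_i,V\setminus V_i\}$ are pairwise distinct cuts of $G$, so $z_G(p)\ge \sum_i p^{d_G(V_i)}\ge k_\tau p^{2\tau}$, which is the desired inequality.

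The hardest part, in my view, is cleanly verifying the sum-of-loads identity $\sum_{e\in E(G')} \ell^*(e)=k_\tau-1$; everything else is standard Jensen plus monotonicity of $p^x$. The remaining corner case $k_\tau=2$ (in which the two isolation cuts coincide as the single partition $\{V_1,V_2\}$ and the Jensen sum overcounts by a factor of $2$) is handled separately by noting that this single cut is then the top-level min-ratio cut of size $\pi^*\le\tau$, which is simultaneously a minimum cut of $G$, so the missing factor of $2$ is recovered from a second cut of size at most $2\tau$ produced inside $V_1$ or $V_2$ by the recursion.
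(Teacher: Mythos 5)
Your overall plan --- contract all edges with $\ell^*(e) < 1/\tau$ to form a $k_\tau$-vertex graph $G'$, bound $|E(G')|\le\tau(k_\tau-1)$, and apply Jensen's inequality to the vertex-isolation cuts --- is the same as the paper's. The gap is in your justification of the key load identity $\sum_{e\in E(G')}\ell^*(e)=k_\tau-1$, which you correctly single out as the crux. You argue that every tree $T$ sampled from the ideal packing contracts to a spanning tree of $G'$ after loop removal, and deduce the identity by linearity of expectation. But a spanning tree of $G$ contracts only to a \emph{connected spanning subgraph} of $G'$, which can have \emph{more} than $k_\tau-1$ surviving edges: the surviving count is $(n-1)-\sum_i|T\cap E(G[V_i])|$, and $T$ restricted to $G[V_i]$ is merely a forest, not necessarily a spanning tree of $G[V_i]$. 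Taking expectations, the naive pushforward gives only $\sum_{e\in E(G')}\ell^*(e)\ge k_\tau-1$, the wrong direction for what you need.

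To obtain equality one must show that $\ell^*$ restricted to each $E(G[V_i])$ is already in the spanning tree polytope of $G[V_i]$, so that $\sum_{e\in E(G[V_i])}\ell^*(e)=|V_i|-1$. The paper establishes exactly this by appealing to the recursive structure of the ideal loads: by monotonicity (\Cref{lem:ideal-loads-monotone}) the partition $\{V_i\}$ arises as a stage of the recursive min-ratio cut process, and by uniqueness (\Cref{lem:ideal-loads-def}) the restriction $\ell^*|_{E(G')}$ is itself the ideal load vector of $G'$ and hence lies in its spanning tree polytope, yielding $\sum_{e\in E(G')}\ell^*(e)=k_\tau-1$ directly. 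Your claim that trees in the packing project to spanning trees of $G'$ is a genuine \emph{consequence} of that identity (combined with the pointwise bound $|T\cap E(G')|\ge k_\tau-1$), so using it to \emph{derive} the identity is circular without this recursive ingredient. Incidentally, the $k_\tau=2$ corner case you flag --- where the two isolation cuts coincide --- is real; the paper's proof tacitly assumes the degree cuts of $H$ are distinct, and your proposed patch would again need the recursive load structure to exhibit a second small cut, so as written it is not a complete fix either.
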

\begin{proof}
    Form a graph $H$ by contracting all edges $e$ with $\ell^*(e) < \frac{1}{\tau}$.  Since $\ell^*(e)=\frac{1}{\pi^*} \ge \frac{1}{\tau}$ for edges $e$ in the global min-ratio cut, $H$ is not a singleton.
    By definition of $k_\tau$, we have $|V(H)|=k_\tau$.

Next, we claim that if we run the recursive min-ratio cut process in $H$, the ideal loads of edges in $H$ are the same as their original ideal loads in $G$. Denote the ideal loads on $H$ by $\ell_H$.
Whenever the recursive min-ratio cut process on $G$ finds a partition $P$ with cut ratio $\frac{d(P)}{|P|-1} \le \tau$, all edges in $\partial P$ have load $\ell^*(e) \ge \frac{1}{\tau}$ and will not be contracted. So, the cut $P$ is preserved in $H$ with the same cut ratio, and it must be a min-ratio cut in the current induced subgraph. By \Cref{lem:ideal-loads-def} (1), we can choose the same cut in the process to define $\ell_H$.
Whenever the recursive min-ratio cut process on $G$ finds  a partition $P$ with cut ratio $\frac{d(P)}{|P|-1} > \tau$, the cut edges $e\in \partial P$ have $\ell^*(e) < \frac{1}{\tau}$. Moreover, by \Cref{lem:ideal-loads-monotone}, all edges in the current induced subgraph have $\ell^*(e) < \frac{1}{\tau}$. So, the whole induced subgraph will be contracted in $H$.
By inductively applying the above two arguments, we can define a recursive min-ratio cut process on $H$ that produces $\ell_H(e) = \ell^*(e)$ for any $e\in E(H)$. By \Cref{lem:ideal-loads-def} (2), $\ell_H$ is in the spanning tree polytope. So,  $\sum_{e\in E(H)}\ell^*(e)=|V(H)|-1=k_\tau-1$.

Let  $d(v)$ be the degree of $v$, and $\bar d$ be the average degree in $H$. We bound
    \[\bar{d} = \frac{1}{k_\tau}\sum_{v\in V(H)} d(v) = \frac{2|E(H)|}{k_\tau} \le \frac{2}{k_\tau} \sum_{e\in E(H)}\ell^*(e) \cdot \tau = 2\tau \cdot \frac{k_\tau-1}{k_\tau}< 2\tau.\]
Since the degree cut of each vertex $v\in V(H)$ corresponds to a separate cut in $G$, we have
    \[\zgp \ge \sum_{v\in V(H)} p^{d(v)} \ge k_\tau p^{\bar{d}} \ge k_\tau p^{2\tau},\]
where the middle inequality follows by convexity of the function $f(x)=p^x$.
\end{proof}

\medskip\noindent{\bf Approximate Tree Packing.} 
The recursive algorithm used to define ideal loads above requires many min-ratio cut computations; we don't know how to implement them all in almost-linear time.
Instead, we approximate the ideal loads using a \emph{greedy tree packing} algorithm described below. The main purpose is to eventually define surrogate parameters $\tilde{k}_\tau$ that can substitute for the parameters $k_\tau$ in \Cref{lem:ugp-ktau}.

\begin{defn}[Greedy tree packing]\label{def:greedy-packing}
Given a input graph $G$ and a number of rounds $k$, the greedy algorithm iteratively finds a collection of spanning trees $\{T_1, T_2, \ldots, T_k\}$, such that each $T_i$ is a minimum spanning tree on $G$ with respect to edge loads induced by $\{T_1, \ldots, T_{i-1}\}$, which we define below.
For a collection of spanning trees $\mathcal T$, define its induced loads $\ell^{\mathcal{T}}$ by the edge loads of the uniform distribution over trees in $\mathcal T$, i.e., $\ell^{\mathcal T}(e) = \frac{|\{T\in\mathcal{T}:T\ni e\}|}{|\mathcal T|}$. (Specially, for an empty $\mathcal T$, $\ell^T(e) = 0$.)

\end{defn}
The next lemma asserts that this greedy process approximates an ideal tree packing:
\begin{lemma}\label{cor:greedy-packing}
In an unweighted graph with min-cut value $\lambda$, given a parameter $\delta\in(0, 1)$, we can find a (multi-)set $\mathcal T$ of $6\lambda\ln m/\delta^2$ spanning trees such that
 \begin{enumerate}
 \item $|\ell^{\mathcal{T}}(e) - \ell^*(e)| \le 3\delta/\lambda, ~\forall e\in E$, where $\ell^{\mathcal T}$ are the edge loads for the uniform distribution over $\mathcal T$.
 \item Each tree $T$ is the MST of some weight function $\ell_T$ satisfying $|\ell_T(e)-\ell^*(e)|\le\delta/\lambda$ for all $e\in E$.
 \end{enumerate}
The algorithm runs in time $\tilde{O}(m/\delta^2)$. The trees are implicit, but the loads $\ell^{\mathcal T}$ are output explicitly.
\end{lemma}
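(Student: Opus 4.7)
The plan is to analyze the greedy procedure of \Cref{def:greedy-packing} as a multiplicative-weights-update (MWU) scheme for the tree-packing LP, and then combine it with a sparsification-based speedup to obtain the claimed running time. First, I interpret each MST computation as the separation oracle of MWU with step size $1/T$, where $T = 6\lambda\ln m/\delta^2$ is the number of rounds: picking $T_i$ as the MST under the current weights $\ell^{\mathcal T_{i-1}}$ is exactly the vertex of the spanning-tree polytope minimizing the inner product with the current weight vector. Using the $O(1/\lambda)$ width of the LP (which follows from $\ell^*(e)\le 1/\pi^* = O(1/\lambda)$ by \Cref{fact:pi-range} and \Cref{lem:ideal-loads-monotone}), the Plotkin--Shmoys--Tardos analysis gives that after $T$ rounds the induced loads $\ell^{\mathcal T}$ are pointwise within $O(\delta/\lambda)$ of the loads of some maximum fractional tree packing.

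To upgrade ``some maximum packing'' to the \emph{ideal} packing, I would invoke Thorup's recursive analysis~\cite{Thorup01, Thorup08}: the recursive min-ratio-cut process that defines $\ell^*$ is characterized by per-layer LP saturation, and any near-maximum packing must match this saturation pattern edgewise up to the MWU slack. Since by \Cref{lem:ideal-loads-def}(1) the ideal loads are unique, this pins $\ell^{\mathcal T}$ close to $\ell^*$ on every edge, yielding part (1). The factor of $3$ in $3\delta/\lambda$ absorbs the compounding of errors across recursion levels of the ideal-load definition.

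For part (2), I would prepend a burn-in phase of $T$ extra rounds, so that throughout the subsequent $T$ ``collection'' rounds part (1) already ensures $\|\ell^{\mathcal T_{i-1}}-\ell^*\|_\infty \le \delta/\lambda$. Then for each output tree $T$ chosen in round $i$, I simply set $\ell_T := \ell^{\mathcal T_{i-1}}$: by construction $T$ is the MST of $\ell_T$, and $\ell_T$ is $\delta/\lambda$-close to $\ell^*$ by the burn-in guarantee. Doubling the iteration count is absorbed into constants.

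The hard part is the running time: a naive implementation that computes an MST on all of $G$ per round spends $\tilde{O}(m)$ per round and $\tilde{O}(m\lambda/\delta^2)$ overall, a factor of $\lambda$ too slow. I would instead run the greedy packing on a cut sparsifier of $G$ (in the strength-based Benczúr--Karger style) with $\tilde{O}(n)$ edges and approximation parameter small enough to be absorbed into $\delta$; such a sparsifier preserves cut values, and hence min-ratio cuts and ideal loads, to within $(1\pm o(1))$. Each MST on the sparsifier then costs $\tilde{O}(n)$, and using $n\lambda \le 2m$ (since every vertex degree is at least $\lambda$), the total over $T$ rounds is $\tilde{O}(n\lambda/\delta^2) = \tilde{O}(m/\delta^2)$. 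The main technical obstacle here is arguing that the loads produced on the sparsifier lift back to $G$ without worsening the $3\delta/\lambda$ pointwise guarantee of part (1); this requires a careful calibration of the sparsifier's accuracy parameter against $\delta$, together with a verification that the burn-in argument for part (2) transfers across the lift.
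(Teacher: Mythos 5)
Your part~(2) argument (burn-in: output only the later half of the trees, and set $\ell_T$ to be the load vector immediately before $T$ was packed) is exactly what the paper does. The other two components of your proposal diverge from the paper and both have real gaps.

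For part~(1), you want to derive the pointwise bound $|\ell^{\mathcal T}(e)-\ell^*(e)|\le O(\delta/\lambda)$ from a Plotkin--Shmoys--Tardos (MWU) analysis of the tree-packing LP. But PST-type arguments certify that the \emph{objective value} of the averaged iterate is nearly optimal; they do not by themselves give $L_\infty$ proximity of the iterate to any particular optimal solution. The tree-packing LP can have a large optimal face, and an $\eps$-optimal solution need not be close to the ideal load vector $\ell^*$ in $L_\infty$, so the step ``close to \emph{some} maximum packing, then upgrade to the ideal one by uniqueness and per-layer saturation'' is precisely the nontrivial content of Thorup's analysis, which you invoke only informally. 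The paper sidesteps this entirely: it cites Thorup's Proposition~16 of~\cite{Thorup01} as a black box, which already gives $|\ell^{\mathcal T}(e)-\ell^*(e)|\le\delta/\lambda$ directly, and then gets the claimed bound with constant $3$ by the telescoping identity $\ell^{\mathcal T}(e)=2\ell^{\mathcal T_N}(e)-\ell^{\mathcal T_{N/2}}(e)$ applied to the second half of the iterates. You could salvage your route by citing Thorup at the same point, but then the MWU framing buys you nothing.

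The larger problem is the running-time argument. You propose to run greedy packing on a Benczúr--Karger cut sparsifier $H$ with $\tilde O(n)$ edges. This fails for several reasons. First, a sparsifier is a (reweighted) subgraph, so its edge set is a strict subset of $E$; the loads you compute are only defined on sparsifier edges and there is no canonical way to ``lift back'' a load to an edge of $G$ that was dropped. Second, the sparsifier has min-cut $\lambda_H=\tilde O(1/\eps^2)\neq\lambda$, so both the scale of the ideal loads (\Cref{lem:load-range} restricts them to $(\tfrac{1}{3\lambda_H},\tfrac{2}{\lambda_H})$) and the number of iterations prescribed by Thorup ($6\lambda_H\ln m/\delta^2$) change; you cannot keep $T=6\lambda\ln m/\delta^2$ rounds and simultaneously claim the per-round cost is $\tilde O(n)$. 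Third, and most fundamentally, preserving all cut values of $G$ does not preserve the recursive min-ratio-cut process: the ideal loads at later levels of Thorup's recursion are determined by cut structure inside contracted subgraphs, and a single global sparsifier of $G$ is not a sparsifier of those subgraphs. The paper avoids all of this by never sparsifying: it observes that each greedy iteration only modifies the weights on the $n-1$ edges of the newly packed tree, so the MST can be maintained \emph{dynamically} via Holm et al.~\cite{holm01} in $\tilde O(1)$ amortized time per edge-weight update, giving $\tilde O(n)$ per iteration and $\tilde O(\lambda/\delta^2)\cdot\tilde O(n)=\tilde O(m/\delta^2)$ overall using $m\ge n\lambda/2$. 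That is the piece your proposal is missing.
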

\begin{proof}
We use the following lemma from \cite{Thorup01}:
\begin{lemma}[Proposition 16 of \cite{Thorup01}]\label{lem:greedy-packing}
In an unweighted graph with min-cut value $\lambda$, suppose that $\mathcal{T}$ is a (multi-)set of spanning trees generated by greedy tree packing after $\ge 6\lambda \ln m/\delta^2$ iterations, where parameter $\delta < 2$. Let $\ell^{\mathcal T}$ be the edge loads for the uniform distribution over $\mathcal T$. Then, for all edges $e$,
    $$|\ell^{\mathcal{T}}(e) - \ell^*(e)| \le \delta/\lambda.$$
\end{lemma}

Now, we proceed to prove \Cref{cor:greedy-packing}. 
Compute a tree packing of $N=12\lambda\ln m/\delta^2$ trees, which is twice as many as needed by \Cref{lem:greedy-packing}, and let $\mathcal T$ be the latter $N/2$ of the spanning trees computed. On each iteration $i>6\lambda\ln m/\delta^2$ of greedy tree packing, we have $|\ell^{\mathcal T_i}(e)-\ell^*(e)|\le\delta/\lambda$ for all $e\in E$, where $\mathcal T_i$ is the current tree packing of $i-1$ trees. For the next tree $T$ to be packed, define $\ell_T=\ell^{\mathcal T_i}$, which satisfies property~(2). For property~(1), for each $e\in E$ we have
\[ \ell^{\mathcal T}(e)=\frac{N\cdot\ell^{\mathcal T_N}(e)-N/2\cdot\ell^{\mathcal T_{N/2}}(e)}{N/2}=2\ell^{\mathcal T_N}(e)-\ell^{\mathcal T_{N/2}}(e) ,\]
which together with $|\ell^{\mathcal T_N}(e)-\ell^*(e)|\le\delta/\lambda$ and $|\ell^{\mathcal T_{N/2}}(e)-\ell^*(e)|\le\delta/\lambda$ concludes property~(1).

For the running time, we dynamically maintain the minimum spanning tree over the iterations using, e.g.\ the algorithm of~\cite{holm01}. Each iteration takes $\tilde O(1)$ time to compute the new minimum spanning tree, and then $\tilde O(n)$ time to update the weights on each edge of the spanning tree. The total running time over the $\tilde O(\lambda/\delta^2)$ iterations is $\tilde O(\lambda/\delta^2)\cdot\tilde O(n)=\tilde O(m/\delta^2)$ since $m\ge n\lambda/2$.
\end{proof}

We wish to obtain a multiplicative approximation of $\ell^*(e)$ instead of the additive approximation in \Cref{cor:greedy-packing}. Moreover, we will later group edges multiplicatively by loads and need to ensure that there are few groups. For both these purposes, we contract all $3\lambda$-strong components in the input graph. The next lemma says that this restricts $\ell^*(e)$ to $(\frac1{3\lambda},\frac2\lambda)$ for all the uncontracted edges:
\begin{lemma}\label{lem:load-range}
After contracting all $3\lambda$-strong components, each edge $e$ satisfies $\ell^*(e)\in (\frac1{3\lambda},\frac2\lambda)$.
\end{lemma}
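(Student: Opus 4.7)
The plan is to prove the two bounds separately. The upper bound $\ell^*(e) < 2/\lambda$ is essentially a one-line corollary of \Cref{lem:ideal-loads-monotone} combined with \Cref{fact:pi-range}: I will first check that the min-cut of $H$ remains $\lambda$, since every min-cut of $G$ (of size $\lambda$) must avoid cutting through any $3\lambda$-strong component (such a crossing would require at least $3\lambda > \lambda$ edges internal to the component alone). Hence $\pi^*(H) > \lambda/2$ by \Cref{fact:pi-range}, and applying \Cref{lem:ideal-loads-monotone} gives $\ell^*(e) \le 1/\pi^*(H) < 2/\lambda$ for every edge $e$ of $H$.

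For the lower bound $\ell^*(e) > 1/(3\lambda)$, I will consider the induced subgraph $H'$ of $H$ in which $e$ is separated by a min-ratio cut $P$ during the recursive process, so that $\ell^*(e) = (|P|-1)/d(P) = 1/\pi^*(H')$. The goal reduces to showing $\pi^*(H') < 3\lambda$, and since min-ratio is bounded above by the value of any 2-partition cut, it suffices to show that $H'$ is not $3\lambda$-edge-connected. I will argue this by contradiction: let $U \subseteq V(G)$ be obtained by expanding every super-node of $V(H')$ back into the corresponding $3\lambda$-strong component of $G$. If $H'$ were $3\lambda$-edge-connected, then $G[U]$ would be $3\lambda$-edge-connected as well, because any 2-partition of $U$ either respects all strong components inside $U$, in which case it corresponds to a cut of $H'$ of value $\ge 3\lambda$, or splits some strong component $S$, in which case it already uses $\ge 3\lambda$ edges inside $G[S]$. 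But $|V(H')| \ge 2$ implies that $U$ strictly contains at least one of the original $3\lambda$-strong components of $G$, contradicting the maximality of that strong component.

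The main obstacle will be the bookkeeping in the contradiction step: carefully verifying via the above case split that expanding super-nodes back into their strong components preserves $3\lambda$-edge-connectivity, and then invoking maximality precisely. A subtle edge case I will need to handle is when the super-nodes involved are singleton vertices of $G$ (trivial $3\lambda$-strong components); if $G[U]$ were $3\lambda$-edge-connected in that situation, these singletons would already have been grouped into a larger strong component originally, yielding the same contradiction via maximality. Beyond these points, everything reduces to direct invocations of the definitions and the previously stated facts.
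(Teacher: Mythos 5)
Your proof is correct and follows the same skeleton as the paper's: the upper bound comes from \Cref{fact:pi-range} applied to the contracted graph $H$ together with \Cref{lem:ideal-loads-monotone}, and the lower bound comes from locating the instance $H'$ in the recursive min-ratio-cut process where $\ell^*(e)$ is set, observing that the min-ratio value is at most the min-cut value, and showing that min-cut value is strictly below $3\lambda$. Where you diverge is in how that last step is justified. The paper compresses it into a single sentence---\emph{``Since $G[S]$ contains edge $e$ of strength $<3\lambda$, the min-cut value of $G[S]$ is $<3\lambda$''}---which implicitly appeals to properties of edge strength (in particular, that strength is preserved when one contracts edges of higher strength, a Bencz\'ur--Karger fact the paper commented out of its preliminaries). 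You instead give a self-contained argument: expand the super-nodes of $H'$ back to the vertex set $U$ in $G$, show via a case split (cut respects vs.\ splits a strong component) that if $H'$ were $3\lambda$-edge-connected then so is $G[U]$, and derive a contradiction with maximality of the $3\lambda$-strong components since $|V(H')|\ge 2$. This is a slightly longer but more elementary route that does not lean on unstated strength machinery, and it cleanly handles the singleton super-node case as part of the same maximality argument. One small remark: for the upper bound you only need $\lambda_H\ge\lambda$ (which is immediate from contraction); your extra observation that min-cuts of $G$ avoid crossing $3\lambda$-strong components establishes the stronger fact $\lambda_H=\lambda$, which is true but not needed here.
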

\begin{proof}
By \Cref{fact:pi-range}, $1/\pi^* < 2/\lambda$. Since edge loads are monotonically decreasing in the recursive ideal load algorithm, each edge has ideal load less than $2/\lambda$.

For a given uncontracted edge $e$ of strength less than $3\lambda$, consider an instance $G[S]$ in the recursive load algorithm whose min-ratio cut sets value $\ell^*(e)$. Since $G[S]$ contains edge $e$ of strength $< 3\lambda$,  the min-cut value of $G[S]$ is $< 3\lambda$. By definition of $\ell^*(e)$, the minimum cut ratio of $G[S]$ is $\frac{1}{\ell^*(e)}$. Any (2-way) cut is a partition whose cut ratio equals the cut value. So, the connectivity of $G[S]$ is no less than the value of the min-ratio cut. Denoting by $\lambda_{G[S]}$ the minimum cut of $G[S]$, we conclude that $\frac{1}{\ell^*(e)} \le \lambda_{G[S]} < 3\lambda$, which implies $\ell^*(e) > \frac{1}{3\lambda}$.
\end{proof}

Contracting all $3\lambda$-strong components is without loss of generality due to \Cref{lem:contract-strong-oneshot}.
(If the assumption of \Cref{lem:contract-strong-oneshot} does not hold, we can directly switch to the unreliable case since that assumption is implied by the assumption of reliable case.)
By applying \Cref{cor:greedy-packing} with parameter $\delta/3$ on the contracted graph given by \Cref{lem:load-range}, we get $\ell^{\mathcal T}(e)\approx_{\delta}\ell^*(e)$ for all $e\in E$, where $x \approx_\delta y$ denotes $(1+\delta)^{-1}y \le x \le (1+\delta)y$.

\medskip\noindent{\bf Edge Layering.}
We set the parameter $\delta := \Theta(1/\log \log n)$ and partition the edges into $O(\frac1\delta) = O(\log \log n)$ levels based on their ideal load as follows: Let $\pi_i = (1+\delta)^i\lambda/2$; we define $E_i\subseteq E$ to be the edges $e\in E$ with $\ell^*(e) \in [\frac{1}{\pi_i}, \frac{1}{\pi_{i-1}})$. Since $\ell^*(e) \in (\frac{1}{3\lambda}, \frac{2}{\lambda})$ for all edges $e$, the sets $E_1,E_2,\ldots,E_L$ partition $E$ for $L=O(\frac1\delta)$. To avoid clutter, define $E_{>i}=E_{i+1}\cup\cdots\cup E_L$. Recall that $k_{\pi_i}$ is the number of nodes after contracting all edges in $E_{>i}$ (i.e., edges with $\ell^*(e) < \frac{1}{\pi_i}$).

One complication is that the values $k_\tau$ are defined on the ideal tree packing, which means that we do not know them exactly. So, we next define approximations $\widetilde E_i$ and $\tilde k_\tau$ based on our approximate tree packing $\mathcal T$. Define $\widetilde E_i\subseteq E$ to be the edges $e\in E$ with $\ell^{\mathcal T}(e)\in[\frac{1}{\pi_i}, \frac{1}{\pi_{i-1}})$, and let $\tilde k_\tau$ be the number of nodes after contracting all edges with $\ell^{\mathcal T}(e)<\frac1\tau$. As earlier, define $\widetilde E_{>i}=\widetilde E_{i+1}\cup\cdots\cup\widetilde E_L$, so $\tilde k_{\pi_i}$ is the number of nodes after contracting all edges in $\widetilde E_{>i}$ (i.e., edges with $\ell^{\mathcal T}(e) < \frac{1}{\pi_i}$).

For a spanning tree $T\in\mathcal T$, define $\tilde k_{\pi_i}(T) = |E(T)\cap\widetilde E_i|$; 
note that since $T$ is a tree, $\tilde k_{\pi_i}(T)$ is also the number of nodes in $T$ after contracting the set of edges $E(T)\setminus \widetilde E_{i}$, minus 1. We now show how these parameters $k_{\pi_i}$, $\tilde k_{\pi_i}$, and $\tilde k_{\pi_i}(T)$ are related.

\begin{lemma}\label{lem:tilde-ki-range}
For every $i$, we have $k_{\pi_i}\le\tilde k_{\pi_{i+1}}\le k_{\pi_{i+2}}$ and $\tilde k_{\pi_i}(T)\le\tilde k_{\pi_{i+2}}-1$ for all $T\in\mathcal T$.
\end{lemma}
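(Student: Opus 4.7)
My plan is to prove the two inequality chains separately, both by a careful comparison of the edge sets implicit in each quantity.

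For $k_{\pi_i}\le\tilde k_{\pi_{i+1}}\le k_{\pi_{i+2}}$, I would use the basic monotonicity that contracting a larger edge set yields fewer or equal connected components, so it suffices to show nested containments of the contracted edge sets. The inequality $k_{\pi_i}\le\tilde k_{\pi_{i+1}}$ follows once I show that every edge $e$ with $\ell^{\mathcal T}(e)<1/\pi_{i+1}$ also satisfies $\ell^*(e)<1/\pi_i$: using $\ell^*(e)\le (1+\delta)\ell^{\mathcal T}(e)$ from $\ell^{\mathcal T}\approx_\delta\ell^*$ together with $\pi_{i+1}=(1+\delta)\pi_i$, we get $\ell^*(e)<(1+\delta)/\pi_{i+1}=1/\pi_i$. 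The second inequality $\tilde k_{\pi_{i+1}}\le k_{\pi_{i+2}}$ is symmetric, using $\ell^{\mathcal T}(e)\le(1+\delta)\ell^*(e)$ in the same way.

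For $\tilde k_{\pi_i}(T)\le\tilde k_{\pi_{i+2}}-1$, I would rewrite the bound as $|E(T)\cap\widetilde E_i|+1\le\tilde k_{\pi_{i+2}}$. The left side equals the number of connected components of the graph $(V,E(T)\setminus\widetilde E_i)$, since deleting the $|E(T)\cap\widetilde E_i|$ level-$i$ tree edges from the spanning tree $T$ splits it into exactly that many components; the right side equals the number of components of $(V,\widetilde E_{>i+2})$. So it suffices to show that the partition induced by $(V,\widetilde E_{>i+2})$ refines the one induced by $(V,E(T)\setminus\widetilde E_i)$, which reduces to the following assertion: for every edge $e=(u,v)\in\widetilde E_{>i+2}$, the endpoints $u$ and $v$ lie in the same component of $(V,E(T)\setminus\widetilde E_i)$. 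If $e$ is a tree edge, then $\ell^{\mathcal T}(e)<1/\pi_{i+2}<1/\pi_i$ places $e$ outside $\widetilde E_i$, so $e$ itself is an edge of $E(T)\setminus\widetilde E_i$. If $e$ is a non-tree edge, I would invoke \Cref{cor:greedy-packing}(2): $T$ is the MST under some weight function $\ell_T$ close to $\ell^*$, and by the MST cycle property, every tree edge $e'$ on the $u$-$v$ path in $T$ satisfies $\ell_T(e')\le\ell_T(e)$. Transferring this inequality through the approximations $\ell_T\approx\ell^*\approx\ell^{\mathcal T}$ and using the $(1+\delta)^2$ gap between the thresholds $1/\pi_{i+2}$ and $1/\pi_i$, I would conclude $\ell^{\mathcal T}(e')<1/\pi_i$, so $e'\notin\widetilde E_i$. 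Thus the entire tree path from $u$ to $v$ lies in $E(T)\setminus\widetilde E_i$.

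The main obstacle will be ensuring that strict inequality survives the chain of multiplicative approximations. The two-step subscript shift from $i$ to $i+2$ supplies exactly a factor-$(1+\delta)^2$ buffer between the relevant $\ell^{\mathcal T}$-thresholds, and this buffer has to absorb both the $\ell^{\mathcal T}\leftrightarrow\ell^*$ and $\ell^*\leftrightarrow\ell_T$ errors while still turning the $\le$ from the MST cycle property into a strict $<$ in the final $\ell^{\mathcal T}$-comparison. The strict defining inequality $\ell^{\mathcal T}(e)<1/\pi_{i+2}$ of $\widetilde E_{>i+2}$ provides the slack needed to close this gap, as long as the approximation parameter fed into \Cref{cor:greedy-packing} is taken small enough; the paper's choice of $\delta/3$ appears calibrated precisely for this.
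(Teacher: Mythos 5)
Your proposal is correct and follows essentially the same path as the paper: the first chain is proved by identical nested containments of the contracted edge sets, and the second chain exploits the same MST structure of $T$ with respect to $\ell_T$ together with the two-subscript buffer that absorbs the $\ell^*\leftrightarrow\ell^{\mathcal T}\leftrightarrow\ell_T$ errors. The only cosmetic difference is that you invoke the MST cycle property and a partition-refinement count where the paper runs Kruskal's algorithm on the weight function $\ell_T$ and counts forest components at the moment the first $\widetilde E_i$ edge is reached; these are equivalent characterizations of minimum spanning trees.
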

\begin{proof}
By property~(1) of \Cref{cor:greedy-packing}, we have $|\ell^{\mathcal{T}}(e) - \ell^*(e)| \le 3\eps$ for all edges $e$.
By property~(2) of \Cref{cor:greedy-packing}, each tree $T\in\mathcal T$ is the MST of some weight function $\ell_T$ satisfying $|\ell_T(e)-\ell^*(e)|\le\eps$ for all $e\in E$. Combining the two properties, we obtain $|\ell_T(e)-\ell^{\mathcal T}(e)|\le4\eps$ for all $e\in E$. We can set $\eps=\Theta(\delta/\lambda)$ small enough that $\ell^{\mathcal T}(e)\approx_\delta\ell^*(e)$ and $\ell_T(e)\approx_\delta\ell^{\mathcal T}(e)$ for all $e\in E$. 

For the first statement, observe that
\[ e\in\widetilde E_{>i+1}\implies\ell^{\mathcal T}(e)<\frac1{\pi_{i+1}}\implies\ell^*(e)<\frac1{\pi_i}\implies e\in E_{>i} ,\]
so $\widetilde E_{>i+1}\subseteq E_{>i}$, and it follows that $k_{\pi_i}\le\tilde k_{\pi_{i+1}}$. The other inequality $\tilde k_{\pi_{i+1}}\le k_{\pi_{i+2}}$ follows by an identical argument.

For the second statement, we have
\begin{align*}
e\in\widetilde E_{>i+2}&\implies \ell^{\mathcal T}(e) < \frac{1}{\pi_{i+2}} \implies \ell_T(e) < \frac{1}{\pi_{i+1}} , \quad\text{and}
\\ e\in\widetilde E_{i} &\implies \ell^{\mathcal T}(e) \ge \frac{1}{\pi_i} \implies \ell_T(e) \ge \frac{1}{\pi_{i+1}}.
\end{align*}
Imagine running Kruskal's MST algorithm on weight function $\ell_T$. All edges in $\widetilde E_{>i+2}$ are prioritized over all edges in $\widetilde E_i$, so by the time an edge in $\widetilde E_i$ is visited by Kruskal's algorithm, the current spanning forest has at most $\tilde k_{\pi_{i+2}}$ connected components. It follows that there are at most $k_{\pi_{i+2}}-1$ edges in $\widetilde E_i$ added to $T$.
\end{proof}

\subsection{Surrogate Criterion for Deciding Reliable vs Unreliable Cases}\label{sec:surrogate}
Based on the edge layering above, we define the criterion that is used to distinguish between the reliable and unreliable cases. 
This replaces the condition $\ugp < n^{-1-\Omega(1/\log\log n)}$ in \Cref{lem:interface-importance-zgp}. The latter condition, in conjunction with \Cref{lem:z-approx-u} and \Cref{lem:ugp-ktau}, yield a condition on $k_\tau$ that, in principle, can be used directly to separate the reliable and unreliable cases. However, since we do not have a sufficiently fast algorithm to compute the parameters $k_\tau$, we will instead use the approximate parameters $\tilde{k}_\tau$ and rely on \Cref{lem:tilde-ki-range} to relate the $\tilde{k}_\tau$ and $k_\tau$ parameters.

\begin{defn}\label{def:surrogate-criteria}
    The algorithm is in the reliable case if $n^{1+30\delta}\tilde k_{\pi_i}p^{2\pi_i}\le1$ holds for all $i$. Otherwise, it is in the unreliable case. (Note that since we are not in the second base case, we assume $p<\theta$ for both these cases.)
\end{defn}


The next two lemmas summarize the relevant implications of this condition being true or false, i.e., in the unreliable and reliable cases respectively. 

Recall that $\ell^*(\cdot)$ and $\ell^{\mathcal T}(\cdot)$ respectively denote the ideal loads and approximate loads computed by greedy tree packing (as in \Cref{cor:greedy-packing}) for all edges. We first extend this notation to set functions on cuts as follows: For any function $\ell(\cdot)$ (such as $\ell^*(\cdot), \ell^{\mathcal T}(\cdot)$, etc.)\ and a cut $C$, we use the shorthand $\ell(C)$ to denote $\sum_{e\in C} \ell(e)$.

Let us first consider the case when \Cref{def:surrogate-criteria} is false, i.e., the unreliable case. In this case, we show that $\ugp > n^{-1-O(1/\log\log n)}$ holds. This implies that in order to prove \Cref{lem:interface-importance-zgp}, it suffices to show the conclusion of the lemma under the surrogate condition \Cref{def:surrogate-criteria}.

\begin{lemma}[Unreliable Case]\label{lem:criteria-inverse}
    If $n^{1+30\delta}\tilde k_{\pi_i} p^{2 \pi_i} > 1$ for some $i\ge 0$ and $p<\theta$, then $\ugp > n^{-1-O(1/\log \log n)}$.
\end{lemma}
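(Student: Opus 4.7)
My plan is to reduce the desired lower bound on $\ugp$ to a lower bound on $\zgp$ via \Cref{lem:z-approx-u} (using $p<\theta$), and then obtain this $\zgp$ bound from \Cref{lem:ugp-ktau}. The main obstruction is that the hypothesis is phrased in terms of the approximate counts $\tilde k_{\pi_i}$ from greedy tree packing, while \Cref{lem:ugp-ktau} is phrased in terms of the ideal counts $k_\tau$, and \Cref{lem:tilde-ki-range} only translates between them up to a shift of two indices. I therefore split into cases based on whether $\pi_{i+2}$ exceeds $\pi^*$.

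\emph{Case A: $\pi_{i+2}<\pi^*$.} Then $\pi_{i+1}<\pi^*$, so by \Cref{lem:ideal-loads-monotone} every edge $e$ satisfies $\ell^*(e)\le 1/\pi^* < 1/\pi_{i+1}$, meaning every edge is contracted in the definition of $k_{\pi_{i+1}}$; thus $k_{\pi_{i+1}}=1$. By \Cref{lem:tilde-ki-range}, $\tilde k_{\pi_i}\le k_{\pi_{i+1}}=1$, so the hypothesis collapses to $p^{2\pi_i}>n^{-1-30\delta}$. Since $2\pi_i=(1+\delta)^i\lambda\ge\lambda$ and $p<1$, we have $p^\lambda\ge p^{2\pi_i}>n^{-1-30\delta}$, and \Cref{fact:ugp-basic-range} then gives $\ugp\ge p^\lambda>n^{-1-O(1/\log\log n)}$.

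\emph{Case B: $\pi_{i+2}\ge\pi^*$.} Here I invoke \Cref{lem:ugp-ktau} with $\tau=\pi_{i+2}$ to obtain $\zgp\ge k_{\pi_{i+2}}p^{2\pi_{i+2}}$. Combining \Cref{lem:tilde-ki-range} with the monotonicity of $\tilde k_\tau$ in $\tau$ yields $k_{\pi_{i+2}}\ge\tilde k_{\pi_{i+1}}\ge\tilde k_{\pi_i}$. Factor $p^{2\pi_{i+2}}=p^{2\pi_i}\cdot p^{2\pi_i((1+\delta)^2-1)}$; the trivial bound $\tilde k_{\pi_i}\le n$ in the hypothesis gives $p^{2\pi_i}>n^{-2-30\delta}$, so the loss factor satisfies $p^{2\pi_i((1+\delta)^2-1)}=(p^{2\pi_i})^{O(\delta)}>n^{-O(\delta)}$. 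Multiplying through gives $\zgp>\tilde k_{\pi_i}\,p^{2\pi_i}\cdot n^{-O(\delta)}>n^{-1-30\delta}\cdot n^{-O(\delta)}=n^{-1-O(\delta)}$. Since $p<\theta$, \Cref{lem:z-approx-u} then converts this to $\ugp\ge(1-1/\log n)\zgp>n^{-1-O(1/\log\log n)}$.

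The delicate step is absorbing the loss incurred by the index shift in \Cref{lem:tilde-ki-range}: translating $\tilde k_{\pi_i}$ into a bound on $k_{\pi_{i+2}}$ costs a factor of $p^{2\pi_i((1+\delta)^2-1)}$ in the $\zgp$ bound. This is tolerable only because $\delta=\Theta(1/\log\log n)$ is small enough that the cumulative loss of $n^{-O(\delta)}$ fits inside the $n^{-O(1/\log\log n)}$ slack allowed in the conclusion; a naive bound with a constant $\delta$ would fail here.
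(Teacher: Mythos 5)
Your proof is correct and follows essentially the same route as the paper: both reduce to lower-bounding $\zgp$ via \Cref{lem:ugp-ktau}, translate $\tilde k$ to $k$ via \Cref{lem:tilde-ki-range}, and then pass from $\zgp$ to $\ugp$ via \Cref{lem:z-approx-u}. The only cosmetic differences are that the paper splits on $\tilde k_{\pi_i}=1$ versus $\tilde k_{\pi_i}>1$ (deriving $\pi^*\le\pi_{i+1}$ from the second case and applying \Cref{lem:ugp-ktau} with $\tau=\pi_{i+1}$) whereas you split directly on whether $\pi_{i+2}\ge\pi^*$ and use $\tau=\pi_{i+2}$, and the paper absorbs the $O(\delta)$ loss by raising $\zgp$ to the $1/(1+\delta)$-th power whereas you factor $p^{2\pi_{i+2}}$ and invoke $\tilde k_{\pi_i}\le n$; both yield the same $n^{-1-O(\delta)}$ bound.
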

\begin{proof}
First, we consider the case when $\tilde{k}_{\pi_i} = 1$. The assumption becomes $p^{2\pi_i} > n^{-1-30\delta}$.
Since $\pi_i \ge \pi_0=\lambda/2$, this implies
$p^{\lambda}\ge p^{2\pi_i} > n^{-1-30\delta}$, which further implies $\ugp \ge n^{-1-O(1/\log \log n)}$.

Next, we consider the case when $\tilde{k}_{\pi_i} > 1$. 
Recall that $\tilde{k}_{\pi_i}$ is the number of nodes after contracting edges with $\ell^{\mathcal{T}}(e) < \frac{1}{\pi_i}$. So, in this case, there exists an edge $e$ with $\ell^{\mathcal{T}}(e) \ge \frac{1}{\pi_i}$. Since $\ell^{\mathcal{T}}(e)\approx_\delta \ell^*(e)$, we have $\pi^* \le \frac{1}{\ell^*(e)} \le (1+\delta)\pi_i = \pi_{i+1}$. 
So, we can apply \Cref{lem:ugp-ktau} to get $\zgp \ge k_{\pi_{i+1}} p^{2\pi_{i+1}}$.

    From the assumption, we have
    \[1 <  n^{1+30\delta} \tilde k_{\pi_i} p^{2 \pi_i} \le n^{1+30\delta} k_{\pi_{i+1}} p^{2 \pi_i}=n^{1+30\delta}k_{\pi_{i+1}}p^{2\pi_{i+1}/(1+\delta)},\]
    where the second inequality uses \Cref{lem:tilde-ki-range}.
    Combined with \Cref{lem:ugp-ktau}, we have   
    \begin{align*}
        (\zgp)^{1/(1+\delta)} \ge (k_{\pi_{i+1}})^{1/(1+\delta)}p^{2\pi_{i+1}/(1+\delta)}
        =(k_{\pi_{i+1}})^{-O(\delta)}k_{\pi_{i+1}}p^{2\pi_{i+1}/(1+\delta)}\ge n^{-O(\delta)}n^{-(1+30\delta)} \ge n^{-1-O(\delta)}.
    \end{align*}
    Therefore, $\zgp > n^{-1-O(\delta)}$ as claimed. Combined with \Cref{lem:z-approx-u} that $\ugp \ge (1-\frac{1}{\log n})\zgp$ when $p\le \theta$, we have $\ugp \ge n^{-1-O(1/\log \log n)}$.
\end{proof}

We are now left to show that under \Cref{def:surrogate-criteria}, we can prove the conclusion of \Cref{lem:interface-importance-zgp}; this will establish \Cref{lem:interface-importance-zgp}.
%

Define $\mathcal C$ as the collection of cuts $C$ with $\ell^{\mathcal T}(C)\le 2\log n$, and define $z'=z'_G(p)=\sum_{C\in\mathcal C}p^{|C|}$. Clearly, $z'_G(p)\le z_G(p)$. The next lemma shows that $z'_G(p)$ is a tight upper bound for $\zgp$ when the condition in \Cref{def:surrogate-criteria} is true. 

\begin{lemma}[Reliable Case]\label{lem:ignore-cuts-log-n}
If $n^{1+30\delta}\tilde k_{\pi_i}p^{2\pi_i}\le1$ for all $i\ge 0$, then $z'_G(p)\ge(1-O(1/n))z_G(p)$. 
\end{lemma}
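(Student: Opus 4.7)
The plan is to bound the ``tail'' sum $T := z_G(p) - z'_G(p) = \sum_{C:\,\ell^{\mathcal T}(C) > 2\log n} p^{|C|}$ by $O(z_G(p)/n)$. Since every edge satisfies $\ell^{\mathcal T}(e) \le 2(1+\delta)/\lambda$ (by \Cref{lem:load-range} combined with \Cref{cor:greedy-packing}), any ``bad'' cut with $\ell^{\mathcal T}(C) > 2\log n$ must have $|C| > L := \lambda \log n /(1+\delta)$. Writing $w_i := |C \cap \widetilde E_i|$, the packing inequalities also give $\sum_i w_i/\pi_i \ge \ell^{\mathcal T}(C)/(1+\delta) > 2\log n/(1+\delta)$, and substituting the hypothesis $p^{2\pi_i} \le n^{-1-30\delta}/\tilde k_{\pi_i}$ into $p^{|C|} = \prod_i p^{w_i}$ yields the pointwise bound
\[
p^{|C|} \le n^{-\log n\,(1+30\delta)/(1+\delta)} \prod_i \tilde k_{\pi_i}^{-w_i/(2\pi_i)}.
\]

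Next, I count bad cuts by classifying each by its top active level $i(C) := \max\{i : C \cap \widetilde E_i \neq \emptyset\}$; such a $C$ is a cut of the contracted graph $H_{i(C)}$, which has $\tilde k_{\pi_{i(C)}}$ vertices and min cut at least $\lambda$. Karger's cut-counting bound applied in $H_i$ yields at most $\tilde k_{\pi_i}^{2s/\lambda}$ cuts of size $\le s$, so
\[
T \;\le\; \sum_i \sum_{s > L} \tilde k_{\pi_i}^{2s/\lambda}\,p^s \;=\; \sum_i \sum_{s > L} r_i^s, \qquad r_i := \tilde k_{\pi_i}^{2/\lambda}\,p.
\]
Substituting $p^s = (p^{2\pi_i})^{s\mu_i/\lambda}$ with $\mu_i := \lambda/(2\pi_i) = (1+\delta)^{-i}$ and applying the hypothesis level by level, $r_i^L$ decays as $n^{-\Omega(\log n)}$, and summing the $L = O(\log\log n)$ per-level contributions yields $T \le n^{-\omega(1)}$. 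Finally, by \Cref{lem:ugp-ktau} and \Cref{lem:tilde-ki-range}, $z_G(p) \ge k_{\pi_i} p^{2\pi_i} \ge \tilde k_{\pi_{i-1}} p^{2\pi_i}$ for each $i$; taking the level at which the hypothesis is near-tight gives $z_G(p) = \Omega(n^{-1-O(\delta)})$, so $T/z_G(p) \le O(1/n)$ as desired.

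The main obstacle I expect is the per-level constant tracking. The ratio $r_i$ may exceed $1$ for the higher levels $i$ where $\tilde k_{\pi_i}$ approaches $n$; in that regime the Karger count $\tilde k_{\pi_i}^{2s/\lambda}$ must be supplemented (for large $s$) by the trivial cut count $2^{\tilde k_{\pi_i}}$, with the per-cut bound $p^{|C|} \le n^{-(1+\Omega(\delta))\log n}$ carrying the argument. The $n^{1+30\delta}$ slack in the hypothesis, together with the level cap $L = O(\log\log n)$ that keeps $\mu_i$ bounded away from $0$, are chosen precisely to leave the $O(1/n)$ margin once all pieces have been summed.
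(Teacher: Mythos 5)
Your setup and pointwise bound are correct: the hypothesis per level does give $p^{|C|} \le n^{-(1+30\delta)\sum_i w_i/(2\pi_i)} \prod_i \tilde k_{\pi_i}^{-w_i/(2\pi_i)}$, and $\sum_i w_i/(2\pi_i) > \log n / (1+\delta)$ for bad cuts. But the cut-counting step is where the proof breaks down, and the fix you anticipate needing is not available.

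Classifying a bad cut by its top active level $i(C)$ and applying Karger's bound in $H_{i(C)}$ gives the count $\tilde k_{\pi_{i(C)}}^{2s/\lambda}$, which charges \emph{every} edge of the cut at the top-level rate $\tilde k_{\pi_{i(C)}}$, whereas the pointwise gain is $\tilde k_{\pi_i}^{-w_i/(2\pi_i)}$ \emph{per level} $i$. These do not cancel. Concretely, write $r_i = \tilde k_{\pi_i}^{2/\lambda}p$; then $r_i^\lambda = \tilde k_{\pi_i}^2 p^\lambda$, and the hypothesis at level $i$ only gives $p^\lambda \le (n^{-1-30\delta}/\tilde k_{\pi_i})^{1/(1+\delta)^i}$. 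Since $(1+\delta)^i$ ranges up to a constant $\approx 6$ (as $L = O(1/\delta)$ with $\pi_L \approx 3\lambda$), for high levels this yields only $r_i^\lambda \le \tilde k_{\pi_i}^{2-1/6}n^{-(1+30\delta)/6}$, which exceeds $1$ whenever $\tilde k_{\pi_i}$ is, say, $n^{1/2}$ or larger. That regime is entirely consistent with the hypothesis (e.g.\ $\tilde k_{\pi_0}=1$, $\tilde k_{\pi_i}=n$ for higher $i$, and $p^\lambda\approx n^{-2+\Theta(\delta)}$), and there your geometric series $\sum_{s>L} r_i^s$ diverges. The fallback you suggest --- the trivial count $2^{\tilde k_{\pi_i}}$ against the bound $p^{|C|}\le n^{-(1+\Omega(\delta))\log n}$ --- gives $2^n\cdot n^{-\Theta(\log n)}$, which is exponentially large, not $O(1/n)$. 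The ``level cap that keeps $\mu_i$ bounded away from $0$'' does not help: $\mu_i = 1/(1+\delta)^i$ being as small as $\approx 1/6$ is precisely what makes the exponent fall short.

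The way the paper escapes this is by counting cuts by their \emph{full level profile} $(\alpha_i)_i$, not just the top level. Using the ideal tree packing, for a fixed cut $C$ with profile $(\alpha_i)$ a constant fraction of trees $T$ satisfy $|C_i\cap T| \le (1+\delta/2)\alpha_i$ at every level simultaneously, so the number of cuts with a given profile is at most $O(1/\delta)^{O(1/\delta)}\prod_i k_{\pi_i}^{(1+\delta/2)\alpha_i}$. This count now has the same multiplicative per-level structure as the pointwise bound, and the crucial $\max\{k_{\pi_i},\sqrt n\}$ trick lets the product telescope to $\prod_i(\max\{k_{\pi_i},\sqrt n\})^{-\delta\alpha_i} \le n^{-\delta\alpha/2}$, which is superpolynomially small once $\alpha \ge \log n$. (The paper also separates out the easy sub-case $p^\lambda \le n^{-2}/\log n$, where plain Karger counting does suffice --- that is the only regime in which your top-level argument actually converges.) Without the per-level-profile count the argument does not close, so this is a genuine gap, not a constant-tracking nuisance.
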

\begin{proof}
Equivalently, we need to prove that $z_G(p)-z'_G(p)\le O(1/n)z_G(p)$, where $z_G(p)-z'_G(p)=\sum_{C\notin\mathcal C}p^{|C|}$. Define $\mathcal C'$ as the collection of cuts $C$ with $\ell^*(C)\le\log n$. Since $\ell^{\mathcal T}(e)\approx_\delta\ell^*(e)$ for all $e\in E$, any cut with $\ell^*(C)\le\log n$ also satisfies $\ell^{\mathcal T}(C)\le2\log n$, so $\mathcal C'\subseteq\mathcal C$. So it suffices to show that $\sum_{C\notin\mathcal C'}p^{|C|}\le O(1/n)z_G(p)$.
We have two cases depending on whether $p^\lambda > n^{-2}/\log n$.

Assume first that $p^\lambda > n^{-2}/\log n$. Consider a fixed cut $C$, and let $C_i$ be the cut edges on level $i$.
Since the min-ratio cut algorithm sets $\ell^*(e)=\frac1{\pi_i}$ for all edges $e\in C_i$, we have $|C|=\sum_i\ell^*(C_i)\pi_i$. From the assumption, we have $p^{\pi_i}\le(\max\{k_{\pi_i},\sqrt n\})^{-(1+\delta)}$ for all $i$. We bound
\begin{align}
    \sum_{C\notin\mathcal C'}p^{|C|} = \sum_{C:\ell^*(C) > \log n} p^{|C|} &=  \sum_{C:\ell^*(C) > \log n} \prod_i p^{\ell^*(C_i) \pi_i}\nonumber
    \\&\le\sum_{C:\ell^*(C) > \log n}  \prod_i (\max\{k_{\pi_i},\sqrt n\})^{-(1+\delta)\ell^*(C_i)}\label{eq:zL}
.
\end{align}
We now bound the number of cuts $C$ with $\ell^*(C)=\alpha$ for a given $\alpha$.
For each cut $C$, if we take a sample $T$ from the ideal tree packing, we obtain $\mathbb E[|C\cap T|]=\ell^*(C_i)$. By Markov's inequality, with probability at most $(1+\delta/2)^{-1}$ the sampled tree $T$ satisfies $|C_i\cap T|>(1+\delta/2)\ell^*(C_i)$. So with probability $\Omega(\delta)$, we have $|C_i\cap T|\le(1+\delta/2)\ell^*(C_i)$. We can sample a tree from the ideal tree packing such that the $k_{\pi_i}-1$ edges of the tree on each level $i$ are independent. So with probability $\Omega(\delta)^{O(1/\delta)}$, we have $|C_i\cap T|\le(1+\delta/2)\ell^*(C_i)$ for all $i$. 
This probability is for a random tree in packing. So, if the tree packing has size at least $O(1/\delta)^{O(1/\delta)}$, then there exists a tree $T$ to make it happen. It follows that for any collection of integers $\{\alpha_i\}_i$ there are at most $O(1/\delta)^{O(1/\delta)}\cdot\prod_ik_{\pi_i}^{(1+\delta/2)\alpha_i}$ many cuts $C$ with $|C_i\cap T|=\alpha_i$ for all $i$. Continuing from (\ref{eq:zL}), we union bound over all $\{\alpha_i\}_i$ summing to $\alpha\ge\log n$ to get
\begin{align*}
\sum_{C\notin\mathcal C'}p^{|C|}&\le\sum_{\alpha\ge\log n}\sum_{\sum_i\alpha_i=\alpha} O(1/\delta)^{O(1/\delta)}\cdot\prod_ik_{\pi_i}^{(1+\delta/2)\alpha_i}\cdot \prod_i (\max\{k_{\pi_i},\sqrt n\})^{-(1+\delta)\alpha_i}
\\&\le\sum_{\alpha\ge\log n}\sum_{\sum_i\alpha_i=\alpha}O(1/\delta)^{O(1/\delta)}\cdot\prod_i(\max\{k_{\pi_i},\sqrt n\})^{-\delta\alpha_i}
\\&\le\sum_{\alpha\ge\log n}\sum_{\sum_i\alpha_i=\alpha}O(1/\delta)^{O(1/\delta)}\cdot\prod_i(\sqrt n)^{-\delta\alpha_i}
\\&\le\sum_{\alpha\ge\log n}\sum_{\sum_i\alpha_i=\alpha}O(1/\delta)^{O(1/\delta)}\cdot n^{-\delta\alpha/2}.
\end{align*}
For a fixed integer $\alpha$, there are at most $\binom{\alpha}{O(1/\delta)}$ ways to write it as $\alpha=\sum_i\alpha_i$, so the expression above is at most
\[ \sum_{\alpha\ge\log n}\alpha^{O(1/\delta)}\cdot O(1/\delta)^{O(1/\delta)}\cdot n^{-\delta\alpha/2}\le(\log  n)^{O(1/\delta)}\cdot O(1/\delta)^{O(1/\delta)}\cdot n^{-\delta(\log n)/2}\le n^{o(1)-\frac{\log n}{2\log \log n}}\le O(1/n)\cdot p^\lambda, \]
concluding the case $p^\lambda > n^{-2}/\log n$.

Next, assume that $p^\lambda \le n^{-2}/\log n$. Observe that any cut $C$ with $\ell^*(C)\ge\log n$ satisfies $|C| = \sum_i |C_i| \ge \sum_i \ell(C_i)\pi_i\ge\sum_i\ell(C_i)\pi^*\ge\log n \cdot \lambda/2$, i.e., it is a $(\frac12\log n)$-approximate minimum cut.
We use $z\ge p^\lambda$ and the standard cut-counting argument to get
    \begin{align*}
    \frac1z\sum_{C\notin\mathcal C'}p^{|C|} 
    &\le p^{-\lambda}\sum_{j \ge\frac12\log n}n^{2(j+1)}p^{j\lambda}
    = n^4\sum_{j \ge\frac12\log n}n^{2(j-1)}p^{(j-1)\lambda}\\
    &= n^4\sum_{j \ge\frac12\log n}(n^2p^\lambda)^{j-1}
    \le n^4\cdot O\left(\frac{1}{\log n}\right)^{\log n/2},
    \end{align*}
    which is at most $O(1/n)$, concluding the proof.
\end{proof}

To prove \Cref{lem:interface-importance-zgp}, we are left to give an algorithm for an estimator of $z'_G(p)$ and bound its variance. We do this over the next two subsections.



\subsection{Cut Sampling Algorithm on Tree Packing}\label{sec:importance}
We now give a cut sampling algorithm on the approximate tree packing that we described above. 
The goal of this algorithm is to design an estimator for $z'_G(p)$.

First, we define a distribution $q$ on the cuts of the graph.
The algorithm will sample a cut $C$ according to this distribution $q$, i.e. each cut $C$ is sampled with probability $q(C)$, and then compute the estimator
\[ X=\begin{cases} p^{|C|}/q(C) &\text{ if }\ell^{\mathcal T}(C)\le2 \log n,\\
0&\text{ otherwise.}\end{cases}\]

Our high-level plan is to establish the following properties:
\begin{enumerate}
    \item After $\tilde{O}(m)$ preprocessing time, we can sample a cut $C$ according to $q$ and compute  $X$ in $\tilde{O}(1)$ time.
    \item The relative variance of $X$ is $n^{1+o(1)}$.
\end{enumerate}
If these hold, then by taking average of $n^{1+o(1)}$ samples, we obtain an estimator for $z'_G(p)$ with relative variance $O(1)$ in $\tilde O(m)+n^{1+o(1)}$ time by \Cref{fact:rel-var-decrease}. 
(Recall that by \Cref{lem:ignore-cuts-log-n}, this gives an estimator for $\zgp$ with relative bias $O(1/n)$, which establishes \Cref{lem:interface-importance-zgp}.)

Property 1 essentially follows from prior work~\cite{CenHLP24} (given later in \Cref{lem:data-structure}).
We focus on satisfying Property 2 first. 
In fact, we show a stronger property. Define the \emph{likelihood ratio} of a cut $C$ as $\rho(C) = \frac{p^{|C|}}{q(C)\cdot z'_G(p)}$, i.e., the ratio of the fractional contribution $p^{|C|}/z'_G(p)$ of cut $C$ to $z'_G(p)$ and the probability $q(C)$ of being sampled. Then,

\begin{enumerate}
\item[2'.] For each cut $C\in\mathcal C$, we have $\rho(C)\le n^{1+o(1)}$.
\end{enumerate}
Property 2' implies property 2 because of the following lemma.
\begin{lemma}\label{lem:relative-variance}
    Suppose $z'=\sum_{C\in \mathcal{C}} p^{|C|}$ for some collection of cuts $\mathcal{C}$, and $X$ is an importance sampling estimator such that $X=p^{|C|}/q(C)$ when $C\in \mathcal{C}$ and $X=0$ otherwise. Here, $q(C)$ is the probability of sampling $C$.
    Then, the relative variance of $X$ is upper bounded by the maximum likelihood ratio $\displaystyle\max_{C\in\mathcal{C}} \frac{p^{|C|}}{q(C)\cdot z'}$.
\end{lemma}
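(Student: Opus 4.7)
The plan is to establish the bound through a direct computation of the first and second moments of $X$ and a simple factoring argument. First I would compute $\E[X]$: since $X=p^{|C|}/q(C)$ on $\mathcal{C}$ and $0$ elsewhere, summing over cuts weighted by their sampling probabilities gives $\E[X] = \sum_{C\in\mathcal{C}} q(C)\cdot p^{|C|}/q(C) = \sum_{C\in\mathcal{C}} p^{|C|} = z'$. So $X$ is in fact an unbiased estimator of $z'$, and its relative second moment is $\E[X^2]/(z')^2$.

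Next I would compute $\E[X^2] = \sum_{C\in\mathcal{C}} q(C)\cdot (p^{|C|}/q(C))^2 = \sum_{C\in\mathcal{C}} p^{2|C|}/q(C)$. The key step is to rewrite this sum as
\[
\frac{\E[X^2]}{(z')^2} \;=\; \sum_{C\in\mathcal{C}} \frac{p^{|C|}}{z'}\cdot\frac{p^{|C|}}{q(C)\cdot z'} \;\le\; \left(\max_{C\in\mathcal{C}}\frac{p^{|C|}}{q(C)\cdot z'}\right)\cdot\sum_{C\in\mathcal{C}} \frac{p^{|C|}}{z'} \;=\; \max_{C\in\mathcal{C}}\frac{p^{|C|}}{q(C)\cdot z'},
\]
where in the last step I used $\sum_{C\in\mathcal{C}} p^{|C|} = z'$. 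Since the relative variance equals the relative second moment minus $1$, we conclude $\eta[X]\le \max_{C\in\mathcal{C}}\rho(C)$, as claimed.

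This proof has no real obstacles — the only thing to be careful about is interpreting the bound on relative variance versus relative second moment (the statement could alternatively be read as bounding the relative second moment, which makes the inequality tight without the $-1$). Both interpretations follow from the same one-line factoring, so no further subtlety is required.
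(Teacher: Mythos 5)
Your proof is correct and follows essentially the same argument as the paper: both bound the relative second moment $\E[X^2]/\E^2[X]$ by factoring out $\max_C p^{|C|}/q(C)$ (equivalently, using $\E[X^2]\le(\max X)\E[X]$), then drop the $-1$ from the relative variance. The paper states the bound slightly more abstractly via $\max X$ while you unfold the sum explicitly, but there is no substantive difference.
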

\begin{proof}
%
The relative variance is defined as
\begin{align*} 
\frac{\mathbb E[X^2]}{\mathbb E^2[X]}-1
&\le\frac{\mathbb E[X^2]}{\mathbb E^2[X]}
\le\frac{(\max X)\mathbb E[X]}{\mathbb E^2[X]}
=\frac{\max X}{\mathbb E[X]}\\
&=\frac{\displaystyle\max_{C\in\mathcal{C}}p^{|C|}/q(C)}{\sum_{C\in \mathcal{C}}q(C)\cdot \frac{p^{|C|}}{q(C)}}
=\max_{C\in C} \frac{p^{|C|}}{q(C)\cdot z'}.\qedhere
\end{align*}

\end{proof}



\medskip\noindent{\bf Sampling Algorithm on Approximate Tree Packing.}
%
Our goal in this section is to describe a distribution $q$ on cuts such that the likelihood ratio $\rho(C) = \frac{p^{|C|}}{q(C)\cdot z'_G(p)}$ is $n^{1+o(1)}$ for all cuts $C$ with  $\ell^{\mathcal T}(C) \le 2\log n$. Note that the cuts $C$ with $\ell^{\mathcal T}(C)>2\log n$ contribute only an $O(1/n)$ fraction of $z_G(p)$ by \Cref{lem:ignore-cuts-log-n}.

The distribution $q$ is defined by a process that samples a random cut.
%
For computational reasons, we first sparsify the tree packing $\mathcal T$ into another tree packing $\widetilde{\mathcal T}$ by uniformly sampling $\tilde O(1)$ trees in $\mathcal T$ with replacement. Using the implicit representation of $\mathcal T$, we can explicitly recover all trees in $\widetilde{\mathcal T}$ in $\tilde O(n)$ time.

Now, we uniformly sample a random tree $T$ from the sparsified packing $\widetilde{\mathcal T}$, and also uniformly sample random integers $j_1,j_2,\ldots,j_L$ conditioned on $\sum_ij_i\le 8\log n$ and $j_i\le|E(T)\cap\widetilde E_i|$ for all $i\in[L]$. Then, we uniformly sample $j_i$ edges from $E(T)\cap\widetilde E_i$ (without replacement). The sampled edges uniquely define a vertex bi-partition such that the set of edges in $T$ crossing the bi-partition is exactly the set of sampled edges. We output the cut corresponding to this bi-partition.



The next lemma bounds the likelihood ratio $\frac{p^{|C|}}{q(C)\cdot z'_G(p)}$ of this sampling procedure for the cuts involved in $z'_G(p)$, under the additional assumption that $p^\lambda\ge n^{-1000}.$\footnote{We can use any sufficiently large constant instead of $1000$ here.} In \Cref{sec:very-reliable}, we give a separate algorithm that handles the $p^\lambda<n^{-1000}$ case.

\begin{lemma}\label{lem:reliable-main}
    Assume that $n^{1+30\delta}\tilde k_{\pi_i}p^{2\pi_i}\le1$ for all $i$, and that $p^\lambda\ge n^{-1000}$.
    With high probability, for each cut $C$ with $\ell^{\mathcal T}(C) \le 2\log n$, the likelihood ratio of $C$ is $(\log n)^{O(1/\delta)}n^{1+O(\delta)}$.
\end{lemma}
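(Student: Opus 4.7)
The plan is to lower bound $q(C)$ for each $C\in\mathcal C$ by exhibiting a single ``good'' tree $T^\star\in\widetilde{\mathcal T}$ that is well-aligned with $C$, and then turn the surrogate criterion together with \Cref{lem:ugp-ktau} into an upper bound on $\rho(C)=p^{|C|}/(q(C)\,z'_G(p))$.

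First, I would show that, with high probability over $\widetilde{\mathcal T}$, for every cut $C\in\mathcal C$ there exists a good tree $T^\star\in\widetilde{\mathcal T}$ such that $|C\cap T^\star|\le 8\log n$ and $a_i(T^\star):=|C\cap E(T^\star)\cap\widetilde E_i|$ is, for every level $i$, within an $O(L)$-factor of its expectation $\ell^{\mathcal T}(C_i)$ over a uniform draw from $\mathcal T$. Both bounds follow from Markov's inequality (with integrality of $a_i$ used to handle the sub-unit $\ell^{\mathcal T}(C_i)$ regime) together with a per-level union bound, and they extend to all cuts via a union bound over the at most $n^{O(\log n)}$ cuts with $\ell^{\mathcal T}(C)\le 2\log n$ (justified by the standard cut-counting bound together with $|C|\le O(\lambda\log n)$, which is implied by $\ell^{\mathcal T}(C)\le 2\log n$ and \Cref{lem:load-range}).

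Given $T^\star$, I would lower bound $q(C)\ge|\widetilde{\mathcal T}|^{-1}\cdot\Pr[j_i=a_i(T^\star)\,\forall i]\cdot\prod_i\binom{n_i(T^\star)}{a_i(T^\star)}^{-1}$, where $n_i(T^\star):=|E(T^\star)\cap\widetilde E_i|$. The tuple-sampling factor is at least $(\log n)^{-O(1/\delta)}$ because the number of valid tuples $(j_1,\ldots,j_L)$ with $\sum j_i\le 8\log n$ is at most $\binom{8\log n+L}{L}=(\log n)^{O(1/\delta)}$. By \Cref{lem:tilde-ki-range}, $n_i(T^\star)\le\tilde k_{\pi_{i+2}}-1$; rewriting $\tilde k_{\pi_{i+2}}^{-a_i}=(\tilde k_{\pi_{i+2}}p^{2\pi_{i+2}})^{-a_i}\cdot p^{2\pi_{i+2}a_i}$ and applying the surrogate criterion $\tilde k_{\pi_{i+2}}p^{2\pi_{i+2}}\le n^{-(1+30\delta)}$ yields
\[ q(C)\ge\frac{(\log n)^{-O(1/\delta)}}{|\widetilde{\mathcal T}|}\cdot n^{(1+30\delta)|C\cap T^\star|}\cdot p^{2(1+\delta)^2\sum_i\pi_ia_i(T^\star)}. \]

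Finally, I would combine this with $z'_G(p)\ge(1-O(1/n))\tilde k_{\pi_j}p^{2\pi_{j+1}}$ (from \Cref{lem:ugp-ktau} and \Cref{lem:tilde-ki-range} for any $j$), picking $j$ so that the $p$-exponents in $\rho(C)$ collapse up to an $n^{O(\delta)}$ slack. This uses the good-tree identity $\sum_i\pi_ia_i(T^\star)=\Theta(|C|)$, obtained from $\pi_i\ell^{\mathcal T}(C_i)\in[|C_i|,(1+\delta)|C_i|)$, together with the assumption $p^\lambda\ge n^{-1000}$ and $|C|\le O(\lambda\log n)$ to absorb any residual $p^{O(\delta)|C|}$ factor. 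The hard part is this final cancellation: the residual $n$-factor of the form $n^{(1+30\delta)|C\cap T^\star|}/\tilde k_{\pi_j}$, arising from the $q(C)$ and $z'_G(p)$ bounds, must be shown to be $\le n^{1+O(\delta)}$; this requires a careful per-level treatment of the Case~A/Case~B distinction (whether the trivial bound $\tilde k_{\pi_i}\le n$ or the surrogate bound $\tilde k_{\pi_i}\le n^{-(1+30\delta)}p^{-2\pi_i}$ is the binding one), plus a choice of $j$ that amortizes across levels so the $O(L)$ good-tree slack does not propagate into the main $n$-exponent.
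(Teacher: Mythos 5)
Your outline has the right general shape (exhibit a good tree in $\widetilde{\mathcal T}$, convert the surrogate criterion to a $p$-power, cancel against a lower bound on $z'_G(p)$), but it is missing the two technical devices on which the paper's argument actually turns, and your proposed substitutes do not work.

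\emph{The good-tree condition.} You propose per-level Markov followed by a union bound over the $L=O(1/\delta)$ levels. This forces a tradeoff that is fatal here: if you want a tight $(1+\delta)$-factor slack per level (which the final cancellation needs), the per-tree success probability becomes $\Omega(\delta)^{O(1/\delta)}$, and over a pool $\widetilde{\mathcal T}$ of only $\tilde O(1)$ trees this is \emph{not} a high-probability event. If you instead settle for an $O(L)$-factor slack per level to get constant per-tree probability, that slack re-enters the exponent of $\tilde k'_{\pi_{i+2}}$ in your lower bound on $q(C)$, producing a final $\rho(C)=n^{\Theta(1/\delta)}$ rather than $n^{1+O(\delta)}$. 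The paper sidesteps this by applying Markov \emph{once} to the weighted sum $\sum_i j^*_i\log\tilde k'_{\pi_{i+2}}$ and obtaining the $(1+\delta)$-factor bound at probability $\Omega(\delta)$; this is the combination of tight slack and adequate per-tree probability that a per-level union bound cannot reproduce.

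\emph{The cancellation against $z'_G(p)$.} Fixing a single level $j$ and using $z'_G(p)\gtrsim\tilde k_{\pi_j}p^{2\pi_{j+1}}$ does not give the claimed collapse. Your plan to ``absorb any residual $p^{O(\delta)|C|}$ factor'' via $p^\lambda\ge n^{-1000}$ and $|C|\le O(\lambda\log n)$ is quantitatively wrong: $p^{-O(\delta)|C|}$ can be as large as $p^{-O(\delta\lambda\log n)}=n^{O(\delta\log^2 n)}$, which is $n^{\omega(1)}$ even for $\delta=\Theta(1/\log\log n)$. Worse, a constant-factor Markov slack in $\sum_i\pi_i a_i(T^\star)$ (which is all a single Markov application gives) makes the residual $p$-exponent $\Theta(|C|)$ rather than $O(\delta|C|)$, which is catastrophic. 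The paper avoids both problems by writing $\frac1z=\prod_i z^{-\ell^{\mathcal T}(C_i)/\ell^{\mathcal T}(C)}$ and applying the $k_\tau$ lower bound \emph{separately at each level}, so the residual $(\tilde k'_{\pi_{i+2}}p^{\pi_{i+2}})^{-K\delta\cdot\ell^{\mathcal T}(C_i)/\ell^{\mathcal T}(C)}$ factors multiply to $n^{O(\delta)}$ because the fractional exponents $\ell^{\mathcal T}(C_i)/\ell^{\mathcal T}(C)$ sum to $1$ — this is where $p^\lambda\ge n^{-1000}$ is actually used, and only at the level of a total exponent $O(\delta)$, not $O(\delta|C|)$. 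You would also need the floor $\tilde k'_{\pi_i}=\max\{\tilde k_{\pi_i},\sqrt n\}$ both to make the weighted Markov usable (it forces $\log\tilde k'_{\pi_i}\ge\tfrac12\log n$, giving $\sum_ij^*_i\le 8\log n$) and to make the surrogate criterion translate cleanly into $p^{\pi_{i+2}}\le(\tilde k'_{\pi_{i+2}})^{-(1+15\delta)}$; this floor is not part of your proposal.

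Finally, the paper treats $\ell^{\mathcal T}(C)<2-\delta$ as a separate case with a direct one-edge argument, since the exponent $\frac1{(1+\delta)^3}-\frac{2(1+\delta)}{\ell^{\mathcal T}(C)}$ is then badly negative and the main calculation fails; your sketch does not address this regime.
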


\begin{proof}
We first establish a property of the sparsified $\widetilde{\mathcal T}$ that holds with high probability. We define a certain condition (namely~(\ref{eq:ji-log-ki-bound})) on each cut $C$ which we show holds with probability $\Omega(\delta)$ for a random tree $T\in\mathcal T$. 
Since the sparsified $\widetilde{\mathcal T}$ consists of $\tilde{O}(1)$ many random trees in $\mathcal T$, the probability that at least one $T\in\widetilde{\mathcal T}$ satisfies the condition is at least $1-n^{-10\log n}$. Since there are only $|\mathcal T|\cdot O(n^{8\log n})$ many possible cuts $C$ that can be returned by the sampler, by a union bound we conclude that at least one tree $T\in\mathcal T$ satisfies the condition for all cuts $C$ we need to consider.

Now consider a fixed cut $C\in\mathcal C$, and define $C_i=C\cap\widetilde E_i$. Sample a tree $T\in\widetilde{\mathcal T}$, and define $j^*_i=|C_i\cap E(T)|$ for all $i\in[L]$.

For all $i$, define $\tilde k'_{\pi_i} = \max\{\tilde k_{\pi_i}, \sqrt{n}\}$.
By \Cref{lem:tilde-ki-range}, 
\begin{gather}
\prod_i (\tilde k_{\pi_i}(T))^{j^*_i} \le \prod_i (\tilde k_{\pi_{i+2}})^{j^*_i}
\le \prod_i (\tilde k'_{\pi_{i+2}})^{j^*_i}
= \exp(\sum_i j^*_i \log \tilde k'_{\pi_{i+2}}).\label{eq:number-of-samples}
\end{gather}
For a given cut $C$, define $C_i=C\cap\widetilde E_i$. By definition, the load $\ell^{\mathcal T}(e)$ is the probability that a random tree $T\in\mathcal T$ contains edge $e$. Summing over all edges of $C_i$ and applying linearity of expectation, we obtain $\mathbb E_{T\sim\mathcal T}[j^*_i] = \ell^{\mathcal T}(C_i)$; note that this expectation is over a sample $T$ drawn from $\mathcal T$ instead of $\widetilde{\mathcal T}$. Summing $j^*_i\log \tilde k'_{\pi_{i+2}}$ over all $i$, we obtain
\[\mathop{\mathbb E}_{T\sim\mathcal T}\left[\sum_i j^*_i \log \tilde k'_{\pi_{i+2}}\right]
= \sum_i \ell^{\mathcal T}(C_i)\log \tilde k'_{\pi_{i+2}}.
\]
By Markov's inequality, at least a $\Omega(\delta)$ fraction of trees in $\mathcal T$ satisfy 
\begin{equation}\label{eq:ji-log-ki-bound}
    \sum_i j^*_i \log \tilde k'_{\pi_{i+2}} \le (1+\delta)\sum_i \ell^{\mathcal T}(C_i)\log \tilde k'_{\pi_{i+2}}.
\end{equation}
By our guarantee of $\widetilde{\mathcal T}$, at least one tree $T\in\widetilde{\mathcal T}$ also satisfies (\ref{eq:ji-log-ki-bound}).
In this case, we obtain
\[\frac 12 \log n \sum_i j^*_i \le \sum_i j^*_i \log \sqrt{n}\le \sum_i j^*_i \log \tilde k'_{\pi_{i+2}} \le (1+\delta)\sum_i \ell^{\mathcal T}(C_i) \log \tilde k'_{\pi_{i+2}} \le (1+\delta) \ell^{\mathcal T}(C) \log n,\]
which implies that $\sum_ij^*_i\le2(1+\delta)\ell^{\mathcal T}(C)$.
Since $\ell^{\mathcal T}(C)\le2 \log n$ and $\delta\le1$ by assumption, we have $\sum_i j^*_i \le8\log n$. In particular, it is possible for the algorithm to sample $j_i=j^*_i$ for all $i$. The probability that this occurs is the inverse of the number of combinations $j_1,j_2,\ldots,j_L$ satisfying $\sum_ij_i\le 8\log n$ and $j_i\le|E(T)\cap\widetilde E_i|$ for all $i\in[L]$. The first inequality alone limits the number of combinations to $(8\log n)^{O(L)}=(8\log n)^{O(1/\delta)}$, so the probability that $j_i=j^*_i$ for all $i$ is $(8\log n)^{-O(1/\delta)}$.

Conditioned on $j_i=j^*_i$ for all $i$, the cut $C$ is sampled precisely when we sample exactly the edges $C_i\cap E(T)$ from $\widetilde E_i\cap E(T)$ for each $i$. The total number of possible samples (over all $i$) is
\begin{align*}
\prod_i k_{\pi_i}^{j_i}=\prod_i k_{\pi_i}^{j^*_i}&\stackrel{(\ref{eq:number-of-samples})}\le  \exp(\sum_i j^*_i \log \tilde k'_{\pi_{i+2}})
 \\&\stackrel{(\ref{eq:ji-log-ki-bound})}\le \exp\bigg( (1+\delta)\sum_i \ell^{\mathcal T}(C_i)\log \tilde k'_{\pi_{i+2}}\bigg)
\\&=\prod_i(\tilde k'_{\pi_{i+2}})^{(1+\delta)\ell^{\mathcal T}(C_i)}
.
\end{align*}
Multiplying the probabilities together, the probability of sampling a tree $T\in\widetilde{\mathcal T}$ satisfying (\ref{eq:ji-log-ki-bound}), and then sampling $j_i=j^*_i$ for all $i$, and then sampling exactly the edges $C_i\cap E(T)$ from $\widetilde E_i\cap E(T)$ for each $i$ is at least
\[ \frac1{|\widetilde{\mathcal T}|} \cdot (8\log n)^{-O(1/\delta)} \cdot \prod_i(\tilde k'_{\pi_{i+2}})^{-(1+\delta)\ell^{\mathcal T}(C_i)} .\]
This is a lower bound for the probability $q(C)$ of sampling cut $C$. Note that since $|\widetilde{\mathcal T}|=\tilde O(1)$, the $1/|\widetilde{\mathcal T}|$ term can be absorbed into $(8\log n)^{-O(1/\delta)}$.

Finally, we bound the likelihood ratio 
\[\rho(C) = \frac{p^{|C|}}{q(C) \cdot z'}
    \le \frac{1}{z'} \cdot p^{|C|}\cdot(8\log n)^{O(1/\delta)}\cdot
    \prod_i (\tilde k'_{\pi_{i+2}})^{(1+\delta) \ell^{\mathcal T}(C_i)} .\]
Notice that
\[|C| = \sum_i |C_i| = \sum_i\sum_{e\in C_i}1\ge\sum_i\sum_{e\in C_i}\ell^{\mathcal T}(e)\pi_{i-1}
= \sum_i \ell^{\mathcal T}(C_i) \pi_{i-1} ,\]
so
\[\rho(C) \le \frac1{z'}\cdot\prod_ip^{\ell^{\mathcal T}(C_i)\pi_{i-1}}\cdot (8\log n)^{O(1/\delta)}\cdot
    \prod_i (\tilde k'_{\pi_{i+2}})^{(1+\delta) \ell^{\mathcal T}(C_i)} .\]

Recall from \Cref{lem:ugp-ktau} that $z\ge k_{\pi_{i+3}}p^{2\pi_{i+3}}$ for all $i$. We have $k_{\pi_{i+3}}\ge\tilde k_{\pi_{i+2}}$ by \Cref{lem:tilde-ki-range}, so we can write
\[ \frac1{z}=\prod_iz^{-\ell^{\mathcal T}(C_i)/\ell^{\mathcal T}(C)}\le\prod_i\big(\tilde k_{\pi_{i+2}}p^{2\pi_{i+3}}\big)^{-\ell^{\mathcal T}(C_i)/\ell^{\mathcal T}(C)}=\prod_i\big(\tilde k_{\pi_{i+2}}p^{2(1+\delta)\pi_{i+2}}\big)^{-\ell^{\mathcal T}(C_i)/\ell^{\mathcal T}(C)}, \]
and
\begin{align}
    \rho(C) &\le (8\log n)^{O(1/\delta)}\cdot \prod_i \left((\tilde k_{\pi_{i+2}}p^{2(1+\delta)\pi_{i+2}})^{-1/\ell^{\mathcal T}(C)}\cdot p^{\pi_{i-1}}\cdot(\tilde k'_{\pi_{i+2}})^{1+\delta}\right)^{\ell^{\mathcal T}(C_i)}\nonumber\\
    &=(8\log n)^{O(1/\delta)}\cdot \prod_i\left((\tilde k_{\pi_{i+2}})^{-1/\ell^{\mathcal T}(C)}\cdot p^{\pi_{i+2}\left(\frac1{(1+\delta)^3} - \frac{2(1+\delta)}{\ell^{\mathcal T}(C)}\right)}\cdot(\tilde k'_{\pi_{i+2}})^{1+\delta}\right)^{\ell^{\mathcal T}(C_i)}.\label{eq:rho-start}
\end{align}
Recall our assumption $n^{1+30\delta}\tilde k_{\pi_i}p^{2\pi_i}\le1$ for all $i$.
Notice that
\begin{align*}
(\tilde k'_{\pi_{i+2}})^{1+15\delta}=\max\{\tilde k_{\pi_{i+2}}, \sqrt{n}\}^{1+15\delta}
&\le \left(\sqrt{n\tilde k_{\pi_{i+2}}}\right)^{1+15\delta}
= \sqrt{n^{1+15\delta}(k_{\pi_{i+2}})^{1+15\delta}}
\le \sqrt{n^{1+30\delta} k_{\pi_{i+2}}} \le p^{-\pi_{i+2}}.
\end{align*}
So, $p^{\pi_{i+2}}\le (\tilde k'_{\pi_{i+2}})^{-(1+15\delta)}$.
Let us first assume that $\ell^{\mathcal T}(C) \ge 2-\delta$, so the exponent $\frac1{(1+\delta)^3} - \frac{2(1+\delta)}{\ell^{\mathcal T}(C)}$ of $p^{\pi_{i+2}}$ is, for large enough constant $K>0$,
\[ \frac1{(1+\delta)^3} - \frac{2(1+\delta)}{\ell^{\mathcal T}(C)} = \frac1{(1+\delta)^3} -\frac{2(1+\delta)-K\delta}{\ell^{\mathcal T}(C)}-\frac{K\delta}{\ell^{\mathcal T}(C)} \ge \frac1{(1+\delta)^3}-\frac{2(1+\delta)-K\delta}{2-\delta}-\frac{K\delta}{\ell^{\mathcal T}(C)}\ge-\frac{K\delta}{\ell^{\mathcal T}(C)}. \]
We write
\begin{align*}
 p^{\pi_{i+2}\left(\frac1{(1+\delta)^3} - \frac{2(1+\delta)}{\ell^{\mathcal T}(C)}\right)}&=p^{\pi_{i+2}\left(\frac1{(1+\delta)^3} - \frac{2(1+\delta)}{\ell^{\mathcal T}(C)}+\frac{K\delta}{\ell^{\mathcal T}(C)}\right)}p^{\pi_{i+2}(-\frac{K\delta}{\ell^{\mathcal T}(C)})} 
\\&\le  (\tilde k'_{\pi_{i+2}})^{-(1+15\delta)\left(\frac1{(1+\delta)^3} - \frac{2(1+\delta)}{\ell^{\mathcal T}(C)}+\frac{K\delta}{\ell^{\mathcal T}(C)}\right)}p^{\pi_{i+2}(-\frac{K\delta}{\ell^{\mathcal T}(C)})}
\\&=  (\tilde k'_{\pi_{i+2}})^{-\frac{1+15\delta}{(1+\delta)^3} + \frac{2+O(\delta)}{\ell^{\mathcal T}(C)}-\frac{K\delta}{\ell^{\mathcal T}(C)}}p^{\pi_{i+2}(-\frac{K\delta}{\ell^{\mathcal T}(C)})}
\\&\le  (\tilde k'_{\pi_{i+2}})^{-(1+\delta)+ \frac{2+O(\delta)}{\ell^{\mathcal T}(C)}-\frac{K\delta}{\ell^{\mathcal T}(C)}}p^{\pi_{i+2}(-\frac{K\delta}{\ell^{\mathcal T}(C)})} ,
\end{align*}
where the first inequality uses the fact that the new exponent $\frac1{(1+\delta)^3} - \frac{2(1+\delta)}{\ell^{\mathcal T}(C)}+\frac{K\delta}{\ell^{\mathcal T}(C)}$ is nonnegative, and the second inequality uses $1+15\delta\ge(1+\delta)^4$ since $\delta\le1$. Continuing from (\ref{eq:rho-start}), we are left with
\begin{align*}
\rho(C)&\le(8\log n)^{O(1/\delta)}\cdot \prod_i \left((\tilde k_{\pi_{i+2}})^{-1/\ell^{\mathcal T}(C)} (\tilde k'_{\pi_{i+2}})^{\frac{2+O(\delta)}{\ell^{\mathcal T}(C)}}(\tilde k'_{\pi_{i+2}}p^{\pi_{i+2}})^{\frac{-K\delta}{\ell^{\mathcal T}(C)}}\right)^{\ell^{\mathcal T}(C_i)}
\\&=(8\log n)^{O(1/\delta)}\cdot \prod_i \left((\tilde k_{\pi_{i+2}})^{-1} (\tilde k'_{\pi_{i+2}})^{2+O(\delta)}\right)^{\frac{\ell^{\mathcal T}(C_i)}{\ell^{\mathcal T}(C)}}\cdot\prod_i\left(\tilde k'_{\pi_{i+2}}p^{\pi_{i+2}}\right)^{-K\delta\cdot\frac{\ell^{\mathcal T}(C_i)}{\ell^{\mathcal T}(C)}} .
\end{align*}
For the first product, we use the bound $\tilde k'_{\pi_{i+2}}=\max\{\tilde k_{\pi_{i+2}}, \sqrt{n}\}\le\sqrt{n\tilde k_{\pi_{i+2}}}$ to get
\begin{align*}
\prod_i \left((\tilde k_{\pi_{i+2}})^{-1} (\tilde k'_{\pi_{i+2}})^{2+O(\delta)}\right)^{\frac{\ell^{\mathcal T}(C_i)}{\ell^{\mathcal T}(C)}}&\le\prod_i \left((\tilde k_{\pi_{i+2}})^{-1} \big(\sqrt{n\tilde k_{\pi_{i+2}}}\big)^{2+O(\delta)}\right)^{\frac{\ell^{\mathcal T}(C_i)}{\ell^{\mathcal T}(C)}}
\\&= \prod_i \left( (\tilde k_{\pi_{i+2}})^{O(\delta)} n^{1+O(\delta)}\right)^{\frac{\ell^{\mathcal T}(C_i)}{\ell^{\mathcal T}(C)}}
\\&\le\prod_i\left(n^{1+O(\delta)}\right)^{\frac{\ell^{\mathcal T}(C_i)}{\ell^{\mathcal T}(C)}}
\\&=n^{1+O(\delta)} .
\end{align*}
For the second product, we use the fact that $\pi_i\in\Theta(\lambda)$ and our assumption $p^\lambda\ge n^{-1000}$ to obtain
\[ \prod_i(\tilde k'_{\pi_{i+2}}p^{\pi_{i+2}})^{-K\delta\cdot\frac{\ell^{\mathcal T}(C_i)}{\ell^{\mathcal T}(C)}}\le\prod_i( n^{-O(1)})^{-K\delta\cdot\frac{\ell^{\mathcal T}(C_i)}{\ell^{\mathcal T}(C)}}=(n^{-O(1)})^{-K\delta}=n^{O(\delta)} .\]
We conclude that $\rho(C)\le(\log n)^{O(1/\delta)}n^{1+O(\delta)}$ under the additional assumption that $\ell^{\mathcal T}(C) \ge 2-\delta$.

To handle the case $\ell^{\mathcal T}(C)<2-\delta$, observe that by Markov's inequality, at least a $\Omega(\delta)$ fraction of trees $T\in\mathcal T$ satisfy $|C\cap E(T)|<2$, which means $|C\cap E(T)|=1$. We can similarly argue, as in the beginning of the proof, that with high probability at least one tree $T\in\widetilde{\mathcal T}$ satisfies $|C\cap E(T)|=1$. Now suppose that the single edge in $C\cap E(T)$ is in $\widetilde E_i$. Then, with probability $(8\log n)^{-O(1/\delta)}$ we have $j_i=1$ and $j_{i'}=0$ for all $i'\ne i$, and with probability $1/|\widetilde E_i\cap E(T)|\ge1/n$. So overall, the cut $C$ is sampled with probability $1/|\widetilde{\mathcal T}|\cdot(8\log n)^{-O(1/\delta)}\cdot1/n$. It follows that
\[ \rho(C)=\frac{p^{|C|}}{q(C)\cdot z'_G(p)}\le\frac1{q(C)}=(8\log n)^{O(1/\delta)}\cdot n ,\]
concluding the proof.
\end{proof}

\medskip\noindent{\bf Running Time.} 
%
%
Recall that we set $\delta=\Theta(1/\log\log n)$, so for any cut $C$ with $2/(1+\delta)< \ell^*(C) \le \log n$, the likelihood ratio $\rho(C)$ is $n^{1+o(1)}$ by \Cref{lem:reliable-main}. By \Cref{lem:relative-variance}, the relative variance of $X=p^{|C|}/q(C)$ is $n^{1+o(1)}$, so the algorithm takes $n^{1+o(1)}$ samples, 
obtaining an estimator for $\zgp$ with relative variance $O(1)$ and relative bias $O(1/n)$.

To obtain the $n^{1+o(1)}$ samples efficiently, the algorithm uses the following lemma from \cite{CenHLP24}:
\begin{lemma}[Lemma 3.8 of \cite{CenHLP24}] \label{lem:data-structure}
    Given an unweighted undirected graph $G=(V,E)$ and a tree $T$ spanning $V$, we can build a data structure $\mathbb{S}_{G,T}$ in $m^{1+o(1)}$ time such that the following query can be answered in $\tilde{O}(1)$ time: 
    Given $\tilde{O}(1)$ edges on $T$ to define a cut $C$, return its cut value on $G$.
\end{lemma}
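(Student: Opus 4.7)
The plan is to reduce each query to a constant number of two-dimensional orthogonal range-count queries on a point set derived from $G$ and $T$. I would first root $T$ arbitrarily and compute a DFS ordering, so that every tree edge $f$ joining a child $c$ to its parent corresponds to the contiguous DFS interval $I_f=[\ell_c,r_c]$ of the subtree rooted at $c$. For each $G$-edge $e=(u,v)$, record the two lattice points $(\mathrm{dfs}(u),\mathrm{dfs}(v))$ and $(\mathrm{dfs}(v),\mathrm{dfs}(u))$, yielding $2m$ points in $[1,n]^2$. The key observation is that the unique $u$-$v$ path in $T$ traverses $f$ iff exactly one of $u,v$ lies in the subtree below $f$, i.e., exactly one coordinate of the lattice point lies in $I_f$.

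Next, I would preprocess a two-dimensional orthogonal range-count structure on these $2m$ points---e.g., a range tree with fractional cascading, or persistent segment trees---yielding $m^{1+o(1)}$ preprocessing (in fact $\tilde O(m)$) and $\tilde O(1)$ query time for counting points in any axis-aligned rectangle. I would also sort the $n-1$ subtree interval endpoints once, in $\tilde O(n)$ time, so that each queried tree edge $f_i$ can be translated into its interval $I_{f_i}$ in $\tilde O(1)$ time.

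At query time, given $k=\tilde O(1)$ removed tree edges, the intervals $\{I_{f_1},\dots,I_{f_k}\}$ form a laminar family on $[1,n]$; the $2k$ endpoints split $[1,n]$ into at most $2k+1$ \emph{atoms}, namely maximal sub-intervals contained in the same subset of the $I_{f_i}$. I would label each atom by its innermost enclosing $I_{f_i}$ (well-defined by laminarity), or by ``root'' if none. A short tree analysis confirms that two DFS coordinates lie in the same component of $T\setminus\{f_1,\dots,f_k\}$ iff their atoms carry the same label: the component of any vertex $v$ is determined by the deepest $c_i$ that is an ancestor of $v$ among $\{c_1,\dots,c_k\}$, which is exactly the atom-label of $\mathrm{dfs}(v)$. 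Atoms and labels are built in $\tilde O(k)$ time by one sweep over the $2k$ endpoints. Then I would iterate over all $O(k^2)$ ordered atom pairs $(A,B)$ with distinct labels, query the rectangle $A\times B$ in $\tilde O(1)$ time each, sum the results, and divide by $2$ to cancel the two lattice-point copies of each edge; this returns $|C|$ in $\tilde O(1)$ total query time.

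The main obstacle is simultaneously meeting the preprocessing and query bounds for 2D orthogonal range counting; standard range trees with fractional cascading or persistent segment trees achieve $\tilde O(m)$ preprocessing and $\tilde O(1)$ query, which comfortably fits within the $m^{1+o(1)}$ preprocessing budget of the lemma. The only other care needed is to verify the component/atom correspondence above cleanly, which follows from the laminar structure of subtree intervals together with the fact that removing $f_i$ detaches precisely the subtree rooted at $c_i$ from its parent.
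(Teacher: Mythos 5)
Your preprocessing/query plumbing (DFS subtree intervals, 2D orthogonal range counting on $2m$ reflected lattice points, atom decomposition of $[1,n]$ by the $2k$ interval endpoints) is sound and achieves the stated bounds. However, there is a genuine gap in what you end up computing. The lemma concerns a \emph{two-way} cut: the sampled tree edges $f_1,\dots,f_k$ determine the unique vertex \emph{bipartition} $(S,\bar S)$ whose crossing tree edges are exactly $\{f_1,\dots,f_k\}$, and the query must return $|\partial_G(S,\bar S)|$. Removing $k$ tree edges creates $k+1$ components, and these components are 2-colored by the unique proper 2-coloring of the component tree; the cut $C$ consists only of $G$-edges between \emph{opposite-colored} components. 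Your procedure instead sums range counts over all atom pairs with \emph{distinct} labels, which counts every $G$-edge whose endpoints lie in different components of $T\setminus\{f_1,\dots,f_k\}$ --- i.e., the multiway cut value. This strictly overcounts whenever some $G$-edge joins two different components that lie on the same side of the bipartition (e.g., a path $1\text{--}2\text{--}3$ with both tree edges removed and a $G$-edge between $1$ and $3$).

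The fix stays entirely within your framework and does not affect the time bounds: a vertex $v$ lies on the side of the bipartition determined by the \emph{parity} of $|\{i : v \text{ is a descendant of } c_i\}|$, equivalently the parity of the number of intervals $I_{f_i}$ containing $\mathrm{dfs}(v)$. So instead of labeling each atom by its innermost enclosing interval, label it by the parity of its enclosing-interval count (computable in the same $\tilde O(k)$ sweep), and sum range counts only over atom pairs of \emph{opposite} parity. With that change, the argument proves the lemma; without it, it proves a different statement.
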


In particular, the algorithm initializes data structure $\mathbb{S}_{G,T}$ for each $T\in\widetilde{\mathcal T}$ and then, for $n^{1+o(1)}$ iterations, performs the following: sample $T\in\widetilde{\mathcal T}$ and $j_1,\ldots,j_L$ and $j_i\le8\log n$ edges from $E(T)\cap\widetilde E_i$ for each $i$, and then query the cut value using data structure $\mathbb{S}_{G,T}$. To compute $q(C)$, the algorithm needs to figure out the probability that cut $C$ is sampled. This probability can be determined by simple combinatorics after computing $|C\cap\widetilde E_i\cap E(T')|$ for all trees $T'\in\widetilde{\mathcal T}$. To compute these values in $\tilde O(1)$ time per sample, initialize data structure $\mathbb{S}_{\widetilde E_i\cap T_1,T_2}$ for each pair of trees $T_1,T_2\in\widetilde{\mathcal T}$ and $i\in[L]$. Then $|C\cap\widetilde E_i\cap E(T')|$ equals the output of data structure $\mathbb{S}_{\widetilde E_i\cap T',T}$ when queried the sampled edges of $T$. The overall running time is $m^{1+o(1)}$. 

This concludes the analysis of the algorithm and establishes \Cref{lem:interface-importance-zgp} for $p^\lambda \ge n^{-1000}$.

\subsection{Very Reliable Case}\label{sec:very-reliable}
This section handles the very reliable case $p^\lambda < n^{-1000}$. We start by proving an analog of \Cref{lem:ignore-cuts-log-n} -- note that we are defining $z'_G(p)$ differently from \Cref{lem:ignore-cuts-log-n} for the very reliable case.

\begin{lemma}\label{lem:very-reliable-bias}
    Assume that $p^\lambda < n^{-1000}$. Let $z'_G(p) = \sum_{C:|C|<1.1\lambda} p^{|C|}$. Then $z'_G(p) \ge (1-O(1/n))\, z_G(p)$.
\end{lemma}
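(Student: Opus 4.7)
The plan is to mirror the structure of the second (the $p^\lambda\le n^{-2}/\log n$) case in the proof of \Cref{lem:ignore-cuts-log-n}, now exploiting the much stronger assumption $p^\lambda<n^{-1000}$ to replace the $\ell^*(C)\ge\log n$ condition with the simpler size condition $|C|\ge 1.1\lambda$. Concretely, since $z'_G(p)=z_G(p)-\sum_{C:|C|\ge1.1\lambda}p^{|C|}$, it suffices to show that the tail sum $\sum_{C:|C|\ge1.1\lambda}p^{|C|}$ is at most $O(1/n)\cdot z_G(p)$.

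The first step is to use the trivial lower bound $z_G(p)\ge p^\lambda$, which reduces the goal to showing $\sum_{C:|C|\ge1.1\lambda}p^{|C|}\le O(p^\lambda/n)$. Next, I would bucket cuts by integer $j\ge 1$ into $S_j:=\{C:|C|\in[j\lambda,(j+1)\lambda)\}$ and apply Karger's cut-counting theorem, which gives $|S_j|\le|\{C:|C|\le(j+1)\lambda\}|\le n^{2(j+1)}$. Since each cut $C\in S_j$ contributes at most $p^{j\lambda}$ (and for the tail of $S_1$ we use the sharper $p^{|C|}\le p^{1.1\lambda}$), this yields
\[
\sum_{C:|C|\ge1.1\lambda}p^{|C|} \;\le\; n^4 p^{1.1\lambda} \;+\; \sum_{j\ge 2} n^{2(j+1)}p^{j\lambda}.
\]

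For the tail term, I would factor $\sum_{j\ge 2}n^{2(j+1)}p^{j\lambda}=n^2\sum_{j\ge 2}(n^2p^\lambda)^j$; since $n^2p^\lambda\le n^{-998}$ by assumption, this geometric series is dominated by its first term $O(n^6p^{2\lambda})$, which is $O(n^{-994}\cdot p^\lambda)$. For the $S_1$ contribution, the ratio $n^4 p^{1.1\lambda}/p^\lambda=n^4 p^{0.1\lambda}$, and using $p^\lambda<n^{-1000}$ gives $p^{0.1\lambda}<n^{-100}$, so this ratio is at most $n^{-96}$. Both bounds are comfortably $O(1/n)$.

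I do not anticipate a real obstacle: the $0.1\lambda$ gap between $1.1\lambda$ and $\lambda$, combined with the very strong hypothesis $p^\lambda<n^{-1000}$, dominates the Karger counting factor $n^{2\alpha}$ by orders of magnitude, so the analysis is even more slack than in the corresponding case of \Cref{lem:ignore-cuts-log-n}. The only thing to be slightly careful with is handling the bucket $S_1$ separately from $j\ge 2$, since we are not summing from $j=1$ but from the fractional threshold $1.1\lambda$.
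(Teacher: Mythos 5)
Your proposal is correct and follows essentially the same approach as the paper's proof: bound $z_G(p)\ge p^\lambda$, split the tail at $2\lambda$, handle the $[1.1\lambda,2\lambda)$ bucket via the $n^4 p^{1.1\lambda}$ term and the rest via the geometric series $\sum_{j\ge2}(n^2p^\lambda)^{j-1}$, and finish using $p^\lambda<n^{-1000}$. The only cosmetic difference is that you keep the factor of $p^\lambda$ outside the geometric series rather than dividing it through immediately, which does not change anything substantive.
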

\begin{proof}
By the standard cut-counting bound, we have
    \begin{align*}
    \frac1z\sum_{|C|\ge 1.1\lambda}p^{|C|} &= \frac1z\bigg(\sum_{1.1\lambda\le|C|\le2\lambda}p^{|C|}+\sum_{|C|>2\lambda}p^{|C|}\bigg)\\
    &\le p^{-\lambda}\bigg(n^4p^{1.1\lambda}+\sum_{j\ge2}n^{2(j+1)}p^{j\lambda}\bigg)\\
    &\le n^4 p^{0.1\lambda} + n^4\sum_{j\ge 2} n^{2(j-1)} p^{(j-1)\lambda}\\
    &\le n^4p^{0.1\lambda}+n^4\sum_{j\ge2}(n^2p^\lambda)^{j-1},
    \end{align*}
which is $O(1/n)$ since $p^\lambda < n^{-1000}$.
\end{proof}

As in \Cref{sec:importance}, the algorithm first samples a cut $C$ according to some distribution $q$, i.e., each cut $C$ is sampled with probability $q(C)$, and then computes the estimator
\[ X=\begin{cases} p^{|C|}/q(C) &\text{ if }|C|<1.1\lambda,\\
0&\text{ otherwise.}\end{cases}\]
By the same argument as in \Cref{sec:importance}, \Cref{lem:very-reliable-bias} implies that the estimator $X$ has relative bias $O(1/n)$ to $\zgp$. It suffices to bound the likelihood ratio $\rho(C) = \frac{p^{|C|}}{q(C)\cdot z'_G(p)}$ over all cuts $C$ with $|C|<1.1\lambda$.
These cuts have $\ell^*(C) \le |C|/\pi^* \le 2|C|/\lambda < 2.2$.

\medskip\noindent{\bf Algorithm Description.}
First, we compute a Gomory-Hu tree~\cite{gomory1961multi} of $G$, which is defined as follows: 
\begin{defn}
    For an undirected graph $G = (V, E)$, a tree $T$ on the set of vertices $V$ is said to be a Gomory-Hu tree of $G$ if for every pair of vertices $u, v\in V$, there is an $(u, v)$-mincut in $T$ that has the same bi-partition and the same cut value as an $(u, v)$-mincut in $G$.
\end{defn}
Being a tree, each edge $(u, v)$ on the Gomory-Hu tree defines a $(u, v)$ min-cut; call these Gomory-Hu tree cuts. There are $n-1$ distinct Gomory-Hu tree cuts. We order their cut values in increasing order to form a list $L=(c_1, c_2, \ldots, c_{n-1})$. Also define $c_n=\max\{c_{n-1},1.1\lambda\}$.

Next, the algorithm partitions the list into $\log \log n$ levels as follows. Let $k_i = \lfloor n^{1-2^{-i}}\rfloor$ for $i=1, 2, \ldots, \log \log n$. Also define $k_0=1$ and $k_{\log \log n +1}=n$.
For each level $i=1, 2, \ldots, \log \log n+1$, the algorithm contracts all $c_{k_i}$-connected components in $G$ to form a subgraph $G_i$. This can be achieved by contracting all edges on the Gomory-Hu tree that correspond to cuts of value $\ge c_{k_i}$.
The algorithm computes a greedy tree packing $\mathcal T_i$ of $\tilde O(\lambda/\eps^2)$ trees on each contracted graph $G_i$, and then samples $\tilde{O}(1)$ of them to form $\widetilde{\mathcal T}_i$. Let $\ell_i$ be the load of $\mathcal T_i$ and let $\ell_i^*$ be the ideal load of $G_i$, so that $\ell_i\approx_{0.01}\ell^*_i$.

The distribution $q$ is defined by the following sampling procedure: First, 
we sample $i$ uniformly from 
$ \{1, 2, \ldots, \log\log n+1\}$. Then, we uniformly sample a tree $T\in\widetilde{\mathcal T}_i$ and two edges in $T$ (with replacement) to define a cut $C$. 

We bound the likelihood ratio in the next lemma:
\begin{lemma}\label{lem:very-reliable-likelihood-ratio}
Assume $p^\lambda < n^{-1000}$. With high probability, for each cut $C$ with $|C|<1.1\lambda$, the likelihood ratio of $C$  is $\tilde{O}(n)$.
\end{lemma}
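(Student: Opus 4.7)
The plan is to fix an arbitrary cut $C$ of $G$ with $|C|<1.1\lambda$ and exhibit a single level $i^\star=i^\star(C)$ at which the sampler hits $C$ with probability large enough to force $\rho(C)=\tilde O(n)$. Define $N(|C|)$ to be the number of Gomory--Hu tree cuts of value at most $|C|$; since $c_1=\lambda\le|C|$, we have $N(|C|)\ge 1$. I take $i^\star$ to be the smallest index in $\{1,\ldots,\log\log n+1\}$ with $k_{i^\star}\ge N(|C|)+1$; the fallback $i=\log\log n+1$, which has $k=n$ and $c_n\ge 1.1\lambda$ by construction, guarantees $i^\star$ always exists. Because $k_{i^\star}\ge N(|C|)+1$, the $k_{i^\star}$-th smallest Gomory--Hu cut value $c_{k_{i^\star}}$ strictly exceeds $|C|$; hence no $c_{k_{i^\star}}$-connected component can straddle $C$ (otherwise such a component would admit an internal cut of value $\le|C|<c_{k_{i^\star}}$), so $C$ survives as a cut of identical value in $G_{i^\star}$. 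The schedule $k_i=\lfloor n^{1-2^{-i}}\rfloor$ satisfies $k_i^2\approx n\cdot k_{i-1}$, so by the minimality of $i^\star$ we get $k_{i^\star}^2\le\tilde O(n\cdot N(|C|))$; the only caveat is the boundary $i^\star=\log\log n+1$, where instead one uses $N(|C|)\ge k_{\log\log n}=n/e$ together with $k_{i^\star}=n$ to reach the same bound.

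Next I plan to show that with high probability over the randomness of $\widetilde{\mathcal T}_{i^\star}$, some tree $T^\star\in\widetilde{\mathcal T}_{i^\star}$ satisfies $|C\cap E(T^\star)|\in\{1,2\}$. Contraction only increases the min-cut, so $\lambda_{G_{i^\star}}\ge\lambda$; combined with \Cref{fact:pi-range} this gives $\ell^*_{G_{i^\star}}(C)\le|C|/\pi^*_{G_{i^\star}}\le 2|C|/\lambda<2.2$. Applying \Cref{cor:greedy-packing} to $G_{i^\star}$, after first contracting its $3\lambda$-strong components so the additive greedy-packing error becomes a small multiplicative error via \Cref{lem:load-range}, the expected number of edges of $C$ in a uniformly random tree of $\mathcal T_{i^\star}$ equals $\ell^{\mathcal T_{i^\star}}(C)\approx_{0.01}\ell^*_{G_{i^\star}}(C)<2.23$. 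By Markov's inequality a constant fraction of trees in $\mathcal T_{i^\star}$ have $|C\cap E(T)|\le 2$, and since $|C\cap E(T)|\ge 1$ for every spanning tree of $G_{i^\star}$ (as $C$ is a cut), such trees satisfy $|C\cap E(T)|\in\{1,2\}$. Because $\widetilde{\mathcal T}_{i^\star}$ is a polylogarithmic-sized uniform subsample of $\mathcal T_{i^\star}$, with probability $1-n^{-\omega(1)}$ at least one of its trees has this property, and a union bound over the $O(\log\log n)$ levels and the $O(n^{2.2})$ cuts with $|C|<1.1\lambda$ makes the event hold for all relevant cuts simultaneously.

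Given such a $T^\star$, the sampler returns $C$ with probability at least
\[
q(C) \;\ge\; \frac{1}{\log\log n+1}\cdot\frac{1}{|\widetilde{\mathcal T}_{i^\star}|}\cdot\frac{1}{|E(T^\star)|^2} \;\ge\; \frac{1}{\tilde O(n\cdot N(|C|))},
\]
because choosing level $i^\star$ and tree $T^\star$ costs a polylogarithmic factor, and the two-edges-with-replacement draw from $E(T^\star)$ recovers the set $C\cap E(T^\star)$ (of size $1$ or $2$) with probability at least $1/(k_{i^\star}-1)^2$; the fundamental cut of $T^\star$ induced by this edge set coincides with $C$ by the uniqueness argument used in \Cref{lem:reliable-main}. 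Finally, each of the $N(|C|)$ Gomory--Hu cuts of value $\le|C|<1.1\lambda$ contributes $p^{|C'|}\ge p^{|C|}$ to $z'_G(p)$, so $z'_G(p)\ge N(|C|)\cdot p^{|C|}$, and therefore
\[
\rho(C)\;=\;\frac{p^{|C|}}{q(C)\cdot z'_G(p)}\;\le\;\frac{p^{|C|}\cdot\tilde O(n\cdot N(|C|))}{N(|C|)\cdot p^{|C|}}\;=\;\tilde O(n),
\]
as required.

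The main obstacle is the middle step: \Cref{cor:greedy-packing} only supplies an additive approximation to the ideal loads, which is insufficient by itself; converting it to a multiplicative approximation tight enough that Markov yields $|C\cap E(T)|\le 2$ with constant probability requires contracting the $3\lambda$-strong components of $G_{i^\star}$ first via \Cref{lem:load-range}. A second subtle point is the exact tuning of the schedule $k_i=\lfloor n^{1-2^{-i}}\rfloor$: the identity $k_i^2=n\cdot k_{i-1}$ is precisely what converts the tree-sampling loss $1/k_{i^\star}^2$ and the Gomory--Hu-based lower bound $z'_G(p)\ge N(|C|)\,p^{|C|}$ into the target $\tilde O(n)$ rather than $\tilde O(n^2)$.
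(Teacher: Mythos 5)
Your proof is correct and follows essentially the same approach as the paper: you locate a level $i^\star$ at which $C$ survives contraction, use $\ell^*_{G_{i^\star}}(C)<2.2$ plus Markov to find a sampled tree with $|C\cap E(T)|\le 2$, and offset the $1/k_{i^\star}^2$ sampling loss against a Gomory--Hu-based lower bound on $z'_G(p)$. Your bookkeeping via $N(|C|)$ is just a reparametrization of the paper's bracket $|C|\in[c_{k_i},c_{k_{i+1}})$ --- your $i^\star$ coincides with the paper's $i+1$ --- and you also make explicit the $3\lambda$-strong-component contraction used to turn the additive greedy-packing error into a multiplicative one, a step the paper states only implicitly.
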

\begin{proof}
    We start with the following observation:
    \begin{lemma}[Lemma 5 of \cite{Thorup08}]\label{lem:ideal-loads-contract-monotone}
    If we contract edges, then the ideal loads of remaining edges cannot increase.
    \end{lemma}
    Since $G_i$ is formed by contraction from $G$, by \Cref{lem:ideal-loads-contract-monotone}, the ideal loads $\ell^*_i$ in $G_i$ satisfy $\ell^*_i(e)\le \ell^*(e)$ for $e\in E(G_i)$.
    Since $\ell_i \approx_{0.01}\ell^*_i$, we have $\ell_i(e)\le 1.01\ell^*_i(e) \le 1.01\ell^*(e)$ for all $e\in E(G_i)$.
    We conclude the following claim:

    \begin{claim}\label{claim:reliable-case-greedy-load-approx}
        The following holds with high probability: $\ell_i(e)\le 1.01\ell^*(e)$ for each $i\in\{1,2,\ldots,\log\log n+1\}$ and each $e\in E(G_i)$.
    \end{claim}
    
    This claim is now used to establish the next claim:
    
    \begin{claim}
        For each $i\in\{1,2,\ldots,\log\log n+1\}$, if $|C|\in [c_{k_i}, c_{k_{i+1}})$ and $|C|<1.1\lambda$, then $\frac{1}{q(C)} \le \tilde{O}(1) \cdot k_{i+1}^2$.
    \end{claim}
    \begin{proof}
        Because $|C|<c_{k_{i+1}}$, $C$ does not separate any $c_{k_{i+1}}$-connected components of $G$. Because $G_{i+1}$ is formed by contracting $c_{k_{i+1}}$-connected components, $C$ is preserved in $G_{i+1}$. (In the case $i=\log\log n$, note that $G_{\log \log n+1}$ is the original graph.)
    
        The above argument also implies that the min-cut of $G$ is preserved in $G_{i+1}$. Combined with the fact that contraction cannot decrease min-cut value, we have that the min-cut value in $G_{i+1}$ is still $\lambda$. Thus, the ideal tree packing value $\pi^*_{i+1} > \frac{2}{\lambda}$ by \Cref{fact:pi-range}.
        Because $|C|< 1.1\lambda$, we have $\ell^*_{i+1}(C) \le |C|/\pi^*_{i+1} < 2|C|/\lambda < 2.2$.
        By \Cref{claim:reliable-case-greedy-load-approx}, in greedy tree packing $\mathcal T_{i+1}$, $\ell_{i+1}(C)\le 1.01\ell^*_{i+1}(C)< 2.3$. Notice that $\ell_{i+1}(C)=\mathbb E[|T\cap C|]$ for a uniformly random tree $T$ in $\mathcal T_{i+1}$. By Markov's inequality, $\Pr[|T\cap C|\ge 3] < \frac{2.3}{3}<0.8$, which implies $\Pr[|T\cap C|\le 2]>0.2$. Since $\widetilde{\mathcal T}_{i+1}$ is a random sample of $\tilde{O}(1)$ trees in $\mathcal T_{i+1}$, with high probability there exists some $T\in\widetilde{\mathcal T}_{i+1}$ such that $|T\cap C|\le 2$. Moreover, this holds with high probability over all cuts $C$ with $|C|<1.1\lambda$ since there are only $n^{O(1)}$ such cuts.
    
        Given a tree $T\in\widetilde{\mathcal T}_{i+1}$ with $|T\cap C|\le2$, the cut $C$ can be sampled if the one or two edges in $T\cap C$ are sampled by the algorithm. This happens with probability $\frac{1}{k_{i+1}^2}$. So overall, $q(C)\ge\frac1{\log\log n+1}\cdot\frac1{|\widetilde{\mathcal T}_{i+1}|} \cdot \frac{1}{k_{i+1}^2}$, where the first and second factor come from sampling the correct $i$ and $T\in\widetilde{\mathcal T}_{i+1}$. Rearranging and using $|\widetilde{\mathcal T}_{i+1}|=\tilde{O}(1)$ completes the proof of the claim. 
    \end{proof}
    
    Finally, we need the following claim:
    
    \begin{claim}
        $z'_G(p) \ge k_i p^{c_{k_i}}$ for all $i\in\{1,2,\ldots,\log\log n+1\}$. 
    \end{claim}
    \begin{proof}
        In the list of Gomory-Hu tree cuts, $c_{k_i}$ is the $k_i$-th smallest cut, so there are $k_i$ distinct cuts of value $\le c_{k_i}$. Their contribution to $z'_G(p)$ is at least $k_i p^{c_{k_i}}$.
    \end{proof}
    
    To prove \Cref{lem:very-reliable-likelihood-ratio},
    we bound the likelihood ratio for all cuts $C$ with $|C|<1.1\lambda$ by combining the above claims.
    Suppose $|C|\in[c_{k_i}, c_{k_{i+1}})$. We have
    \begin{align*}
        \frac{p^{|C|}}{q(C)\cdot z'_G(p)}\le p^{c_{k_i}}\cdot \frac{1}{q(C)} \cdot \frac{1}{k_i p^{c_{k_i}}}
        \le \frac{\tilde{O}(1)\cdot k_{i+1}^2}{k_i} = \tilde{O}(1)\cdot n^{2(1-2^{-i-1}) - (1-2^{-i})} = \tilde{O}(n).
    \end{align*}
    We have covered the whole range $[\lambda, 1.1\lambda)$ because $c_{k_0}=c_1=\lambda$ and $c_{k_{\log \log n+1}}=c_n = \max\{1.1\lambda,c_{n-1}\}\ge1.1\lambda$. This completes the proof of \Cref{lem:very-reliable-likelihood-ratio}.
\end{proof}

We now use the connection between likelihood ratio and relative variance given by \Cref{lem:relative-variance} to establish the desired bound on the relative variance  of the estimator for $z'_G(p)$.

\begin{corollary}\label{cor:relative-variance-veryreliable}
    The relative variance in the very reliable case of the estimator of $z'_G(p)$ is at most $\tilde{O}(n)$.
\end{corollary}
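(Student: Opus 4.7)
The plan is to obtain this as an immediate consequence of the two lemmas that immediately precede it, namely \Cref{lem:relative-variance} and \Cref{lem:very-reliable-likelihood-ratio}. No new ingredients should be needed; the hard work has been done in setting up the surrogate collection of cuts and bounding the worst-case likelihood ratio.

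Concretely, I would first identify the estimator $X$ defined at the start of \Cref{sec:very-reliable} with the importance-sampling estimator in the statement of \Cref{lem:relative-variance}. That is, set $\mathcal C := \{C : |C| < 1.1\lambda\}$, so that by definition $z'_G(p) = \sum_{C \in \mathcal C} p^{|C|}$ and $X = p^{|C|}/q(C)$ when $C \in \mathcal C$ and $X = 0$ otherwise, matching the hypothesis of \Cref{lem:relative-variance} exactly. Applying that lemma then gives
\[
\eta[X] \;\le\; \max_{C \in \mathcal C} \frac{p^{|C|}}{q(C)\cdot z'_G(p)} \;=\; \max_{C \in \mathcal C} \rho(C).
\]

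Next, I would invoke \Cref{lem:very-reliable-likelihood-ratio}, which, under the standing assumption $p^\lambda < n^{-1000}$ of this subsection, guarantees that with high probability $\rho(C) \le \tilde O(n)$ for every cut $C$ with $|C| < 1.1\lambda$, i.e.\ for every $C \in \mathcal C$. Substituting this bound into the inequality above yields $\eta[X] \le \tilde O(n)$, which is exactly the statement of the corollary.

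Since this is a direct composition of two earlier results whose hypotheses are already verified in the setup of \Cref{sec:very-reliable}, I do not anticipate any genuine obstacle; the only thing to be careful about is that the collection $\mathcal C$ used implicitly by $X$ is the same one appearing in \Cref{lem:very-reliable-bias} and in the statement of \Cref{lem:very-reliable-likelihood-ratio}, so that the ``otherwise zero'' branch of $X$ does not contribute any cuts on which the likelihood ratio could blow up.
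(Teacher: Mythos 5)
Your proposal is correct and matches the paper's (implicit) argument exactly: the paper states the corollary immediately after \Cref{lem:very-reliable-likelihood-ratio} with only the remark that \Cref{lem:relative-variance} supplies the connection between likelihood ratio and relative variance, which is precisely the composition you describe. Your closing note about ensuring the collection $\mathcal C = \{C : |C| < 1.1\lambda\}$ agrees across \Cref{lem:very-reliable-bias}, the definition of $X$, and \Cref{lem:very-reliable-likelihood-ratio} is the right thing to check, and it does hold.
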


\medskip\noindent{\bf Running Time.}
For the sampled cut $C$, we can compute its overall sampling probability $q(C)$ using the data structure of \Cref{lem:data-structure} similar to the previous section.
We need an additional preprocessing step to construct a Gomory-Hu tree, which takes $m^{1+o(1)}$ time~\cite{ALPS23}. In conclusion, the same running time bound of $m^{1+o(1)}$ continues to hold for the very reliable case.

Now, we have established bounds on the relative variance and running time of estimators of $z'_G(p)$, as defined differently in the two cases -- reliable and very reliable. This concludes our proof of \Cref{lem:interface-importance-zgp}.

\subsection{Using Bootstrapping to Improve Relative Bias}\label{sec: bias-z-to-u}

We have established an importance sampling algorithm which outputs an estimator of $\zgp$ with relative bias $O(1/n)$ and relative variance $\le 1$.
By \Cref{lem:z-approx-u}, this is also an estimator of $\ugp$ with relative bias $O(1/\log n)$. 
But, this is not sufficient for establishing \Cref{lem:interface-importance}.
It turns out we can improve this bound by viewing the importance sampling algorithm as a bootstrapping step.

\begin{defn}\label{def:surrogate-criteria-2}
Assume we are in the reliable case of \Cref{def:surrogate-criteria}. 
Run the importance sampling algorithm to get an estimator for $\zgp$ with relative bias $O(1/n)$ and relative variance $\le 1$ guaranteed by \Cref{lem:interface-importance-zgp}. By sampling from this estimator and applying \Cref{lem:mc-sample}, whp we can obtain a 2-approximation of $z_G(p)$, denoted $\tilde{z}$.
The algorithm is still in the reliable case if $\tilde{z} < \frac{1}{2n}$. If $\tilde{z} \ge \frac{1}{2n}$, we switch to the unreliable case.
\end{defn}

If we switch to the unreliable case according to the above criteria, we merely view the importance sampling algorithm as a preprocessing step to distinguish reliable and unreliable cases.
If we are still in the reliable case, we run the importance sampling algorithm again to obtain an estimator, which serves as the output of the algorithm.
Since we have a stronger assumption than our surrogate assumption, we can exploit it to guarantee a stronger bound for relative bias.

\begin{lemma}\label{lem:criteria-2-inverse}
    If $\tilde{z}\ge\frac{1}{2n}$, $p<\theta$ and $\tilde{z}$ is a 2-approximation of $\zgp$, then $\ugp\ge n^{-1-O(1/\log \log n)}$.
\end{lemma}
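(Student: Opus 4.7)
The proof is a direct chain of implications from the hypotheses, so I will organize it as a short two-step calculation rather than building any new machinery.

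My plan is to first convert the hypothesis on $\tilde z$ into a lower bound on $\zgp$. Since $\tilde z$ is a $2$-approximation of $\zgp$, we have $\zgp \ge \tilde z / 2$, and combined with $\tilde z \ge \tfrac{1}{2n}$ this gives $\zgp \ge \tfrac{1}{4n}$. This is the only use of the surrogate estimator.

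Next, I would invoke the phase-transition lemma (\Cref{lem:z-approx-u}) to transfer this lower bound from $\zgp$ to $\ugp$. Since we are assuming $p < \theta$, that lemma yields
\[
\ugp \;\ge\; \left(1 - \tfrac{1}{\log n}\right) \zgp \;\ge\; \left(1 - \tfrac{1}{\log n}\right) \cdot \tfrac{1}{4n}.
\]
The right-hand side is $\Omega(1/n)$, which is certainly at least $n^{-1 - O(1/\log\log n)}$ for large enough $n$ (the constant factor $1/4$ and the $(1 - 1/\log n)$ factor are both absorbed into the $n^{-O(1/\log\log n)}$ slack). This concludes the proof.

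There is no real obstacle here: the lemma is essentially a bookkeeping statement ensuring that the surrogate criterion in \Cref{def:surrogate-criteria-2} is a safe substitute for the true condition $\ugp < n^{-1-\Omega(1/\log\log n)}$. The only conceptual point worth emphasizing is that the slack between $\zgp$ and $\ugp$ (a $1 - 1/\log n$ factor) is absorbed into the $n^{-O(1/\log\log n)}$ slack in the conclusion, which is why we can afford to work with the surrogate $\tilde z$ instead of $\ugp$ directly.
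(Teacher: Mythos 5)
Your proof is correct and is essentially identical to the paper's: both lower-bound $\zgp \ge \tilde z/2 \ge \frac{1}{4n}$, then apply \Cref{lem:z-approx-u} (using $p < \theta$) to transfer this to $\ugp$, absorbing all constant and $(1-1/\log n)$ factors into the $n^{-O(1/\log\log n)}$ slack.
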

\begin{proof}
    We have $\zgp \ge \frac{\tilde{z}}{2} \ge \frac{1}{4n}$. Since $p<\theta$, by \Cref{lem:z-approx-u}, $\ugp \ge \frac{1}{2}\zgp \ge n^{-1-O(1/\log \log n)}$.
\end{proof}

\begin{lemma}\label{lem:stronger-relative-bias}
If $\zgp \le n^{-1}$, then $\xgp \le n^{-\Omega(1)}\zgp$.
\end{lemma}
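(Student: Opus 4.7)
The plan is to bound $\xgp$ by expressing it as a sum over the multiway partitions induced by pairs of distinct bi-partitions, and then combining Karger's cut-counting with the hypothesis $\zgp \le n^{-1}$.

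First, for any two distinct bi-partitions $C \ne C'$, the symmetric difference $C \triangle C'$ is itself the edge set of a non-trivial bi-partition, so $|C \triangle C'| \ge \lambda$; together with the identity $|C \cup C'| = (|C|+|C'|+|C \triangle C'|)/2$, this yields $|C \cup C'| \ge (|C|+|C'|)/2 + \lambda/2$. Moreover, $C \cup C'$ is the edge set of a multiway partition with $3$ or $4$ sides, so
\[
\xgp \le O(y_3(p) + y_4(p)), \qquad y_k(p) := \sum_{k\text{-way partitions } W} p^{|W|}.
\]
This reduces the problem to bounding $y_3$ and $y_4$ in terms of $\zgp$.

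Next, I would control $y_3$ and $y_4$ using two complementary bounds on cut counts. Writing $q = p^{\lambda/2}$ so that a cut of value $k\lambda/2$ contributes $q^k$, Karger's cut-counting gives the generic bound $N_k \le \binom{n}{k}$ on the number of cuts at value-layer $k$. On the other hand, the hypothesis $\zgp \le n^{-1}$ implies that the layer-$k$ contribution $N_k q^k$ is at most (up to a factor) $\zgp$, giving the complementary bound $N_k = O(1/(nq^k))$. Taking the pointwise minimum of these two bounds and plugging into $y_k \le \sum_{j \ge k} N_j^{(k)} q^j$, where $N_j^{(k)}$ denotes the number of $k$-way cuts at value-layer $j$ (which admits analogous bounds), should yield $y_3(p), y_4(p) \le n^{-\Omega(1)} \zgp$.

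The main obstacle is handling the intermediate regime of $q$ carefully. The hypothesis $\zgp \le n^{-1}$ combined with $\zgp \ge q^2$ only gives $q \le n^{-1/2}$, which by itself is too weak, since the naive Karger bound $y_3 \le \sum_{k \ge 3}\binom{n}{k} q^k$ can be much larger than $\zgp$ in this range. The proof must interpolate between the Karger and hypothesis-derived bounds on $N_k$, and also exploit the structural fact that a $k$-way cut of small value decomposes into a tuple of $2$-way cuts of correspondingly small value, so that when the hypothesis forces few small $2$-way cuts, there are also few small $k$-way cuts contributing to $y_3$ and $y_4$. Balancing these ingredients should yield the polynomial-in-$n$ improvement over the $O(1/\log n)$ bound of the phase-transition lemma.
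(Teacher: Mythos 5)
Your proposal takes a genuinely different route from the paper, but the route has a gap at exactly the point you flag as ``the main obstacle,'' and I don't see how the sketch closes it.

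The paper's proof is short but leans on two nontrivial results from Karger~\cite{Karger20} that involve the critical probability $b$ defined by $u_G(b)=\nicefrac12$: Theorem~5.1 of that paper, which gives $\xgp \le O(\log n)\,(p/b)^{\lambda/2}\,\zgp$ for $p<\theta$, and Theorem~7.1, which gives $z_{G,\alpha}(b)\ge n^{-O(1/\log\alpha)}$. Given these, the argument is just: $z_{G,\alpha}(p)\ge (p/b)^{\alpha\lambda} z_{G,\alpha}(b)$, and since $z_{G,\alpha}(p)\le\zgp\le n^{-1}$ while $z_{G,\alpha}(b)\ge n^{-O(1/\log\alpha)}$, choosing $\alpha$ a suitable constant forces $(p/b)^\lambda\le n^{-\Omega(1)}$, and then Theorem~5.1 finishes. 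The key leverage is the anchor at $b$: the hypothesis $\zgp\le n^{-1}$ is converted into a polynomially small value of $(p/b)^\lambda$, not into a direct bound on $q=p^{\lambda/2}$.

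Your approach avoids $b$ entirely and tries to bound $y_3+y_4$ by a direct cut-counting argument, interpolating between the Karger count $N_k\lesssim n^k$ and the hypothesis-derived bound $N_k\le\zgp/q^k\le 1/(nq^k)$. The problem is that this interpolation does not give what you want. The hypothesis controls $N_k q^k$, but your decomposition of $\xgp$ via $|C\cup C'|\ge(|C|+|C'|)/2+\lambda/2$ (or via decomposing a 3-way partition into two 2-way boundaries, each counted at ``half weight'') requires controlling quantities like $\sum_k N_k q^{k/2}$. From $N_k q^k\le 1/n$ you only get $N_k q^{k/2}\le q^{-k/2}/n$, which \emph{blows up} in $k$; and from $N_k\le n^k$ you only get $N_k q^{k/2}\le(n\sqrt q)^k$, which is a useful geometric series only once $q\ll n^{-2}$, i.e.\ $p^\lambda\ll n^{-4}$. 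Since the hypothesis $\zgp\le n^{-1}$ only implies $p^\lambda\le n^{-1}$ in the worst case (e.g.\ when there is a single minimum cut, $\zgp\approx p^\lambda$), there is a genuine intermediate regime $n^{-O(1)}\le p^\lambda\le n^{-1}$ in which neither bound on $N_k q^{k/2}$ is effective, and the ``pointwise minimum'' does not rescue it because the two bounds cross without ever being simultaneously good enough. The structural observation that multiway cuts decompose into 2-way cuts does not resolve this either, for the same reason: it reintroduces sums over $N_{j_1}N_{j_2}$ weighted by $q^{(j_1+j_2)/2}$-type factors, which again requires controlling $N_j q^{j/2}$. So as written the plan does not go through, and some additional input comparable to Karger's $z_{G,\alpha}(b)\ge n^{-O(1/\log\alpha)}$ bound seems necessary to get a polynomial rather than logarithmic gap.
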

\begin{proof}
    We will use the following lemmas from \cite{Karger20}:
    \begin{lemma}[Theorem 5.1 of \cite{Karger20}]\label{lem:karger-x-to-z-detail-bound}
        For any $p<\theta$, $x_G(p)\le O(\log n) \cdot (p/b)^{\lambda/2} z_G(p)$. Here, $\theta$ is defined in \Cref{lem:z-approx-u}, and $b$ is defined such that $u(b)=\frac 12$.
    \end{lemma}
    
    \begin{lemma}[Theorem 7.1 of \cite{Karger20}]\label{lem:karger-partial-z-bound}
    For any $\alpha \ge 1$, let $z_{G,\alpha}(p)$ be the expected number of failed cuts whose value is at most $\alpha\lambda$, i.e., $z_{G,\alpha}(p) = \sum_{C: |C|\le \alpha\lambda} p^{|C|}$. Let $b$ be such that  $u(b)=\frac 12$. Then,
    $$z_{G, \alpha}(b) \ge n^{-O(1/\log\alpha)}.$$
    \end{lemma}

    Since $\zgp\le 1/n$, we have $p<\theta$. By \Cref{lem:karger-x-to-z-detail-bound}, it suffices to show that $(p/b)^\lambda = n^{-\Omega(1)}$.
    Because $p<b$, we have
    \[z_{G, \alpha}(p) = \sum_{C: |C|\le \alpha\lambda} p^{|C|}
    = \sum_{C: |C|\le \alpha\lambda} \left(\frac{p}{b}\right)^{|C|}b^{|C|}
    \ge \left(\frac{p}{b}\right)^{\alpha\lambda}z_{G, \alpha}(b)\]
    So,
    \[\left(\frac{p}{b}\right)^{\lambda}\le \left(\frac{z_{G, \alpha}(p)}{z_{G,\alpha}(b)}\right)^{1/\alpha}\le n^{-\frac{1}{\alpha} + \frac{C}{\alpha\log \alpha}}\]
    where $C$ is the hidden constant in  \Cref{lem:karger-partial-z-bound}.
    Choose $\alpha = e^{2C}$, then $(p/b)^\lambda = n^{-1/2e^{2C}} = n^{-\Omega(1)}$ as desired.
\end{proof}

\begin{lemma}\label{lem:importance-sampling-bias-stronger}
    If $\tilde{z} < \frac{1}{2n}$ and $z$ is a 2-approximation of $\zgp$, then the estimator output by importance sampling algorithm has $n^{-\Omega(1)}$ relative bias to $\ugp$.
\end{lemma}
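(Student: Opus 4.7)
The plan is to combine three facts: the $O(1/n)$ relative bias of the importance sampling estimator as an estimator of $\zgp$ (from \Cref{lem:interface-importance-zgp}); the inclusion-exclusion gap bound $0\le \zgp-\ugp\le \xgp$; and \Cref{lem:stronger-relative-bias}, which under the hypothesis $\zgp\le 1/n$ upgrades the $\xgp$ vs.\ $\zgp$ comparison from the rather weak $O(1/\log n)$ of \Cref{lem:z-approx-u} to $n^{-\Omega(1)}$.

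First I would use the hypothesis. Since $\tilde z$ is a $2$-approximation of $\zgp$ and $\tilde z<\frac1{2n}$, we have $\zgp\le 2\tilde z<\frac1n$. Since $\zgp\ge\ugp$ in general, $\zgp$ is in particular below the threshold $\theta$ (equivalently $p<\theta$, which we already assume for the reliable case), so \Cref{lem:z-approx-u} and more importantly \Cref{lem:stronger-relative-bias} apply, giving $\xgp\le n^{-\Omega(1)}\zgp$.

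Next I would bound the bias of the estimator $Y$ returned by importance sampling, viewed as an estimator of $\ugp$, by a triangle inequality:
\[ |\E[Y]-\ugp|\;\le\;|\E[Y]-\zgp|+|\zgp-\ugp|\;\le\;O(1/n)\,\zgp+\xgp\;\le\; n^{-\Omega(1)}\zgp, \]
where the first term uses the relative bias guarantee of \Cref{lem:interface-importance-zgp} and the second uses $\zgp-\ugp\le\xgp$ from inclusion-exclusion, followed by \Cref{lem:stronger-relative-bias}.

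Finally I would convert this into relative bias with respect to $\ugp$. From $\ugp\ge \zgp-\xgp\ge (1-n^{-\Omega(1)})\zgp$, we get $\zgp\le (1+n^{-\Omega(1)})\ugp$, so
\[ \frac{|\E[Y]-\ugp|}{\ugp}\;\le\;\frac{n^{-\Omega(1)}\zgp}{\ugp}\;\le\;n^{-\Omega(1)}, \]
as claimed. No step looks hard once \Cref{lem:stronger-relative-bias} is in hand; the only mildly delicate point is making sure the hypothesis $\zgp\le 1/n$ of that lemma is actually implied by $\tilde z<\frac1{2n}$ together with the $2$-approximation guarantee, which is immediate.
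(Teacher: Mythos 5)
Your proof is correct and follows essentially the same route as the paper: deduce $\zgp<1/n$ from the $2$-approximation, invoke \Cref{lem:interface-importance-zgp} for the $O(1/n)$ bias to $\zgp$ and \Cref{lem:stronger-relative-bias} for $\xgp\le n^{-\Omega(1)}\zgp$, then combine via inclusion-exclusion. The only cosmetic difference is that you phrase the combination as a triangle inequality while the paper writes it as a two-sided sandwich $(1-O(1/n))\ugp\le z'\le(1+n^{-\Omega(1)})\ugp$; these are interchangeable.
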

\begin{proof}
    For an importance sampling estimator, its expectation $z'$ satisfies $(1-O(1/n))\zgp \le z' \le \zgp$ by \Cref{lem:ignore-cuts-log-n,lem:very-reliable-bias}.
    We have $\zgp \le 2\tilde{z} < n^{-1}$.
    Using the inclusion-exclusion bound $\zgp - \xgp \le \ugp \le \zgp$ and \Cref{lem:stronger-relative-bias}, we have $(1-O(1/n))\ugp \le z'\le (1+n^{-\Omega(1)})\ugp$.
    In conclusion the importance sampling estimator has relative bias $n^{-\Omega(1)}$ to $\ugp$.
\end{proof}

Finally, we combine the results to prove \Cref{lem:interface-importance}.
By \Cref{lem:criteria-inverse,lem:criteria-2-inverse}, we only switch to unreliable case when the assumption of \Cref{lem:interface-importance} does not hold. The algorithm outputs an importance sampling estimator in the reliable case. Its relative bias is $n^{-\Omega(1)}$ by \Cref{lem:importance-sampling-bias-stronger}.
Its relative variance is $\le 1$ by \Cref{lem:relative-variance,lem:reliable-main,lem:very-reliable-likelihood-ratio}. (The estimator takes average of $n^{1+o(1)}$ samples.)
This argument requires the boostrapping step to correctly obtain a 2-approximation, which happens whp.

\section{The Unreliable Case: Recursive Contraction}
\label{sec:contraction}
\subsection{Description of the Algorithm}

The algorithm is recursive. First, we recall the base cases from \Cref{sec:organization}.
\begin{enumerate}
    \item If $n \le \eps^{-O(1)}$, then we run Karger's algorithm \cite{Karger20} that gives an unbiased estimator for $\ugp$. 
    \item The second base case is to run Monte Carlo sampling when $p \ge \theta$ for some threshold $\theta$ whose value is given by \Cref{lem:z-approx-u}. 
    \item The final base case is invoked when \Cref{def:surrogate-criteria} and \Cref{def:surrogate-criteria-2} both hold.
    In this case, we run the importance sampling algorithm in \Cref{sec:packing}.
\end{enumerate}


We say the algorithm in the recursive case if we are not in any base case. By \Cref{lem:criteria-inverse,lem:criteria-2-inverse,lem:z-approx-u}, we have the following fact, which shows that the recursive case is a moderately unreliable case.
\begin{fact}\label{fact:recursive-case-ugp-range}
    In the recursive case, we have $n^{-1-O(1/\log \log n)} \le \ugp \le n^{-\Omega(1/\log \log n)}$.
\end{fact}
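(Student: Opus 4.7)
The plan is to handle the two inequalities separately. For the upper bound, I would rely only on the fact that we have escaped the second base case, so $p < \theta$. Since $u_G(\cdot)$ is monotonically non-decreasing in the edge-failure probability (deleting edges with higher probability can only make disconnection more likely), we have $\ugp \le u_G(\theta)$, and by \Cref{lem:z-approx-u} the latter is $n^{-\Omega(1/\log\log n)}$. This handles the upper bound immediately.

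For the lower bound, the idea is to do a case split based on which of the two surrogate criteria in \Cref{def:surrogate-criteria} and \Cref{def:surrogate-criteria-2} failed (at least one must fail, since otherwise we would be in the third base case). In the first case, the condition $n^{1+30\delta}\tilde k_{\pi_i} p^{2\pi_i}\le 1$ from \Cref{def:surrogate-criteria} fails for some $i$; combined with $p<\theta$, \Cref{lem:criteria-inverse} directly gives $\ugp > n^{-1-O(1/\log\log n)}$. In the second case, \Cref{def:surrogate-criteria} holds but \Cref{def:surrogate-criteria-2} fails, i.e., $\tilde z \ge \frac{1}{2n}$; then \Cref{lem:criteria-2-inverse} yields $\ugp \ge n^{-1-O(1/\log\log n)}$.

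The one subtlety in the second case is that \Cref{lem:criteria-2-inverse} requires $\tilde z$ to be a genuine $2$-approximation of $\zgp$; this is the ``whp'' guarantee built into the bootstrapping step of \Cref{def:surrogate-criteria-2}, which combines the $\le 1$ relative variance estimator from \Cref{lem:interface-importance-zgp} with the median-of-averages amplification in \Cref{lem:mc-sample}. Conditioning on this high-probability event, the hypothesis of \Cref{lem:criteria-2-inverse} is satisfied and the lower bound follows. Combining the two inequalities completes the proof.

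I do not expect any genuine obstacle here: the fact is essentially a bookkeeping statement that reads off the contrapositives of \Cref{lem:criteria-inverse} and \Cref{lem:criteria-2-inverse} as the lower bound, and uses monotonicity of $u_G$ together with the phase-transition value of $u_G(\theta)$ from \Cref{lem:z-approx-u} for the upper bound. The only care needed is to state that the bootstrapping step gives a valid $2$-approximation with high probability, so the entire claim should be understood to hold whp.
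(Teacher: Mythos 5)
Your proposal is correct and takes essentially the same route as the paper, which states this Fact without a displayed proof and simply cites \Cref{lem:criteria-inverse}, \Cref{lem:criteria-2-inverse}, and \Cref{lem:z-approx-u}; you have filled in exactly the bookkeeping the paper leaves implicit (upper bound from $p<\theta$ and monotonicity via \Cref{lem:z-approx-u}, lower bound from the two surrogate-criterion contrapositives, with the whp caveat for the bootstrapped $\tilde z$).
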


The random contraction algorithm is defined as follows: 
We say $H \sim G(q)$ if $H$ is generated from $G$ by contracting each edge independently with probability $1-q$. (We will define the value of $q$ shortly.)
We repeat this process twice independently to generate independent samples $H_1, H_2 \sim G(q)$, and recursively obtain estimators $X_1, X_2$ for $u_{H_1}(p/q)$ and $u_{H_2}(p/q)$ respectively. We return the average of $X_1$ and $X_2$ as the estimator for $\ugp$.

In the random contraction algorithm above, we use $q = 2^{-1/\gamma}$, where $\gamma$ is defined next. This parameter is related to the ideal tree packing from \Cref{sec:packing}.  
Recall that $\ell^*(e)$ is the load on edge $e$ in an ideal tree packing.
Let $\gamma^*$ be the maximum threshold such that contracting edges $e$ with $\ell^*(e)\le 1/\gamma^*$ will result in at most $\delta n$ nodes for $\delta = \Theta(1/\log\log n)$. 
%
Ideally, we would like to set $\gamma = \gamma^*$, but we don't know the ideal tree packing. Instead, we $(1\pm\delta)$-approximate the load on every edge by a greedy tree packing as described in \Cref{def:greedy-packing}. Call the loads induced by greedy tree packing as greedy loads. Now, we take the maximum threshold $\gamma'$ such that contracting edges with greedy load $\le 1/\gamma'$ produces at most $\delta n$ nodes. We set $\gamma = \gamma'$ if $\gamma' \le \lambda$, the min-cut value; else, we set $\gamma = \lambda$.

To make random contraction efficient, we want to ensure that $m \le n^{1+o(1)}$ in every recursive step. We would like to use standard cut sparsification to achieve this bound. But, sparsification introduces additional variance, which we also need to control. It turns out that the right choice is to have $m\le n^{1+\Theta\left(\frac{1}{\sqrt{\log n}}\right)}$ during the recursion. When this is violated, we perform a sparsification step given by the following lemma.
(The lemma essentially follows Section 5 of \cite{CenHLP24}, but our assumption is slightly weaker. We include a proof in the appendix for completeness.)
\begin{restatable}{lemma}{sparsify}\label{lem:sparsify-before-contract}
    Given a graph $G$ and failure probability $p$ such that the algorithm is in the recursive case, we can generate a sparsifier $H$ with $\lambda_H=\tO(1)$ and a probability $q < p$, such that $u_H(q)$ is an unbiased estimator of $\ugp$ with relative second moment at most $2+O\left(\frac{1}{\log n}\right)$.
    The running time is $\tO(m)$.
\end{restatable}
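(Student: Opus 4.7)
The plan is to transplant the argument of \cite[Section 5]{CenHLP24} with one minor adjustment: our hypothesis of being in the recursive case is slightly weaker than the hypothesis used in that paper, but it still yields the two quantitative facts needed by the analysis. Specifically, \Cref{fact:recursive-case-ugp-range} gives $p<\theta$ and $\ugp\in[n^{-1-O(1/\log\log n)},n^{-\Omega(1/\log\log n)}]$, and then \Cref{lem:z-approx-u} yields $x_G(p)\le z_G(p)/\log n$ together with $(1-1/\log n)z_G(p)\le \ugp\le z_G(p)$. These are exactly the ingredients that the variance analysis needs.

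The construction proceeds in two stages. First, compute edge strengths $k_e$ in $\tO(m)$ time and contract all $3\lambda$-strong components via \Cref{lem:contract-strong-oneshot} (valid because the recursive case forces $p^\lambda=o(1/n)$), losing only a $(1\pm o(1))$ factor in $\ugp$. Then, as in \cite{CenHLP24}, perform strength-based edge sampling to obtain a sparsifier $H$ with $\lambda_H=\tO(1)$, together with a failure probability $q<p$ chosen so that for each cut $C$ of $G$, the marginal probability that $C$ fails in $H(q)$ equals $p^{|C|}$. Taking the union over cuts in the natural coupling with $G(p)$ then gives $\E_H[u_H(q)]=\ugp$ as needed.

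For the relative second moment, use the pointwise bound $u_H(q)\le z_H(q)$ and expand
\[
\E[z_H(q)^2] \;=\; \sum_{C_1, C_2}\Pr\bigl[\text{$C_1$ and $C_2$ both fail in $H(q)$}\bigr],
\]
then split into diagonal ($C_1=C_2$) and off-diagonal ($C_1\ne C_2$) pieces. The diagonal piece is controlled by keeping the per-edge joint-failure probability across the two copies of $H(q)$ close to $p^2$ (enabled by the choice $q<p$), giving a contribution of at most $(1+O(1/\log n))z_G(p)^2$. The off-diagonal piece is handled by a cut-counting argument invoking $x_G(p)\le z_G(p)/\log n$, which absorbs the correlation from the shared $H$-randomness and bounds the sum by $(1+O(1/\log n))z_G(p)^2$ as well. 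Combining and dividing by $\ugp^2\ge (1-O(1/\log n))z_G(p)^2$ yields the relative second moment bound $2+O(1/\log n)$.

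The main technical obstacle is arranging the diagonal and off-diagonal bounds simultaneously: a careful tuning of $q<p$ in concert with the strength-based sampling is what keeps the diagonal close to $z_G(p)^2$ without blowing up the off-diagonal through factors of the form $(q/p)^{|C_1\cap C_2|}$. Everything else is routine bookkeeping; the $\tO(m)$ running time is dominated by the strength computation and the sampling step.
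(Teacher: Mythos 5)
Your plan deviates from the paper's proof in two places that I think are genuine gaps, not just cosmetic differences.

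\textbf{Strength-based sampling breaks unbiasedness with a single $q$.} You propose to build $H$ by strength-based edge sampling and then find ``a failure probability $q<p$ chosen so that for each cut $C$ of $G$, the marginal probability that $C$ fails in $H(q)$ equals $p^{|C|}$.'' With strength-based sampling, edge $e$ is included in $H$ with some rate $\alpha_e$ depending on its strength, and the marginal probability that $e$ survives the combined process (sampling then failing with probability $q$) is $1-\alpha_e(1-q)$. For this to equal $1-p$ for every edge, you would need $\alpha_e(1-q)=1-p$ for all $e$, which forces all $\alpha_e$ to be equal. That is precisely why the paper's appendix proof uses \emph{uniform} subsampling at a single rate $\alpha=\Theta\bigl(\frac{\log^3 n}{\lambda}\bigr)$ (via \Cref{lem:sparsify}) and sets $1-q=\frac{1-p}{\alpha}$: then $\alpha q+(1-\alpha)=p$ for every edge, and $\E[z_H(q)]=z_G(p)$ follows by a product-over-edges calculation. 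With varying $\alpha_e$ and a single $q$, the expectation $\E[z_H(q)]$ does not equal $z_G(p)$, so the ``unbiased'' claim fails at the very first step.

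\textbf{Contracting $3\lambda$-strong components is not justified in the recursive case.} You invoke \Cref{lem:contract-strong-oneshot}, claiming the recursive case forces $p^\lambda=o(1/n)$. It does not. \Cref{fact:recursive-case-ugp-range} gives $\ugp\le n^{-\Omega(1/\log\log n)}$, and \Cref{fact:ugp-basic-range} then only yields $p^\lambda\le\ugp\le n^{-\Omega(1/\log\log n)}=n^{-o(1)}$, which can be far larger than $1/n$. Indeed the paper's appendix derives exactly the range $n^{-4}\le p^\lambda\le n^{-\Omega(1/\log\log n)}$ from these facts — the upper end does not satisfy the hypothesis $p^\lambda\le \eps n^{-1}$ of \Cref{lem:contract-strong-oneshot}. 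The paper's proof accordingly does no such contraction: it passes directly to \Cref{lem:sparsify}, which already gives $\lambda_H=\tO(1)$ without restricting edge strengths.

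A smaller remark on the variance step: your diagonal/off-diagonal split is coarser than what the paper needs. For the off-diagonal you cannot uniformly bound $\bigl(\E[q^{2Y_e}]/p^2\bigr)^{|C_i\cap C_j|}$ by $1+O(1/\log n)$ because the exponent $|C_i\cap C_j|$ is unbounded; the paper therefore splits the off-diagonal into $|C_i\cap C_j|\le\lambda$ (where the $(1+O(\tfrac{1}{\lambda\log n}))^{\lambda}$ bound is fine) and $|C_i\cap C_j|>\lambda$ (where it uses the weaker estimate $\E[q^{2Y_e}]/p\le 1-\tau q$ together with $x_G(p)\le z_G(p)/\log n$). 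Your sketch collapses these and would need the three-way split to make the $O(1/\log n)$ terms come out.
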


Whenever we perform sparsification, we also reduce the relative variance by branching into two recursive calls $H_1, H_2$ by making two independent calls to the sparsfication algorithm. We then recursively obtain estimators $X_1, X_2$ for $u_{H_1}(q')$ and $u_{H_2}(q')$ respectively, and return the average of $X_1$ and $X_2$. (The value of $q'$ is given by \Cref{lem:sparsify-before-contract}.)


\subsection{Bias of the Estimator}
\begin{lemma}\label{lem:contract-bias}
    The algorithm outputs an estimator of $\ugp$ with relative bias $n^{-\Omega(1)}$.
\end{lemma}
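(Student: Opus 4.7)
The plan is to induct on the recursion tree of the algorithm, showing that every internal operation --- random contraction or sparsification --- preserves the relative bias of its child estimators, so that the overall relative bias of the root estimator is bounded by the worst relative bias over the leaves (i.e.\ the base cases).

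The first step is to verify that random contraction preserves bias. The key identity I would establish is $\E_{H \sim G(q)}[u_H(p/q)] = \ugp$, which holds because failing each edge of $G$ independently with probability $p$ can be decomposed as first contracting the edge with probability $1-q$ (preventing it from failing) and otherwise failing the surviving edge with probability $p/q$, since $q \cdot (p/q) = p$. Given this, if each recursive estimator $X_i$ on $H_i \sim G(q)$ satisfies $\E[X_i \mid H_i] \in (1 \pm \delta)\, u_{H_i}(p/q)$ by induction, the tower property gives $\E\bigl[\tfrac{X_1+X_2}{2}\bigr] \in (1 \pm \delta)\,\ugp$, so the bias does not amplify. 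I would then note that the sparsification step is handled the same way using the unbiasedness asserted in \Cref{lem:sparsify-before-contract}: conditional on the sparsifier $H$ produced by the lemma, $u_H(q')$ is an unbiased surrogate for $\ugp$, and the two-way branching averages two estimators each with relative bias $\delta$ to a single estimator of relative bias $\delta$.

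Next I would account for the leaves. Base cases~1 and~2 yield unbiased estimators (\Cref{thm:karger-estimator} and \Cref{lem:mc}), so they contribute zero bias. Base case~3 is the only source of bias and contributes relative bias $n^{-\Omega(1)}$ by \Cref{lem:interface-importance}. The induction on the recursion tree then immediately yields an overall relative bias of $n^{-\Omega(1)}$.

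The proof itself is mostly bookkeeping; the main subtlety I anticipate is that the ``$n$'' in the base case~3 bias refers to the size of the subproblem at that leaf, which can be smaller than the original input size. However, base case~1 already disposes of every subproblem of size $n \le \eps^{-O(1)}$, so any leaf that triggers base case~3 has local size at least $\eps^{-\Omega(1)}$; choosing the constant in the first base case's threshold sufficiently large therefore bounds the resulting bias by any desired polynomial in $\eps$. This matches how \Cref{thm:main} invokes the lemma, where the same observation drives the bias below $\eps/2$.
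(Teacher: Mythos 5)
Your proof is correct and takes essentially the same approach as the paper: induct over the recursion tree, observe that averaging two conditionally-unbiased (up to relative bias $\delta$) children preserves relative bias via the tower property with the unbiasedness of random contraction / sparsification, and trace the only source of bias to the importance-sampling base case. Your observation about the local-vs-global size of $n$ in the base case 3 bias and its resolution via the first base case's threshold is a genuine subtlety that the paper's proof of this lemma does not spell out but instead handles later in the proof of \Cref{thm:main}, exactly as you describe.
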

\begin{proof}
    The proof is by induction on vertex size.
    The first two base cases are unbiased. The last base case of importance sampling has relative bias $\beta=n^{-\Omega(1)}$ by \Cref{lem:interface-importance}.

    In both inductive steps of random contraction and sparsification, we output the estimator $X=\frac{X_1+X_2}{2}$, where $X_1, X_2$ are recursive estimators for $u_1, u_2$ respectively, and $\E[u_1]=\E[u_2] = \ugp$.
    By the inductive hypothesis, $|\E[X_1]-u_1|\le \beta u_1, |\E[X_2]-u_2|\le \beta u_2$.
    \begin{align*}
    |\E[X]-\ugp| 
    &= \nf 12 \cdot |\E[X_1]+\E[X_2]- \E[u_1] -\E[u_2]|\\
    &\le \nf 12 \cdot (\E_{H_1} [|\E[X_1] - u_1| \mid H_1] + \E_{H_2}[|\E[X_2]-u_2| \mid H_2]|)\\
    &\le \nf{1}{2}\cdot (\E_{H_1}[\beta u_1] + \E_{H_2}[\beta u_2]) \\
    &= \beta\cdot \ugp.\qedhere
    \end{align*}
\end{proof}

\subsection{Some Important Properties}

We are left to bound the relative variance of the estimator and the running time of the algorithm. For this, we first need to establish some properties of the parameter $\gamma$ that we used in the algorithm. Then, we argue about the rate at which the vertex size decreases when we run random contraction. This is used both in bounding relative variance and the running time of the algorithm.

\medskip\noindent{\bf Properties of $\gamma$.}
First, we give two properties of $\gamma$.
    Recall that $\delta = \Theta(1/\log\log n)$ is the parameter such that greedy tree packing approximates the load on every edge in ideal tree packing up to a factor of $1\pm \delta$ by \Cref{cor:greedy-packing}.
    
\begin{fact}\label{fact:gamma-star-to-gamma}
    $\gamma^* \in (1\pm \delta)\gamma'$.
\end{fact}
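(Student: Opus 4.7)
The plan is to deduce the approximate equality $\gamma^* \in (1\pm\delta)\gamma'$ directly from the multiplicative approximation $\ell^{\mathcal T}(e) \approx_\delta \ell^*(e)$ on every edge, by translating it into set containments between the two threshold edge sets and then into comparisons of their contracted node counts.

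First I would record the two key set containments. For any $y>0$, the approximation $(1+\delta)^{-1}\ell^*(e)\le\ell^{\mathcal T}(e)\le(1+\delta)\ell^*(e)$ immediately gives
\[
\{e:\ell^*(e)\le y/(1+\delta)\}\;\subseteq\;\{e:\ell^{\mathcal T}(e)\le y\}\;\subseteq\;\{e:\ell^*(e)\le (1+\delta)y\}.
\]
Define $M^*(\gamma)$ (resp.\ $M^T(\gamma)$) to be the number of vertices remaining after contracting $\{e:\ell^*(e)\le 1/\gamma\}$ (resp.\ $\{e:\ell^{\mathcal T}(e)\le 1/\gamma\}$). Since contracting a larger edge set can only merge more components, setting $y=1/\gamma$ in the containments above yields
\[
M^*(\gamma)\;\le\;M^T\bigl((1+\delta)\gamma\bigr)\qquad\text{and}\qquad M^T(\gamma)\;\le\;M^*\bigl((1+\delta)\gamma\bigr).
\]

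Next I would compare the two maxima by a straightforward case analysis using the definitions $\gamma^*=\max\{\gamma:M^*(\gamma)\le\delta n\}$ and $\gamma'=\max\{\gamma:M^T(\gamma)\le\delta n\}$. For every $\gamma\le\gamma^*/(1+\delta)$, the second inequality gives $M^T(\gamma)\le M^*((1+\delta)\gamma)\le M^*(\gamma^*)\le\delta n$, hence $\gamma\le\gamma'$; taking $\gamma=\gamma^*/(1+\delta)$ yields $\gamma^*\le(1+\delta)\gamma'$. Symmetrically, for every $\gamma\le\gamma'/(1+\delta)$, the first inequality gives $M^*(\gamma)\le M^T((1+\delta)\gamma)\le\delta n$, so $\gamma\le\gamma^*$, whence $\gamma'\le(1+\delta)\gamma^*$, i.e., $\gamma^*\ge\gamma'/(1+\delta)\ge(1-\delta)\gamma'$ (using $(1+\delta)(1-\delta)\le 1$). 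Combining the two bounds gives $\gamma^*\in(1\pm\delta)\gamma'$.

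There is no real obstacle here; the argument is essentially bookkeeping around a monotonicity statement ($E_1\subseteq E_2$ implies the contracted graph of $E_2$ has no more components than that of $E_1$). The only minor care needed is that the "maximum" thresholds in the definitions of $\gamma^*$ and $\gamma'$ are achieved at the discrete jumps $\{1/\ell^*(e)\}$ and $\{1/\ell^{\mathcal T}(e)\}$ respectively, but this does not affect the argument because the containments above are valid for every $y$, and the implications only use monotonicity of $M^*$ and $M^T$.
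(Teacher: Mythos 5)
Your proof is correct and follows essentially the same approach as the paper's: translate the multiplicative load-approximation $\ell^{\mathcal T}(e)\approx_\delta\ell^*(e)$ into containments between the two threshold edge sets, then use monotonicity of the contracted node count to compare the two maximal thresholds. Your version is if anything slightly cleaner, since introducing the monotone functions $M^*$ and $M^T$ avoids some strict-vs-nonstrict inequality bookkeeping that the paper glosses over.
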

\begin{proof}
    By definition, contracting edges with $\ell^*(e) \le 1/\gamma^*$ gives $< \delta n$ nodes but contracting edges with $\ell^*(e) \le 1/\gamma^*$ gives $\ge \delta n$ nodes. Now, we note the following:
    \begin{enumerate}
        \item First, for any edge $e$, if $\ell^*(e) \le 1/\gamma^*$, then $\ell(e) \le (1+\delta)/\gamma^*$. Therefore, contracting all edges with $\ell(e) \le (1+\delta)/\gamma^*$ also yields at most $\delta n$ nodes. Hence, $\gamma' \ge \gamma^*/(1+\delta)$.
        \item Next, suppose we contract all edges $e$ with $\ell(e) < (1-\delta)/\gamma^*$. All these edges have  $\ell^*(e) < 1/\gamma^*$. Therefore, this contraction will yield $\ge \delta n$ nodes. Therefore, $\gamma' \le \gamma^*/(1-\delta)$. \qedhere
    \end{enumerate}
\end{proof}

\begin{fact} \label{fact:u-to-p-gamma}
    When $\ugp\ge n^{-1-O(1/\log \log n)}$, it holds that $\ugp \ge p^{(1+\delta)\cdot \gamma}$.
\end{fact}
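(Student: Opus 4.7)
I would split the argument into two cases based on comparing $(1+\delta)\gamma$ with the min-cut value $\lambda$.

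\emph{Easy case: $(1+\delta)\gamma \ge \lambda$.} Since $p \in (0,1)$ and $p^x$ is decreasing in $x$, we get $p^{(1+\delta)\gamma} \le p^{\lambda}$, and \Cref{fact:ugp-basic-range} gives $p^{\lambda} \le u_G(p)$. Combining yields the claim.

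\emph{Main case: $(1+\delta)\gamma < \lambda$.} The construction $\gamma = \min(\gamma',\lambda)$ forces $\gamma = \gamma' < \lambda/(1+\delta) < \lambda$. By \Cref{fact:gamma-star-to-gamma}, $\gamma^* \in [(1-\delta)\gamma,\,(1+\delta)\gamma]$, so in particular $\gamma^* < \lambda$, and $\gamma^* \ge \pi^*$ holds by definition. I would then apply \Cref{lem:ugp-ktau} at any $\tau$ infinitesimally larger than $\gamma^*$: by the maximality in the definition of $\gamma^*$, such $\tau$ satisfies $k_\tau > \delta n$, so $z_G(p) \ge k_\tau\, p^{2\tau}$, and letting $\tau \to \gamma^{*+}$ gives $z_G(p) \ge \delta n \cdot p^{2\gamma^*}$. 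Since the recursive case excludes the second base case, $p < \theta$, and \Cref{lem:z-approx-u} promotes this to
\[
u_G(p) \;\ge\; \Bigl(1 - \tfrac{1}{\log n}\Bigr)\, z_G(p) \;\ge\; \tfrac{\delta n}{2}\, p^{2\gamma^*}
\]
for $n$ large enough.

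The remaining step is to verify $(\delta n/2)\,p^{2\gamma^*} \ge p^{(1+\delta)\gamma}$, equivalently $\delta n/2 \ge (1/p)^{2\gamma^* - (1+\delta)\gamma}$. By \Cref{fact:gamma-star-to-gamma} the exponent $2\gamma^* - (1+\delta)\gamma$ lies in $[(1-3\delta)\gamma,\,(1+\delta)\gamma]$, which is positive (for small $\delta$) and strictly below $\lambda$. I would close this inequality by using the hypothesis $u_G(p) \ge n^{-1-O(1/\log \log n)}$ together with \Cref{fact:ugp-basic-range} (giving $p^\lambda \ge n^{-3-o(1)}$, hence an upper bound on $\log(1/p)$), and separately using the recursive-case upper bound $u_G(p) \le n^{-\Omega(1/\log \log n)}$ fed back into the derived $u_G(p) \ge (\delta n/2)\,p^{2\gamma^*}$ to obtain a matching lower bound on $\log(1/p)$. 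Carefully tracking how the $\delta = \Theta(1/\log \log n)$ slack interacts with the $n^{o(1)}$ slack in the recursive-case range pins $(1/p)^{2\gamma^* - (1+\delta)\gamma}$ below $\delta n/2$ and finishes the proof. The main obstacle is precisely this last arithmetic step, where the constant-factor gap between $2\gamma^*$ and $(1+\delta)\gamma$ must be absorbed by the $\Theta(1/\log \log n)$ margin provided by the recursive-case assumption rather than any absolute constant.
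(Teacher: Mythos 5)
Your easy case ($(1+\delta)\gamma \ge \lambda$) is correct and coincides with the paper's first case ($\gamma=\lambda$). Your setup of the main case through the chain $z_G(p)\ge k_\tau p^{2\tau} > \delta n\, p^{2\gamma^*}$ and \Cref{lem:z-approx-u} to reach $u_G(p)\ge(\delta n/2)p^{2\gamma^*}$ is also sound (the observation $\gamma^*\ge\pi^*$ and the limiting argument to get $k_\tau>\delta n$ are both fine, arguably more careful than the paper).

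The gap is the bridge inequality $(\delta n/2)\,p^{2\gamma^*}\ge p^{(1+\delta)\gamma}$, equivalently $\delta n/2 \ge (1/p)^{2\gamma^*-(1+\delta)\gamma}$. This is \emph{not implied} by the hypotheses, and can genuinely fail. The exponent $2\gamma^*-(1+\delta)\gamma$ is, as you computed, roughly $\gamma^*$, and in the main case $\gamma^*$ can be anywhere in $(\lambda/2,\lambda)$. Meanwhile, all the hypothesis gives you is $(1/p)^\lambda \le n^2/u_G(p)\le n^{3+o(1)}$ via \Cref{fact:ugp-basic-range}. So $(1/p)^{2\gamma^*-(1+\delta)\gamma}$ can be as large as $n^{3(1-o(1))}$, far above $\delta n/2 = n^{1-o(1)}$. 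Concretely: take $\gamma^*\approx\gamma\approx 0.9\lambda$ and $p^\lambda\approx n^{-3}$. Then $(\delta n/2)p^{2\gamma^*}\approx \delta n^{-4.4}$ while $p^{(1+\delta)\gamma}\approx n^{-2.7}$, so the bridge is false by a polynomial factor. The fact \Cref{fact:u-to-p-gamma} nonetheless holds in this regime, but only because $u_G(p)\ge n^{-1-o(1)}\gg n^{-2.7}$ directly; the cut-packing lower bound $(\delta n/2)p^{2\gamma^*}$ on $u_G(p)$ is simply loose here, and your chain of inequalities routes entirely through it. Your proposed remedy of upper- and lower-bounding $\log(1/p)$ cannot salvage this: the only upper bound available on $(1/p)$ (through $\lambda$ and the hypothesis) is exactly the $n^{3+o(1)}$ bound that is too weak, and a lower bound on $\log(1/p)$ points in the wrong direction for what you need.

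The paper avoids this by never isolating the bridge inequality. Instead it keeps the two lower bounds $u_G(p)\ge\delta n\cdot p^{2(1+\delta)\gamma}$ and $u_G(p)\ge(\delta n)^{-1-\Theta(1/\log\log n)}$ as a pair, rewrites the second as $u_G(p)^{1\pm\Theta(1/\log\log n)}\ge(\delta n)^{-1}$, and multiplies the two so that the $\delta n$ factor cancels, yielding $u_G(p)^{2\pm\Theta(1/\log\log n)}\ge p^{2(1+\delta)\gamma}$ and hence the claim (up to harmless changes in the constant inside $O(1/\log\log n)$). The multiplicative combination is exactly what handles the regime where one of the lower bounds is loose and the other is tight; your argument, by insisting the cut-packing bound alone dominates $p^{(1+\delta)\gamma}$, cannot cover that regime.
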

\begin{proof}
There are two cases depending on whether $\gamma=\lambda$ or $\gamma = \gamma'$.
In the first case, we have $\ugp \ge p^\lambda \ge p^{(1+O(1/\log \log n))\lambda}$.

In the second case, we first note that $\ugp \ge n^{-1-O(1/\log \log n)}$ also implies $\ugp \ge (\delta n)^{-1-\Theta(1/\log \log n)}$ since
\[
\delta^{-1-\Theta(1/\log \log n)} 
\le O(\log\log n)^{1+O(1/\log \log n)}
\le n^{O(1/\log\log n)}.
\]
Next, recall that $k_{\tau}$ is the number of nodes after contracting all edges $e$ with $\ell^*(e) < 1/\tau$.
    By \Cref{lem:ugp-ktau} and \Cref{fact:gamma-star-to-gamma}, 
    \[
        \ugp\ge k_{\gamma^*} p^{2\gamma^*}\ge \delta n \cdot p^{2(1+\delta)\gamma}.
    \]
    We combine this with $\ugp\ge (\delta n)^{-1-\Theta(1/\log \log n)}$, which is equivalent to 
    \[
        (\ugp)^{1+\Theta(1/\log \log n)} \ge (\delta n)^{-1}.
    \]
    
    Multiplying the two inequalities above, we get 
    \[
        (\ugp)^{2+\Theta(1/\log \log n)} \ge   p^{2(1+\delta)\gamma}.
    \]
    Therefore, $(\ugp)^2 \ge p^{2(1+\delta)\gamma}$, or equivalently, 
    $\ugp \ge p^{(1+\delta)\gamma}$.
\end{proof}

\medskip\noindent{\bf Size Decrease.}
We show that the vertex size decreases sufficiently after a random contraction step.

\begin{lemma}\label{lem:contract-size-dec}
    Whp, it holds that for every random contraction step on an $n$-vertex graph, the vertex size after contraction is at most $\left(\frac 12 + O\left(\frac{1}{\log \log n}\right)\right)n$.
\end{lemma}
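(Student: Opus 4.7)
The plan is to exploit the key identity $q^\gamma = 1/2$ (immediate from $q = 2^{-1/\gamma}$) and bound the number of connected components of the contracted subgraph via expectation plus concentration. Observe that since the algorithm's preprocessing contracts all $3\lambda$-strong components, the min-cut of the working graph is at least $\lambda$, and hence every vertex has degree at least $\lambda \ge \gamma$. Thus, for each vertex $v$, the probability that $v$ becomes an isolated super-node after random contraction (i.e., none of its incident edges is contracted) is $q^{\deg(v)} \le q^{\gamma} = 1/2$. By linearity of expectation, $\mathbb{E}[\#\text{isolated vertices}] \le n/2$.

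The vertex count after contraction decomposes as $\#\text{isolated} + \#\text{non-trivial components}$, where the latter counts components of size at least two. I would bound the non-trivial component count by $O(\delta n) = O(n/\log\log n)$ with high probability, which is the crux of the argument. This uses the ideal tree-packing structure: by Fact~4.4, $\gamma \in (1\pm\delta)\gamma^*$, so contracting all edges $e$ with $\ell^*(e)\le 1/\gamma^*$ yields at most $\delta n$ super-nodes. At each level of the recursive min-ratio decomposition defining $\gamma^*$, the super-nodes are interconnected by enough edges that random contraction at rate $1-q = \Theta(1/\gamma)$ produces a supercritical random subgraph on the non-isolated portion, causing non-isolated vertices to coalesce into a small number of giant components per level and limiting the total non-trivial component count to $O(\delta n)$.

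Finally, I would apply concentration. Both $\#\text{isolated}$ and $\#\text{non-trivial components}$ are functions of $m$ independent Bernoulli variables (the contraction indicators) with Lipschitz constant $O(1)$, so McDiarmid's inequality gives concentration within $O(\sqrt{n\log n})$ of their expectations, which is $o(\delta n)$. Combining,
\[
\#\text{vertex count after contraction} \;=\; \#\text{isolated} + \#\text{non-trivial} \;\le\; \tfrac{n}{2} + O(\delta n) \;=\; \left(\tfrac{1}{2} + O\!\left(\tfrac{1}{\log\log n}\right)\right) n
\]
with high probability, and a union bound over the $O(\log n)$ recursive contraction steps yields the ``for every step'' guarantee.

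The main obstacle is the $O(\delta n)$ bound on non-trivial components. A naive forest-rank argument (via a spanning tree in the low-load edges) only gives $\mathbb{E}[\text{rank}] \ge (1-q)(1-\delta)n$, which is much less than $n/2$ when $\gamma$ is large, and hence does not imply the required bound on the non-trivial count. The actual argument must exploit the multi-level tree-packing structure to establish that cycles created by simultaneously contracting many edges boost the effective rank beyond the single-tree estimate and suppress small non-trivial components, rather than just relying on one forest.
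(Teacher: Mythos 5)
The heart of your argument — bounding the number of non-trivial components by $O(\delta n)$ — is precisely the gap, and it does not obviously close. The reason the naive bound $N\le S/2$ (where $S$ is the number of non-isolated vertices) is useless is that you only have $\mathbb{E}[I]\le n/2$, not $I=O(\delta n)$; so $S$ and hence $N$ can still be $\Theta(n)$. The supercriticality heuristic does not plug the hole either: within a cluster of the decomposition, the number of contracted components can be a constant fraction of the cluster size with \emph{all} of them non-trivial (think of a long cycle or bounded-degree expander where every vertex gets contracted with a neighbor, but the resulting super-nodes are still numerous), so there is no a priori reason the contracted subgraph coalesces into ``a small number of giant components per level.'' Establishing this would require exactly the quantitative control over the contraction rate that you are trying to avoid.

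The paper instead gives per-cluster bounds and never separates isolated from non-trivial vertices. It partitions the vertex set into at most $\delta n$ clusters by removing edges with $\ell^*(e)>1/\gamma^*$, so that each cluster has min-ratio value at least $\gamma^*$, and then proves a single claim: a graph with min-ratio value $\pi$, after contracting each edge with probability $1-q$, has at most $1+(1+\eps)q^\pi n$ surviving vertices, with failure probability $\exp(-\Omega(\eps^2 n))$. The proof couples a continuous-time edge-arrival process with the analogous process on a star of $n$ vertices; the min-ratio condition $m_t/(n_t-1)\ge\pi$ guarantees the waiting time to go from $k$ to $k-1$ super-nodes is stochastically dominated by $1/\pi$ times the corresponding star waiting time, which follows an explicit exponential distribution. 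Summing $1+(1+\eps)q^{\gamma^*}n_i$ over all clusters yields $\delta n + (1+\eps)q^{\gamma^*}n$, and the relation $q^{\gamma^*}\le(1/2)^{1-\delta}\le \nf 12 + O(1/\log\log n)$ finishes. Your $\mathbb{E}[I]\le n/2$ calculation is a correct observation but is subsumed by (not a substitute for) the star-coupling bound, and your appeal to $3\lambda$-strong contraction is unneeded here — $\deg(v)\ge\lambda$ because a degree cut is a cut. Also note that your McDiarmid deviation $O(\sqrt{m\log n})$ can exceed $\delta n$ when $m\gg n^2/\log n$, whereas the paper's per-cluster Hoeffding bound does not have this issue because it operates on $n-1$ star edges. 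You should replace the isolated/non-trivial split with the paper's cluster decomposition and prove the cluster-level size bound directly.
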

\begin{proof}
We first argue about a single random contraction step.
Consider the clusters formed if we removed all edges with $\ell^*(e) > 1/\gamma^*$.
Each cluster has min-ratio cut value $\ge \gamma^*$.
Suppose a cluster has $n_i$ vertices. We use the following claim to bound the cluster size after random contraction:

\begin{claim}\label{lem:size-dec-cluster}
    Suppose $G$ has min-ratio cut $\pi$. Then $H\sim G(q)$ has vertex size at most $1+(1+\eps)q^{\pi} n$ with probability at least $1-\exp(-\Omega(\eps^2 n))$.
\end{claim}
\begin{proof}
    Consider a continuous-time random process where each edge independently arrives at a time $t_e$, and the arrival times follow independent exponential distributions of rate 1. Let $G_t$ be the random graph formed by contracting all edges arriving up to time $t$. Let $n_t$ and $m_t$ be the number of vertices and edges in $G_t$ respectively. We consider how $n_t$ evolves during the process. Let $t(k)$ be the first time when $n_t\le k$. Initially $n_0=n$ and $t(n)=0$. Now, $n_t$ decreases by 1 whenever an uncontracted edge (i.e., an edge whose endpoints are vertices that haven't been contracted to the same node yet) arrives. Since there are $m_t$ uncontracted edges at time $t$, the earliest arrival time of an uncontracted edge after time $t$ follows an exponential distribution of rate $m_t$. 
    After any edge contraction, the nodes form a partition of $G$. Thus, $\frac{m_t}{n_t-1}\ge \pi$.

    Let $\Delta_k = t(k-1)-t(k)$. As discussed above, $\Delta_k$ follows an exponential distribution of rate $m_{t(k)}\ge \pi(k-1)$.
    Next, we couple each $\Delta_k$ with a new random variable defined by the aforementioned contraction process on a star graph. Let $R$ be a star on $n$ vertices and $n-1$ edges. We use the same notations as above for graph $R$, but with a superscript of $R$. Note that during the contraction process, we always have $m_t^R = n_t^R-1$.
    Therefore, $\Delta_k^R=t^R(k-1)-t^R(k)$ follows an exponential distribution of rate $k-1$.
    Let $\Delta'_k = \Delta_k^R / \pi$. Then, $\Delta'_k$ follows an exponential distribution of rate $(k-1)\pi$.
    Because of the memoryless property of exponential distributions and the independence of edges, $\Delta_k$ or $\Delta^R_k$ for different $k$ are independent.
    We can couple $\Delta_k$ and $\Delta'_k$ so that each $\Delta_k$ is stochastically dominated by $\Delta'_k$.
    Then,
    \[t(k)=t(k)-t(n) = \sum_{i=k+1}^n \Delta_i \le
    \sum_{i=k+1}^n \Delta'_i 
    = \sum_{i=k+1}^n \frac{\Delta^R_i}{\pi} = \frac{t^R(k)}{\pi}\]
    This implies $n_t \le n^R_{\pi t}$.

    The size of $G(q)$ is $n_{-\ln q}$, which is upper bounded by $n^R_{-\pi \ln q}$. $n^R_{-\pi \ln q}$ is the vertex size of $R$ after contracting each edge with probability $1-q^\pi$.
    So, its expectation is $1+q^{\pi}(n-1)$. 
    By Hoeffding's inequality, $\Pr[|V(H)|> 1+(1+\eps)q^{\pi} n] \le \exp(-\Omega(\eps^2 n))$.
\end{proof}

Then, by \Cref{lem:size-dec-cluster}, the size of the cluster after the random contraction step is at most $1+(1+\delta)q^{\gamma^*}n_i$. By the definition of $\gamma^*$, there are at most $\delta n$ clusters. Therefore, the number of vertices after contraction is at most $\delta n + (1+\delta)q^{\gamma^*}n$.
In the recursive contraction algorithm, 
we chose $q$ such that $q^{\gamma} = \frac 12$. We also have that
    $\gamma \le \gamma'$, which means $q^{\gamma'} \le \frac 12$. By \Cref{fact:gamma-star-to-gamma}, we have $\gamma^* \ge (1-\delta)\gamma'$.
 This implies that 
 \[
    q^{\gamma^*} \le (q^{\gamma'})^{1-\delta} \le
 (\nf 12)^{1-\delta}
 \le
 \nf 12 + O\left(\frac{1}{\log \log n}\right).
 \]
So, the number of vertices is at most $\left(\frac 12 + O\left(\frac{1}{\log \log n}\right)\right) n$ whp.

The failure probability of the bound above in a single random contraction step is $\exp(-\Omega(\delta^2 n))$ by \Cref{lem:size-dec-cluster}. So, we can apply a union bound over the polynomially many random contraction steps and conclude the statement of the lemma.
%
\end{proof}

\subsection{Relative Variance of the Estimator}

We first prove bound the relative second moment of a random contraction step.
By \Cref{lem:z-approx-u}, in the unreliable case, $\zgp$ is a sufficiently good approximation to $\ugp$, where $\zgp$ is the expected number of failed cuts. So, instead of analyzing the relative variance of $u_H(p/q)$, we analyze the relative variance of $z_H(p/q)$. We bound its second moment below:

\begin{lemma}\label{lem:relvar-1step}
The relative second moment of $z_H(p/q)$ is  $2+O\left(\frac{\log \log n}{\log n}\right)$ where $H \sim G(q)$ and $q^\gamma = \nicefrac 12$.
\end{lemma}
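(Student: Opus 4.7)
The plan is to compute $\E[z_H(p/q)^2]$ directly by expanding over pairs of cuts, then compare with $(\E[z_H(p/q)])^2 = z_G(p)^2$. A cut $C$ of $G$ persists as a cut of $H$ iff none of its edges is contracted, an event of probability $q^{|C|}$; two cuts $C_1, C_2$ both persist iff no edge of $C_1\cup C_2$ is contracted, an event of probability $q^{|C_1\cup C_2|}$. Writing $z_H(p/q) = \sum_{C\text{ survives}}(p/q)^{|C|}$ and using $|C_1|+|C_2|=|C_1\cup C_2|+|C_1\cap C_2|$,
\[
\E\bigl[z_H(p/q)^2\bigr] \;=\; \sum_{C_1,C_2}(p/q)^{|C_1|+|C_2|}\, q^{|C_1\cup C_2|} \;=\; \sum_{C_1,C_2} p^{|C_1|+|C_2|}\, q^{-|C_1\cap C_2|},
\]
so the relative second moment is exactly $\E_\mu\bigl[q^{-|C_1\cap C_2|}\bigr]$, where $\mu$ is the distribution on ordered cut-pairs $(C_1,C_2)$ with $\mu(C_1,C_2)\propto p^{|C_1|+|C_2|}$, i.e., the product measure induced by $\mu_1(C)\propto p^{|C|}$.

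I would split this expectation into the diagonal part ($C_1=C_2$) and the off-diagonal part ($C_1\ne C_2$). For the diagonal: since $p/q<1$ in the recursive case (a consequence of \Cref{fact:u-to-p-gamma} combined with $\ugp\le n^{-\Omega(1/\log\log n)}$, which together force $p^\gamma<1/2$), the factor $(p/q)^{|C|}$ is maximized at the minimum cut $|C|=\lambda$, giving
\[
\sum_C p^{2|C|}\, q^{-|C|} \;\le\; (p/q)^\lambda\cdot z_G(p).
\]
Together with $z_G(p)\ge p^\lambda$, this yields a diagonal contribution of at most $q^{-\lambda}=2^{\lambda/\gamma}$. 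The core step is to show $2^{\lambda/\gamma}\le 2+O(\log\log n/\log n)$: this is where the tree-packing-driven choice of $\gamma$ pays off. Combining \Cref{fact:u-to-p-gamma} ($\ugp\ge p^{(1+\delta)\gamma}$) with \Cref{fact:ugp-basic-range} ($\ugp\le n^2 p^\lambda$) and the recursive-case bound $\ugp\ge n^{-1-O(1/\log\log n)}$, one controls the multiplicative gap between $\lambda$ and $\gamma$ tightly enough --- this requires carefully balancing the logarithmic relationships among $p^\lambda$, $\ugp$, and $n$ that are built into the definition of $\gamma$.

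For the off-diagonal part, I would stratify pairs by $s=|C_1\cap C_2|$. Edge-disjoint pairs contribute exactly $1$ to $q^{-|C_1\cap C_2|}$, so the only excess comes from $s\ge 1$. For small $s$, the factor $q^{-s}-1 = O(s/\gamma)$ is already small; for large $s$, two distinct cuts sharing $s$ edges are structurally constrained, and a cut-counting argument on the near-minimum-cut family (Karger's cut-counting theorem, refined through the ideal tree packing structure of \Cref{sec:packing}) bounds the aggregate $\mu$-weight of such pairs. Summed, the off-diagonal excess is $O(\log\log n/\log n)\cdot z_G(p)^2$, which together with the diagonal bound of $2(1+O(\log\log n/\log n))$ gives the claimed relative second moment.

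The main obstacle will be the off-diagonal analysis: pairs of distinct but heavily-overlapping near-minimum cuts can have $q^{-|C_1\cap C_2|}$ almost as large as the diagonal $q^{-\lambda}$, so cancelling the $q^{-s}$ blow-up against the rarity of such pairs requires combining Karger's cut-counting theorem with the finer-grained structure that the ideal tree packing imposes on the near-minimum cuts. It is precisely this delicate bookkeeping that realizes the ``sharp tradeoff'' between the progress rate (controlled by the choice $q^\gamma=1/2$) and the variance ($2+O(\log\log n/\log n)$) highlighted in \Cref{sec:organization}.
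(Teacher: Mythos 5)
Your expansion of $\E[z_H(p/q)^2]$ into $\sum_{C_1,C_2}p^{|C_1|+|C_2|}q^{-|C_1\cap C_2|}$ and the diagonal/off-diagonal split match the paper's $V_1, V_2, V_3$ decomposition in spirit, but the ``core step'' you identify for the diagonal is false, and the off-diagonal plan misses the paper's actual mechanism.

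\textbf{Diagonal.} Your bound $\sum_C p^{2|C|}q^{-|C|}/(z_G(p))^2\le q^{-\lambda}=2^{\lambda/\gamma}$ uses the weak lower bound $z_G(p)\ge p^\lambda$. Then you claim $2^{\lambda/\gamma}\le 2+O(\log\log n/\log n)$, i.e., $\lambda/\gamma\le 1+o(1)$. This is not true: the algorithm sets $\gamma=\min(\gamma',\lambda)$ where $\gamma'$ approximates a threshold $\gamma^*$ that can be as small as roughly $\pi^*>\lambda/2$ (\Cref{fact:pi-range}). So $\lambda/\gamma$ can be close to $2$ and $q^{-\lambda}$ close to $4$. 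There is no ``careful balancing'' that rescues $2^{\lambda/\gamma}\le 2+o(1)$; it simply fails. The paper avoids this by never lower-bounding $z_G(p)$ by $p^\lambda$: instead it uses \Cref{fact:u-to-p-gamma}, $z_G(p)\ge \ugp\ge p^{\tilde\gamma}$ with $\tilde\gamma=(1+\delta)\gamma$, which is a much stronger bound when $\gamma<\lambda$. Even with this, a naive maximization over $|C|$ does not give $2+o(1)$; the paper isolates a ``main'' piece $q^{-\tilde\gamma}\sum_C p^{2|C|}$ inside $V_1$ and controls the excess $V_1-q^{-\tilde\gamma}\sum_C p^{2|C|}$ via a calculus fact about $f(t)=(\alpha w)^t-w^t$ with $w=p^{\tilde\gamma}\le n^{-\Omega(1/\log\log n)}$, showing the excess is $O(\log\log n/\log n)\cdot(z_G(p))^2$. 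The isolated main piece then cancels exactly against a term subtracted from the bound on the small-intersection off-diagonal sum $V_2$, so the total from $V_1+V_2$ is $q^{-\tilde\gamma}(z_G(p))^2+O(\log\log n/\log n)(z_G(p))^2$, and $q^{-\tilde\gamma}=2^{1+\delta}=2+o(1)$.

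\textbf{Off-diagonal.} Your plan to invoke Karger's cut-counting theorem and the ideal-tree-packing structure of near-minimum cuts is not what the paper does, and it is more machinery than is needed. The paper splits off-diagonal pairs at the threshold $|C_i\cap C_j|\lessgtr\tilde\gamma$: for $|C_i\cap C_j|\le\tilde\gamma$, the factor $q^{-|C_i\cap C_j|}$ is uniformly $\le q^{-\tilde\gamma}$, giving the main contribution mentioned above; for $|C_i\cap C_j|>\tilde\gamma$, the paper writes $p^{|C_i|+|C_j|}q^{-|C_i\cap C_j|}\le p^{|C_i\cup C_j|}(p/q)^{\tilde\gamma}$ and appeals directly to $x_G(p)/z_G(p)\le 1/\log n$ from the phase transition (\Cref{lem:z-approx-u}, which applies since $p<\theta$) to bound $V_3\le O(1/\log n)(z_G(p))^2$. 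No cut-counting enters. Your suggestion that edge-disjoint pairs contribute ``excess'' $0$ and the total off-diagonal excess is $O(\log\log n/\log n)$ is also miscalibrated: the off-diagonal pairs with $|C_i\cap C_j|$ near $\tilde\gamma$ contribute $\Theta(1)$ per pair in the multiplier $q^{-s}$, not an $o(1)$ excess, and the off-diagonal part actually carries the dominant $2+o(1)$ contribution once the cancellation with the diagonal is done.
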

\begin{proof}
We abbreviate cut value $|C_i|$ by $c_i$ in the proof.
By \Cref{fact:u-to-p-gamma}, we have that $\ugp\ge p^{(1+\delta)\gamma}$.
Let $\tilde{\gamma}=(1+\delta)\gamma$ such that $\ugp\ge p^{\tilde{\gamma}}$.
Then 
\[
    q^{-\tilde{\gamma}}
    = 2^{1+\delta} 
    = 2^{1 + O(1/\log\log n)}
    = 2+O(1/\log \log n).
\]

The relative second moment is bounded as:
\begin{align*}
\frac{\E[z_H(p/q)^2]}{(\zgp)^2}
&= \frac{1}{(\zgp)^2}\cdot \E\bigg[\sum_{C_i, C_j \in \mathcal{C}(H)}\left(\frac{p}{q}\right)^{c_i+c_j}\bigg] 
= \frac{1}{(\zgp)^2}\sum_{C_i, C_j\in \mathcal{C}(G)} q^{|C_i\cup C_j|}\cdot \left(\frac{p}{q}\right)^{c_i+c_j}\\
&= \frac{1}{(\zgp)^2}\left(\sum_{C_i}\frac{p^{2c_i}}{q^{c_i}} + \sum_{C_i\ne C_j: |C_i\cap C_j|\le \tilde{\gamma}} \frac{p^{c_i+c_j}}{q^{|C_i\cap C_j|}} + \sum_{C_i\ne C_j: |C_i\cap C_j| > \tilde{\gamma}} \frac{p^{c_i+c_j}}{q^{|C_i\cap C_j|}}\right).
\end{align*}
In the above expression, we distinguished between the cases $C_i = C_j$ and $C_i \ne C_j$, and the second case is further split into $|C_i \cap C_j| \le \tilde{\gamma}$ and $|C_i\cap C_j| > \tilde{\gamma}$. We define:
\begin{align*}
    V_1 &= \sum_{C_i}\frac{p^{2c_i}}{q^{c_i}}\\
    V_2 &= \sum_{C_i\ne C_j: |C_i\cap C_j|\le \tilde{\gamma}} \frac{p^{c_i+c_j}}{q^{|C_i\cap C_j|}}\\
    V_3 &= \sum_{C_i\ne C_j: |C_i\cap C_j| > \tilde{\gamma}} \frac{p^{c_i+c_j}}{q^{|C_i\cap C_j|}}.
\end{align*}

We first bound $V_1$:

\[
V_1 - q^{-\tilde{\gamma}}\sum_{C_i} p^{2c_i}
= \sum_{C_i} p^{2c_i}\left(\frac{1}{q^{c_i}} - \frac{1}{q^{\tilde{\gamma}}}\right)
= \frac{p^{\tilde{\gamma}}}{q^{\tilde{\gamma}}} \sum_{C_i} p^{c_i} \cdot p^{c_i-\tilde{\gamma}}(q^{-(c_i-\tilde{\gamma})}-1).
\]

Let us also denote $t_i=\frac{c_i-\tilde{\gamma}}{\tilde{\gamma}}$ and $f(t)=(q^{-\tilde{\gamma}}\cdot w)^t-w^t = p^{c_i-\tilde{\gamma}}(q^{-(c_i-\tilde{\gamma})}-1)$ where $w=p^{\tilde{\gamma}}$. Therefore, 
\[
    V_1 - q^{-\tilde{\gamma}}\sum_{C_i} p^{2c_i} = \frac{p^{\tilde{\gamma}}}{q^{\tilde{\gamma}}} \sum_{C_i} p^{c_i}\cdot f(t_i).
\]
\begin{fact}
    Suppose $0 < w \le n^{-\Omega(1/\log \log n)}$ and $2 \le \alpha \le e$. Function $f(t)=(\alpha w)^t-w^t$ has global maximum $\le O\left(\frac{\log \log n}{\log n}\right)$.
\end{fact}
\begin{proof}
Let $f'(t) = (ew)^t-w^t\ge f(t)$.
    \begin{align*}
    \frac{df}{dt}
    &= (ew)^t \ln (ew) - w^t \ln w
    = e^t w^t \left(1 - (1-e^{-t})\ln\frac{1}{w}\right).
\end{align*}
$\frac{df}{dt}=0$ at $t^*=\ln\left(\frac{\log w}{\log (ew)}\right)$.
Notice that $1 - (1-e^{-t})\ln\frac{1}{w}$ is monotone decreasing, so $\frac{df}{dt}\ge 0$ when $t<t^*$ and $\frac{df}{dt}\le 0$ when $t>t^*$. Thus, $f(t^*)$ is a global maximum.

We now calculate the value of the global maximum. First, we get
\[e^{t^*}-1 = \frac{\log w}{\log (ew)}-1 = -\frac{1}{\ln (ew)}  \le O\left(\frac{\log \log n}{\log n}\right). \]
We have $w^{t^*}\le 1$ because $w\in(0,1)$ and $t^*>0$. So $f(t^*) \le w^{t^*}(e^{t^*}-1) \le O\left(\frac{\log \log n}{\log n}\right)$.
\end{proof}

Combining this fact with $p^{\tilde{\gamma}}\le \ugp\le\zgp$, $q^{-\tilde{\gamma}} = O(1)$, and $\zgp = \sum_{C_i} p^{c_i}$, we get
\begin{align}\label{eq:v1}
    V_1 - q^{-\tilde{\gamma}}\sum_{C_i}p^{2c_i}
    \le O\left(\frac{\log \log n}{\log n}\right) \cdot (\zgp)^2.
\end{align}

Next, we bound $V_2$ as follows:
\begin{align}\label{eq:v2}
V_2 
= \sum_{C_i\ne C_j, |C_i\cap C_j|\le \tilde{\gamma}} \frac{p^{c_i+c_j}}{q^{|C_i\cap C_j|}}
\le q^{-\tilde{\gamma}}\cdot \sum_{C_i, C_j} p^{c_i+c_j} - q^{-\tilde{\gamma}}\cdot \sum_{C_i} p^{2c_i}
= q^{-\tilde{\gamma}}\cdot (\zgp)^2 - q^{-\tilde{\gamma}}\cdot \sum_{C_i} p^{2c_i}.
\end{align}

Finally, we bound $V_3$.
Recall that $x_G(p)$ is the expected number of failed cut pairs in \Cref{lem:z-approx-u}.
\[
V_3 
= \sum_{C_i\ne C_j, |C_i\cap C_j|>\tilde{\gamma}} p^{|C_i\cup C_j|}\left(\frac{p}{q}\right)^{|C_i\cap C_j|}
\le \sum_{C_i\ne C_j} p^{|C_i\cup C_j|}\left(\frac{p}{q}\right)^{\tilde{\gamma}}
= q^{-\tilde{\gamma}} \cdot x_G(p)\cdot p^{\tilde{\gamma}} 
\le q^{-\tilde{\gamma}}\cdot x_G(p) \cdot \zgp.
\]
Since $p< \theta$, \Cref{lem:z-approx-u} gives $x_G(p) \le \frac{1}{\log n}\cdot \zgp$. Therefore,
\begin{align}\label{eq:v3}
    V_3 \le  q^{-\tilde{\gamma}}\cdot x_G(p) \cdot \zgp \le \frac{3}{\log n}\cdot (\zgp)^2.
\end{align}    

Putting the bounds on $V_1, V_2$ and $V_3$ given by \eqref{eq:v1}, \eqref{eq:v2}, and \eqref{eq:v3} together, we get
\[
\frac{\E[z_H(p/q)^2]}{(\zgp)^2} 
\le \frac{V_1+V_2+V_3}{(\zgp)^2}  
= 2 + O\left(\frac{\log \log n}{\log n}\right).\qedhere
\]
\end{proof}

We now use the following lemma from \cite{CenHLP24}:
\begin{lemma}[Lemma 4.6 of \cite{CenHLP24}]\label{lem:u-from-z}
Assume $p<\theta$.
If $z_{H}(p')$ is an unbiased estimator of $\zgp$ with relative variance $\eta$, and $u_{H}(p')$ is an unbiased estimator of $\ugp$, then $u_{H}(p')$ has relative variance at most $\left(1+O\left(\frac{1}{\log n}\right)\right)\eta+O\left(\frac{1}{\log n}\right)$.
\end{lemma}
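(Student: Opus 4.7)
The plan is to transfer the relative variance bound from $z_H(p')$ to $u_H(p')$ using two ingredients: the pointwise inequality $u_H(p') \le z_H(p')$, which holds for every realization of $H$ by the union bound over cuts of $H$; and the phase-transition sandwich of \Cref{lem:z-approx-u}, which under $p<\theta$ gives $\ugp \ge (1 - 1/\log n)\,\zgp$.

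The first step is a pointwise second-moment comparison. Since $0 \le u_H(p') \le z_H(p')$ for every realization of $H$, squaring and taking expectations gives $\E[u_H(p')^2] \le \E[z_H(p')^2]$. Combined with the hypothesis that $z_H(p')$ is unbiased for $\zgp$ with relative variance $\eta$, i.e., $\E[z_H(p')^2] = (1+\eta)\,\zgp^2$, I obtain $\E[u_H(p')^2] \le (1+\eta)\,\zgp^2$.

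The second step is to convert the $\zgp^2$ factor into $\ugp^2$. From \Cref{lem:z-approx-u} and $p<\theta$, we have $\zgp \le (1+O(1/\log n))\,\ugp$, so $\zgp^2 \le (1+O(1/\log n))\,\ugp^2$. Since $u_H(p')$ is an unbiased estimator of $\ugp$, its relative variance satisfies
\[ \frac{\E[u_H(p')^2]}{\ugp^2} - 1 \;\le\; (1+\eta)\bigl(1+O(1/\log n)\bigr) - 1 \;=\; \bigl(1+O(1/\log n)\bigr)\eta + O(1/\log n), \]
which is exactly the claimed bound.

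I do not anticipate any real obstacle here: the proof is essentially a two-line calculation. The only point worth highlighting is that $u_H(p') \le z_H(p')$ must be used as a deterministic pointwise inequality in $H$ (not just an inequality of expectations) so that the inequality survives squaring; unbiasedness is used only to identify the means $\zgp$ and $\ugp$ in the denominators, and all the quantitative slack comes from \Cref{lem:z-approx-u}.
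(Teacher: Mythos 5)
Your argument is correct. The key observations — that $u_H(p') \le z_H(p')$ is a pointwise (per-realization) union-bound inequality, so it survives squaring and taking expectations; that unbiasedness identifies $\E[z_H(p')] = \zgp$ and $\E[u_H(p')] = \ugp$; and that \Cref{lem:z-approx-u} converts $\zgp^2$ into $(1+O(1/\log n))\,\ugp^2$ in the denominator — are exactly what is needed, and the final arithmetic $(1+\eta)(1+O(1/\log n)) - 1 = (1+O(1/\log n))\eta + O(1/\log n)$ is right. Note that this paper states the lemma as a citation to Lemma~4.6 of \cite{CenHLP24} and does not reproduce a proof, so there is no in-paper proof to compare against; your reconstruction is the natural one and is sound.

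One small point worth making explicit, which you correctly flagged but is worth being fully precise about: the inequality $u_H(p') \le z_H(p')$ holds for \emph{every} graph $H$, including degenerate ones (if $H$ happens to be disconnected, the empty cut contributes $(p')^0 = 1$ to $z_H(p')$ while $u_H(p') = 1$, so the inequality is preserved). This is what licenses the step $\E[u_H(p')^2] \le \E[z_H(p')^2]$ without any further conditioning on the event that $H$ is well-behaved.
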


Combining \Cref{lem:relvar-1step,lem:u-from-z}, we obtain the following:
\begin{corollary}\label{cor:contract-u-var}
$u_H(p/q)$ is an unbiased estimator of $u_G(p)$ with relative second moment at most $2+O\left(\frac{\log \log n}{\log n}\right)$.
\end{corollary}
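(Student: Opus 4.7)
The plan is to combine \Cref{lem:relvar-1step} with \Cref{lem:u-from-z} by moving between relative second moment and relative variance, which differ by exactly $1$ by definition. The only nontrivial step beyond this bookkeeping is to argue unbiasedness of both estimators, which follows from the standard two-stage interpretation of random contraction: an edge of $G$ survives both stages with probability $(1-q) + q(1-p/q) = 1-p$, matching the direct Bernoulli-$p$ failure model. Hence $\E[u_H(p/q)] = \ugp$ after conditioning on $H$, and similarly for any fixed cut $C_i$ of $G$, the probability that no edge of $C_i$ is contracted is $q^{|C_i|}$, and conditioned on this, its failure probability in $H$ at rate $p/q$ is $(p/q)^{|C_i|}$, multiplying to $p^{|C_i|}$; summing over cuts yields $\E[z_H(p/q)] = \zgp$.

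Next, I would translate \Cref{lem:relvar-1step}: since relative second moment equals relative variance plus $1$, the relative variance of $z_H(p/q)$ as an estimator of $\zgp$ is at most $1 + O\!\left(\tfrac{\log\log n}{\log n}\right)$. Because we are in the recursive case, we are not in the Monte Carlo base case, so $p < \theta$ and \Cref{lem:u-from-z} applies. Setting $\eta = 1 + O\!\left(\tfrac{\log\log n}{\log n}\right)$, it gives
\[
\eta[u_H(p/q)] \le \left(1 + O\!\left(\tfrac{1}{\log n}\right)\right)\!\left(1 + O\!\left(\tfrac{\log\log n}{\log n}\right)\right) + O\!\left(\tfrac{1}{\log n}\right) = 1 + O\!\left(\tfrac{\log\log n}{\log n}\right),
\]
where the cross term $O\!\left(\tfrac{1}{\log n}\right)\cdot O\!\left(\tfrac{\log\log n}{\log n}\right)$ is absorbed into the final error.

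Finally, converting back from relative variance to relative second moment by adding $1$ gives the claimed bound of $2 + O\!\left(\tfrac{\log\log n}{\log n}\right)$. The only potential subtlety is checking that the hypotheses of \Cref{lem:u-from-z} are met — specifically that both $z_H(p/q)$ and $u_H(p/q)$ are unbiased for $\zgp$ and $\ugp$ respectively, and that $p < \theta$ — all of which hold in the recursive case by \Cref{fact:recursive-case-ugp-range} and the two-stage contraction identity above. There is no real obstacle here; the lemma is essentially a one-line consequence of the two preceding lemmas, recorded as a corollary because the next section uses the relative second moment formulation directly when tracking variance across recursive levels.
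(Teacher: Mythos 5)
Your proposal is correct and matches the paper exactly: the paper derives \Cref{cor:contract-u-var} as an immediate combination of \Cref{lem:relvar-1step} and \Cref{lem:u-from-z}, with the same conversion between relative second moment and relative variance. Your added verification of unbiasedness via the two-stage contraction identity and the check that $p<\theta$ holds in the recursive case are the right sanity checks, though the paper leaves them implicit.
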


We now use induction to bound the second moment of the overall estimator. This requires our bounds on the base cases as well as that established for a recursive contraction step in \Cref{lem:relvar-1step} and the corresponding bound for a sparsification step in \Cref{lem:sparsify-before-contract}.

\begin{lemma}\label{lem:contract-var}
    The second moment of the estimator given by the overall algorithm is at most $(\log n)^{O(\log \log n)} \cdot (\ugp)^2$ whp. 
\end{lemma}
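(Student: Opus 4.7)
The plan is to bound the relative second moment $\mu(v) := \E[X_v^2]/(\E[X_v])^2$ by induction on the computation tree, working from leaves to root. At every internal node $v$ the algorithm outputs $X_v = (X_1+X_2)/2$, where $X_1, X_2$ are independent, identically distributed estimators produced by two parallel recursive sub-computations. A direct calculation then gives $\E[X_v^2] = \tfrac{1}{2}\bigl(\E[X_1^2]+(\E[X_1])^2\bigr)$, and hence the key identity
\[
\mu(v) \;=\; \tfrac{1}{2}\cdot \frac{\E[X_1^2]}{(\E[X_1])^2} \;+\; \tfrac{1}{2} .
\]
This ``averaging halves'' factor is the mechanism that prevents the many levels of recursion from blowing up the second moment.

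For the leaves, Karger's estimator (\Cref{thm:karger-estimator}) and the Monte Carlo estimator (\Cref{lem:mc}) both have relative variance $O(1)$, while importance sampling (\Cref{lem:interface-importance}) has relative variance at most $1$; each yields $\mu(v) = O(1)$, with the $n^{-\Omega(1)}$ relative bias of the importance-sampling base case absorbed into a multiplicative $1 \pm n^{-\Omega(1)}$ factor exactly as in \Cref{lem:contract-bias}. At an internal node, conditioning on the sampled child graph $H_1$ factors $\E[X_1^2]/(\E[X_1])^2$ as the one-step relative second moment times $\mu(\text{child})$; \Cref{cor:contract-u-var} bounds the one-step quantity by $2 + O(\log\log n/\log n)$ for a random contraction step, and \Cref{lem:sparsify-before-contract} bounds it by $2 + O(1/\log n)$ for a sparsification step. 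Plugging these into the identity above yields the uniform recurrence
\[
\mu(v) \;\le\; \Bigl(1 + O\bigl(\tfrac{\log\log n}{\log n}\bigr)\Bigr)\,\mu(\text{child}) \;+\; \tfrac{1}{2} .
\]

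To close the proof I would bound the depth $D$ of the computation tree. By \Cref{lem:contract-size-dec}, each random contraction step whp shrinks $n$ by a constant factor, so any root-to-leaf path contains at most $O(\log n)$ contraction steps; sparsification steps fire only when $m$ grows back past $n^{1+\Theta(1/\sqrt{\log n})}$ after each sparsifier brings it down to $\tilde O(n)$, so they contribute at most a polylogarithmic factor to the depth. A union bound over all polynomially many steps in the computation tree then gives $D = O(\log n)$ whp. Iterating the recurrence with $\eps = O(\log\log n/\log n)$ for $D$ levels gives $(1+\eps)^D \le e^{\eps D} = (\log n)^{O(1)}$, so $\mu(\text{root}) \le O(1)\cdot(1+\eps)^D + D\cdot(1+\eps)^D = (\log n)^{O(1)}$, which is comfortably within the claimed $(\log n)^{O(\log\log n)}$ bound. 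The main obstacle is reconciling the whp depth guarantee with the expectation-based definition of $\mu$; I would resolve this by first conditioning on the good event---that every random contraction along every root-to-leaf path of the computation tree shrinks $n$ by a constant factor, which holds whp by \Cref{lem:contract-size-dec} and a union bound---and then running the inductive recurrence only on this event.
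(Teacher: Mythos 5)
Your framework is the same as the paper's: induct from leaves to root, use $X=\tfrac{X_1+X_2}{2}$ with $X_1,X_2$ i.i.d.\ to get $\E[X^2]=\tfrac12\E[X_1^2]+\tfrac12(\E[X_1])^2$, and bound the one step $\E[X_1^2]/(\E[X_1])^2$ by the product of the one-step relative second moment (\Cref{cor:contract-u-var} or \Cref{lem:sparsify-before-contract}) and the child's relative second moment. That all matches the paper.

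The gap is in the last paragraph, where you iterate the recurrence with a \emph{fixed} $\eps = O(\log\log n/\log n)$ over $D=O(\log n)$ levels and conclude $(1+\eps)^D = (\log n)^{O(1)}$. The per-step overhead in \Cref{cor:contract-u-var} and \Cref{lem:relvar-1step} is $O(\log\log n_k/\log n_k)$ measured with the \emph{local} vertex count $n_k$ of the graph at that recursion node (it comes from $\delta=\Theta(1/\log\log n_k)$ and from the $\tfrac{1}{\log n_k}$ phase-transition gap in \Cref{lem:z-approx-u}), and $n_k$ shrinks geometrically along the recursion. So $\eps_k$ is not uniformly small; substituting $j = \log n_k$, which decreases from $\log n$ down to $O(1)$, the cumulative product is
\[
\prod_{k}(1+\eps_k)\;\le\;\exp\Bigl(\textstyle\sum_{k}\eps_k\Bigr)\;=\;\exp\Bigl(\Theta\bigl(\textstyle\sum_{j=O(1)}^{\log n}\tfrac{\log j}{j}\bigr)\Bigr)\;=\;\exp\bigl(\Theta(\log^2\log n)\bigr)\;=\;(\log n)^{\Theta(\log\log n)},
\]
not $(\log n)^{O(1)}$. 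This is exactly why the paper must take the inductive hypothesis to be $(\log n)^{K\log\log n}$ rather than a fixed power of $\log n$: the decay factor $(\log n'/\log n)^{K\log\log n}\approx e^{-0.1K\log\log n/\log n}$ only beats the per-step growth $1+O(\log\log n/\log n)$ if the exponent is $\Omega(\log\log n)$; with a constant exponent $c$ the decay is merely $1-O(c/\log n)$, which loses to the growth. So the $(\log n)^{O(1)}$ you claim is not achievable by this method. Once you replace the fixed $\eps$ by $\eps_k = O(\log\log n_k/\log n_k)$ and sum, you recover exactly the lemma's claimed $(\log n)^{O(\log\log n)}$, and the argument goes through as in the paper. (A secondary, minor point: your ``factors as the one-step moment times $\mu(\text{child})$'' needs to be the inequality $\E[X_1^2]\le \max_{H_1}\mu(\text{child}\mid H_1)\cdot\E_{H_1}[(u_{H_1}(p/q))^2]$, since $\mu(\text{child}\mid H_1)$ depends on $H_1$ and is correlated with $u_{H_1}(p/q)$; the paper makes this uniform bound explicit by pushing $(\log n_i)^{K\log\log n}$ outside the expectation over $H_i$.)
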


\begin{proof}
Let $X$ be the estimator given by the overall algorithm.
We use induction on recursion depth to prove $\E[X^2] \le (\log n)^{K\log \log n} \cdot (u_G(p))^2$ for a large constant $K$ to be decided later.

Consider the base cases of the recursion.
In the first case (Karger's algorithm) and the second case (Monte Carlo sampling), we get an unbiased estimator of $u_G(p)$ with relative variance at most $1$ by \Cref{thm:karger-estimator} and \Cref{lem:mc}.
In the last base case, we get an estimator of $\ugp$ with relative bias at most $n^{-\Omega(1)} < 0.1$ and relative variance at most $1$ by \Cref{lem:interface-importance}.
For such an estimator $Y$, we have $\E[Y]\le 1.1\ugp$ and $\E[Y^2] - \E^2[Y]\le \E^2[Y]$. The latter can be rewritten as 
\[
    \E[Y^2] \le 2 \cdot \E^2[Y]\le 2\cdot (1.1\ugp)^2\le 3\cdot (\ugp)^2.
\]
Therefore, the statement of the lemma holds for all the base cases.

Next, we consider the inductive step where the algorithm takes the average of two recursive calls. Let $X_1$ and $X_2$ be the estimators returned by the two recursive calls; then, $X=\frac{X_1+X_2}{2}$. We have two cases depending on whether we are in a random contraction step or a sparsification step. First, we consider a random contraction step. The sparsification step is similar and discussed at the end. 

Let $H_1$ and $H_2$ be the graphs resulting from random contraction. 
Now, we have
\[\E[X^2] = \E\left[\left(\frac{X_1+X_2}{2}\right)^2\right] = \E_{H_1,H_2}\left[\frac 14 (\E[X_1^2|H_1] + \E[X_2^2|H_2]) + \frac 12 \cdot \E[X_1|H_1] \cdot \E[X_2|H_2]\right],\]
since $X_1, X_2$ are respectively independent of $H_2, H_1$.

By the inductive hypothesis, $\E[X_i^2|H_i] \le (\log n_i)^{K\log \log n} \cdot (u_{H_i}(p/q))^2$ for $i=1,2$. By applying \Cref{lem:contract-bias} on the recursive calls, we have $\E[X_i|H_i] \le 1.1u_{H_i}(p/q)$.
We can now bound the second moment of $X$ as follows:
\begin{align}
\E[X^2]
&\le  \frac 14\cdot \E\left[ (\log n_1)^{K\log \log n} (u_{H_1}(p/q))^2 + (\log n_2)^{K\log \log n} (u_{H_2}(p/q))^2\right] 
+ \frac{1}{2} \cdot \E[1.1^2u_{H_1}(p/q) u_{H_2}(p/q)]\nonumber\\
&\le  \frac 14  \cdot (\log n_1)^{K\log \log n} \cdot \E[(u_{H_1}(p/q))^2] 
+ \frac 14\cdot (\log n_2)^{K\log \log n}\cdot \E[(u_{H_2}(p/q))^2] 
+ \E[u_{H_1}(p/q)]\cdot \E[u_{H_2}(p/q)],\label{eq:xsq}
\end{align}
by independence of $H_1, H_2$. 

To bound the first two terms in \Cref{eq:xsq}, note that by \Cref{cor:contract-u-var}, we have 
\[
    \E[(u_{H_i}(p/q))^2] \le \left(2+O\left(\frac{\log \log n}{\log n}\right)\right)(u_G(p))^2 \text{ for }i=1, 2.
\]

To bound the last term in \Cref{eq:xsq}, note that $\E[u_{H_i}(p/q)] = u_G(p)$ since the random contraction step is unbiased. 
Thus, 
\[
    \E_{H_1}[u_{H_1}(p/q)]\cdot \E_{H_2}[u_{H_2}(p/q)]
    = (\ugp)^2.
\]


Putting all these bounds together, we get
\[\E[X^2]\le \frac 14 \left((\log n_1)^{K\log \log n} + (\log n_2)^{K\log \log n}\right)\left(2+O\left(\frac{\log \log n}{\log n}\right)\right) (u_G(p))^2
+  (u_G(p))^2.\]

\Cref{lem:contract-size-dec} implies that $n_i\le 0.9 n$ across all steps of recursion whp. Then, $\log n_i \le \log n - 0.1$ in all the recursion step whp. We get
\[\E[X^2] 
\le (\log n-0.1)^{K \log \log n}\left(1+O\left(\frac{\log \log n}{\log n}\right)\right) (u_G(p))^2 + (u_G(p))^2\]
Therefore, by choosing a large enough constant $K$, we ensure that
\[
\frac{\E[X^2]}{(\ugp)^2}
\le (\log n)^{K\log \log n}\left(1-\frac{0.1}{\log n}\right)^{K\log \log n}\left(1+O\left(\frac{\log\log n}{\log n}\right)\right) + 1
\le (\log n)^{K\log \log n}.
\]

For a sparsification step, we can repeat the same argument but define $H_1, H_2$ as the sparsifiers instead of the contracted graphs. By \Cref{lem:sparsify-before-contract}, $u_{H_i}(q')$ is also an unbiased estimator of $\ugp$ with relative second moment at most $2+O\left(\frac{1}{\log n}\right)$. The only caveat is that the vertex size does not decrease in the sparsification step itself. However, the vertex size decreases in the next step since it is a random contraction step. So, we can repeat the same argument by combining two steps (with relative second moment $4+O(\frac{1}{\log n})$ and 4 branches).
\end{proof}

\subsection{Running Time}

We first state a bound on the expected number of uncontracted edges after random edge contractions in an undirected graph. This was shown by Karger, Klein, and Tarjan~\cite{KargerKT95} in their celebrated randomized MST paper.
\begin{lemma}[Lemma 2.1 of \cite{KargerKT95}]\label{lem:kkt-contraction-size-bound}
Given an undirected multigraph, if we contract each edge independently with probability $\pi$, then the expected number of uncontracted edges is at most $n/\pi$. 
\end{lemma}

We now use this bound to establish the following property enforced by the sparsification steps: 

\begin{lemma}\label{lem:runtime-sparsify-step}
Assume $m > n^{1+\Omega\left(\frac{1}{\sqrt{\log n}}\right)}$ and the algorithm executes a sparsification step followed by a contraction step. Let $m'$ be the number of edges in a resulting graph. Then, the following bounds hold in expectation: $m'= \tO(n)$ and $(m')^{\frac{1}{\sqrt{\log n}}} \le \frac 12 \cdot  m^{\frac{1}{\sqrt{\log n}}}$.
\end{lemma}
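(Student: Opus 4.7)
The plan is to combine the guarantee $\lambda_H = \tilde{O}(1)$ from \Cref{lem:sparsify-before-contract} with the Karger--Klein--Tarjan contraction bound (\Cref{lem:kkt-contraction-size-bound}), and then derive the exponential-gap inequality via Jensen.

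First I would observe that after the sparsification step the sparsifier $H$ has $\lambda_H = \tilde O(1)$, so the contraction parameter $\gamma$ used in the subsequent random contraction on $H$ satisfies $\gamma \le \lambda_H = O(\polylog n)$, since $\gamma$ is explicitly capped at the current min-cut value in its definition in \Cref{sec:contraction}. Consequently $q = 2^{-1/\gamma}$ gives an edge-contraction probability $1 - q = 1 - 2^{-1/\gamma} \ge \Omega(1/\gamma) = \Omega(1/\polylog n)$, using the concavity inequality $1 - 2^{-x} \ge (\ln 2)\, x/2$ on $x \in (0,1]$. Applying \Cref{lem:kkt-contraction-size-bound} to the contraction step on $H$ (which still has $n$ vertices), the expected number of uncontracted edges --- which equals $m'$ --- is at most $n/(1-q) = O(n\gamma) = \tilde O(n)$. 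This establishes the first bound $\E[m'] = \tilde O(n)$.

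For the second bound, since $x \mapsto x^{1/\sqrt{\log n}}$ is concave on $x \ge 0$, Jensen's inequality gives
\[ \E[(m')^{1/\sqrt{\log n}}] \le (\E[m'])^{1/\sqrt{\log n}} \le (C n \log^D n)^{1/\sqrt{\log n}} \]
for absolute constants $C,D > 0$. The polylog factor contributes $e^{O(\log\log n/\sqrt{\log n})} = 1+o(1)$ after exponentiation, so $\E[(m')^{1/\sqrt{\log n}}] \le (1+o(1))\, n^{1/\sqrt{\log n}}$. Writing the hypothesis as $m \ge n^{1+c/\sqrt{\log n}}$ for a constant $c$ hidden in $\Omega$, we get $m^{1/\sqrt{\log n}} \ge n^{1/\sqrt{\log n}} \cdot e^{c}$ (using natural logs, as in the paper). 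Taking the ratio gives $(1+o(1))/e^{c}$, which is at most $1/2$ for all sufficiently large $n$ provided $c$ is chosen larger than $\ln 2$; this is absorbed into the hidden constant in $\Omega(1/\sqrt{\log n})$.

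The main obstacle, such as it is, is arithmetic rather than structural: converting the slack in the hypothesis $m > n^{1+\Omega(1/\sqrt{\log n})}$ into the explicit factor-of-$2$ gap while absorbing the $\polylog$ losses incurred by KKT and Jensen. The structural content --- that $\gamma$ is bounded by $\lambda_H$, and KKT then yields the contraction bound from $\gamma$ alone independent of the edge count $m_H$ of the sparsifier --- is straightforward once it is assembled in the right order.
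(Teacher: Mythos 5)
Your proof is correct and follows essentially the same route as the paper's: bound $\gamma$ by $\lambda_H = \tilde O(1)$, apply the Karger--Klein--Tarjan contraction bound to get $\E[m'] = \tilde O(n)$, then use Jensen's inequality on the concave map $x \mapsto x^{1/\sqrt{\log n}}$ together with the slack in $m > n^{1+\Omega(1/\sqrt{\log n})}$ to extract the factor of $\tfrac12$. The only difference is cosmetic: the paper funnels the comparison through $\tilde O(n) \le m^{1-\Omega(1/\sqrt{\log n})}$ and observes $m^{-\Omega(1/\log n)}\le\tfrac12$, whereas you bound $(\tilde O(n))^{1/\sqrt{\log n}} = (1+o(1))\,n^{1/\sqrt{\log n}}$ and compare directly against $m^{1/\sqrt{\log n}}\ge e^{c}\,n^{1/\sqrt{\log n}}$; these are the same arithmetic in different order.
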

\begin{proof}
Sparsification generates a graph $H$ with min-cut value $\lambda_H=\tO(1)$. Then we perform random contraction on $H$ with $q^{\gamma_H} = \nicefrac 12$. Because we defined $\gamma_H\le \lambda_H$, we have $\frac{1}{1-q} = O(\gamma_H) = O(\lambda_H)$. By \Cref{lem:kkt-contraction-size-bound}, the expected edge size of the graph after contraction is at most $\E[m'] = \frac{n}{1-q}= O(n\cdot \lambda_H) = \tO(n)$. (Note that $n$ is the vertex size before sparsification.)

Under the assumption $m > n^{1+\Omega(\frac{1}{\sqrt{\log n}})}$, we have $\E[m'] \le \tO(n) \le m^{1-\Omega\left(\frac{1}{\sqrt{\log n}}\right)}$. Notice that $x^a$ is concave when $a\in(0,1)$ and we can apply Jensen's inequality.
\[
\E\left[(m')^{\frac{1}{\sqrt{\log n}}}\right]
\le \left(\E[m']\right)^{\frac{1}{\sqrt{\log n}}}
\le m^{\left(1-\Omega\left(\frac{1}{\sqrt{\log n}}\right)\right)\frac{1}{\sqrt{\log n}}}
= m^{\frac{1}{\sqrt{\log n}}} \cdot m^{-\Omega\left(\frac{1}{\log n}\right)}
\le \nf 12 \cdot m^{\frac{1}{\sqrt{\log n}}},
\]
when the constant in the $\Omega(\cdot)$ of the assumption in the lemma's statement is sufficiently large.
\end{proof}

We are now prepared to bound the running time of the whole recursive algorithm.

\begin{lemma}\label{lem:contraction-runtime}
The algorithm runs in $m^{1+o(1)} \eps^{-O(1)}$ time in expectation.
\end{lemma}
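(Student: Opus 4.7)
The plan is to bound the per-call running time and then sum over the recursion tree, using the invariant $m_v \le O(n_v^{1+\Theta(1/\sqrt{\log n})})$ enforced by the sparsification criterion.

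First, I would establish the per-call cost: sparsification takes $\tilde{O}(m_v)$ by \Cref{lem:sparsify-before-contract}, a random contraction step takes $\tilde{O}(m_v)$ to sample and construct the contracted multigraph, and the three base cases take $\tilde{O}(n_v^2) \le \eps^{-O(1)}$, $m_v^{1+o(1)}$, and $m_v^{1+o(1)}$ respectively (by \Cref{thm:karger-estimator}, \Cref{lem:mc}, and \Cref{lem:interface-importance}, noting that Karger's base case has $n_v \le \eps^{-O(1)}$). By \Cref{lem:contract-size-dec}, whp each random contraction step halves the vertex count up to a $1+o(1)$ factor, so every root-to-leaf path has depth $D = O(\log n)$.

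Next, I would establish the key invariant $\E[m_v] \le O(n_v^{1+\epsilon})$ at every non-root node, with $\epsilon = \Theta(1/\sqrt{\log n})$. After a sparsify+contract step, \Cref{lem:runtime-sparsify-step} gives $\E[m'] = \tilde{O}(n')$. After a pure contraction, \Cref{lem:kkt-contraction-size-bound} combined with $\gamma \le \lambda \le 2m/n$ yields $\E[m'] = O(m)$, so $m$ grows by at most a constant factor per level while $n$ halves. Starting from $\tilde{O}(n)$ right after sparsification, the ratio $m/n^{1+\epsilon}$ exceeds $1$ only after $\Theta(\sqrt{\log n})$ pure contractions, at which point the algorithm sparsifies again and resets $m = \tilde{O}(n)$. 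Thus the invariant is maintained throughout the recursion.

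Summing per-level work under this invariant via a geometric series yields $\sum_d 2^d \cdot \tilde{O}((n/2^d)^{1+\epsilon}) = \tilde{O}(n^{1+\epsilon}/\epsilon)$, which equals $n^{1+o(1)}$ since $n^{\epsilon} = 2^{\sqrt{\log n}} = n^{o(1)}$ and $1/\epsilon = \sqrt{\log n}$. Adding the initial sparsification cost ($\tilde{O}(m) = m^{1+o(1)}$, since the root may have up to $\binom{n}{2}$ edges) and the total Karger-base-case leaf work (at most $n \cdot \eps^{O(1)}$ leaves each costing $\eps^{-O(1)}$, totaling $n \cdot \eps^{-O(1)}$) gives the final bound $m^{1+o(1)} \eps^{-O(1)}$. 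The main obstacle is converting the expectation bounds on $m_v$ into whp statements across the entire recursion tree, which I would handle via Markov's inequality followed by a union bound over the $n^{O(1)}$ recursion nodes, analogous to the concentration argument in \Cref{lem:contract-size-dec}; a related subtlety is verifying that the constant factor of edge growth per pure contraction step is compatible with the chosen constant in the sparsification threshold so that $\Theta(\sqrt{\log n})$ pure contractions between sparsifications suffice.
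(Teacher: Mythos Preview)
Your outline is essentially correct and would go through with minor patches, but it differs from the paper's argument in a way worth noting. The paper does not maintain the invariant $m_v = O(n_v^{1+\epsilon})$ and sum level by level; instead it introduces a single potential $\rho = n\log n \cdot m^{1/\sqrt{\log n}}$, merges each sparsification step with its following contraction into one node of branching factor $4$, and shows that every node satisfies the recurrence $T(\rho)\le\rho^{1+o(1)}+b\,T(\rho/b)$ with $b\in\{2,4\}$. The point of the potential is that the halving of $m^{1/\sqrt{\log n}}$ guaranteed by \Cref{lem:runtime-sparsify-step} exactly compensates for the extra factor-$2$ branching at sparsification nodes, so one recurrence handles both node types uniformly. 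The cost $m^{1+o(1)}$ at a non-root sparsification node (where $m$ may exceed $n^{1+\epsilon}$) is charged to its parent, and the root's $m^{1+o(1)}$ is added separately at the end.

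In your approach, the level sum $\sum_d 2^d\cdot\tilde O((n/2^d)^{1+\epsilon})$ assumes branching factor $2$ everywhere, but sparsification steps also branch into two; you need the extra observation that there are only $O(\sqrt{\log n})$ sparsifications on any root-to-leaf path, contributing a harmless $2^{O(\sqrt{\log n})}=n^{o(1)}$ factor. Separately, your stated ``main obstacle'' is not one: the lemma only claims an \emph{expected} time bound, and the inequality $m_v\le n_v^{1+\Theta(1/\sqrt{\log n})}$ at a regular contraction node holds deterministically because it is precisely the algorithm's criterion for not sparsifying. The only randomness that needs controlling is the vertex-size decrease, already handled whp by \Cref{lem:contract-size-dec}. (Also, after a pure contraction $m'\le m$ holds deterministically since contraction only removes edges, so the appeal to \Cref{lem:kkt-contraction-size-bound} for that step is unnecessary.)
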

\begin{proof}
Let $T(n, m)$ be the expected running time of a recursive call on a graph with $n$ vertices and $m$ edges. Recall that there are three types of nodes in the computation tree: base cases, contraction nodes, and sparsification nodes. If the parent of a contraction node is a sparsification node, we call it an {\em irregular} contraction node; otherwise, the contraction node is called a {\em regular} contraction node. First, we shortcut all irregular contraction nodes (by making their parent the parent of their children). The running time of all such irregular contraction nodes are accounted for by their parent sparsification nodes. Note that each sparsification node has at most two irregular contraction nodes as children; hence, it accounts for the cost of at most three nodes (itself and its two children irregular contraction nodes). Because of this transformation, the number of children of a sparsification node can increase to at most 4.

In the rest of the discussion, we assume that the contraction tree has only three types of nodes: sparsification nodes, regular contraction nodes and base cases (which are leaves of the computation tree). 
In the first base case of $n\le \eps^{-O(1)})$, the running time is $\tO(n^2)=\eps^{-O(1)}$ by \Cref{thm:karger-estimator}.
The second and third base cases take $m^{1+o(1)}$ time by \Cref{lem:interface-importance,lem:mc}.
In a sparsification node, the sparsification algorithm in \Cref{lem:sparsify-before-contract} takes $O(m)$ time.  
In a contraction node, randomly contracting edges also takes $O(m)$ time.
So, the time spent at any node of the computation tree is $m^{1+o(1)}$ in total (including the charge received by a sparsification node from its children irregular contraction nodes).

\Cref{lem:contract-size-dec} shows that whp each recursive contraction reduces the vertex size by a factor of $\frac 12 +O(\frac{1}{\log \log n})$. In particular, this holds for a regular contraction node and its children, as well as a sparsification node and its children inherited from an irregular contraction child. 

First, we consider regular contraction nodes. Note that these nodes still have at most two children.
 Moreover, they satisfy $m\le n^{1+o(1)}$ (which implies $m^{1+o(1)} \le n^{1+o(1)}$). Therefore, the recurrence is
\begin{align}\label{eq:recur-contract}
    T(n, m) \le n^{1+o(1)} + 2 \cdot T\left(\left(\frac 12 +O\left(\frac{1}{\log \log n}\right)\right)\cdot n, m\right).
\end{align}    

Now, we consider a non-root sparsification node. This is more complicated because $m > n^{1+\Omega\left(\frac{1}{\sqrt{\log n}}\right)}$. Recall that the running time incurred at this node (including that inherited from irregular contraction children) is $m^{1+o(1)}$.
Consider the parent of the sparsification node. Let $\hat{n}, \hat{m}$ respectively represent the number of vertices and edges in the parent node. If the parent is a regular contraction node, then we have $m \le \hat{m} \le \hat{n}^{1+O\left(\frac{1}{\sqrt{\log n}}\right)}$. In this case, we can charge the $m^{1+o(1)}$ term to the parent's recurence relation \Cref{eq:recur-contract}. Otherwise, the parent is a sparsification node and we have $m=\tO(\hat{n})$ in expectation by \Cref{lem:runtime-sparsify-step}. So, we can also charge $m^{1+o(1)}$ to the parent node. Finally, note that a sparsification node has at most 4 children. Let $m'$ denote the number of edges in any child of the sparsification node. We can write the following recurrence for a sparsification node:
\begin{align}\label{eq:recur-sparsify}
    T(n, m) \le n^{1+o(1)} + 4\cdot T\left(\left(\frac 12 +O\left(\frac{1}{\log \log n}\right)\right)\cdot n, m'\right), \quad \text{where $m'$ satisfies \Cref{lem:runtime-sparsify-step}.}
\end{align}

For the sake of the master theorem, we define a potential $\rho = n\log n\cdot m^{\frac{1}{\sqrt{\log n}}}$.
The progress in the first term is $n'\le \left(\frac 12 + O\left(\frac{1}{\log \log n}\right)\right)n$, so
$n'\log n'\le \left(\frac 12 + O\left(\frac{1}{\log \log n}\right)\right)n \cdot (\log n - \Omega(1)) \le \frac 12 n\log n$.
\Cref{lem:runtime-sparsify-step} measures the progress in $m^{\frac{1}{\sqrt{\log n}}}$ for the second recurrence (\Cref{eq:recur-sparsify}) by $m'= \tO(n)$ and $m'^{\frac{1}{\sqrt{\log n}}} \le \frac 12 \cdot  m^{\frac{1}{\sqrt{\log n}}}$. We have one of the following recurrence relations:
\begin{align*}
    T(\rho) &= \rho^{1+o(1)} + 2T(\nf {\rho}{2})\\
    T(\rho) &= \rho^{1+o(1)} + 4T(\nf{\rho}{4}).
\end{align*}
In either case, this solves to $T(\rho) = \rho^{1+o(1)}\eps^{-O(1)}$ (including the base cases discussed above). Therefore, 
\[
    T(n, m) = n^{1+o(1)}\eps^{-O(1)}.
\]
%
Note that the root node takes $m^{1+o(1)}$ time which isn't chargeable elsewhere. Therefore, the overall running time is $m^{1+o(1)} \eps^{-O(1)}$ in expectation.  
\end{proof}

So, we have obtained an algorithm that runs in $m^{1+o(1)}\eps^{-O(1)}$ time and obtains an estimator for $\ugp$ with relative variance $n^{o(1)}$. Overall, we repeat $n^{o(1)}\eps^{-2}$ times to get the $(1\pm \eps)$-approximation for $\ugp$ whp, thereby establishing \Cref{thm:main}. 

\section{Conclusion}
\label{sec:conclusion}
In this paper, we gave an almost-linear time algorithm to compute the unreliability of an undirected graph. Up to lower order (sub-polynomial) improvements, this brings to a close the line of work on designing fast network unreliability algorithms. However, many related problems remain open and the general area of understanding graph connectivity under random failures remains poorly understood. The complementary problem of network reliability, that estimates the probability that an undirected graph stays connected when every edge fails independently with some probability $p$, does not have close to linear time algorithms, even in the dense case when $m = \Theta(n^2)$~\cite{GuoJ19,GuoH20}. Another very interesting direction of research is to generalize the unreliability problem to more complex forms of edge failure, e.g., by allowing limited dependence between the failure events of different edges. A natural first step in this direction would be to understand the unreliability problem on hypergraphs, i.e., when all constituents edges of a hyperedge fail together. For this problem, obtaining even a polynomial-time approximation scheme remains open~\cite{CenLP24}.

\bibliographystyle{alpha}
\bibliography{ref}

\appendix
\section{Proof of \Cref{lem:sparsify-before-contract}}
This section is devoted to prove \Cref{lem:sparsify-before-contract}, which we restate below.
\sparsify*

We apply the following standard sparsification result:
\begin{lemma}[Corollary 2.4 of \cite{Karger99sparsify}]\label{lem:sparsify}
Given an unweighted undirected graph $G$ with min-cut value $\lambda$ and any parameter $\delta\in(0,1)$, there exists $\alpha=O\left(\frac{\log n}{\delta^2\lambda}\right)$ such that if a subgraph $H$ is formed by picking each edge independently with probability $\alpha$ in $G$, then the following holds whp: for every cut $S$, its value in $H$ (denoted $d_H(S)$) and its value in $G$ (denoted $d_G(S)$) are related by $d_H(S)\in [(1-\delta)\cdot \alpha\cdot d_G(S), (1+\delta)\cdot \alpha\cdot d_G(S)]$. Note that this implies that the min-cut value in $H$ is $\lambda_H = O(\log n/\delta^2)$.

The running time of the sparsification algorithm is $O(m)$.
\end{lemma}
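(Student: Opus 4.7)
The plan is to apply a standard multiplicative Chernoff concentration to each fixed cut individually, and then combine this with Karger's cut-counting theorem to union bound over all (exponentially many) cuts of $G$ simultaneously. For any fixed cut $S$ with value $d := d_G(S)$, the random variable $d_H(S)$ is a sum of $d$ independent $\mathrm{Bernoulli}(\alpha)$ variables with mean $\alpha d$, so
\[ \Pr\bigl[|d_H(S) - \alpha d| > \delta \alpha d\bigr] \le 2\exp\bigl(-\delta^2 \alpha d/3\bigr). \]
Setting $\alpha := c_0 \log n /(\delta^2 \lambda)$ for a sufficiently large absolute constant $c_0$ (I will take $c_0 = 24$ below), this failure probability becomes at most $2 n^{-c_0 d/(3\lambda)}$. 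In the degenerate case $\alpha \ge 1$, the graph is sparse enough that one simply sets $H := G$ with $\alpha = 1$, in which case cut preservation is trivially exact.

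Next, I would union bound over all cuts using Karger's cut-counting theorem, which asserts that the number of cuts of value at most $t\lambda$ in an $n$-vertex graph is at most $n^{2t}$. Partition the cuts of $G$ into integer layers by value, so that layer $k \ge 1$ consists of cuts with value in $[k\lambda,(k+1)\lambda)$. Layer $k$ contains at most $n^{2(k+1)}$ cuts, and each such cut fails the $(1\pm\delta)$-approximation with probability at most $2 n^{-c_0 k/3}$ by the Chernoff estimate above. The total failure probability of layer $k$ is thus bounded by
\[ 2\cdot n^{2(k+1)} \cdot n^{-c_0 k/3} \;=\; 2\cdot n^{2 + k(2 - c_0/3)}. \]
With $c_0 = 24$ the exponent becomes $2 - 6k \le -4$ for every $k \ge 1$, and summing the resulting geometric series over $k \ge 1$ yields overall failure probability $O(n^{-4})$. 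Thus, with high probability every cut $S$ of $G$ simultaneously satisfies $d_H(S) \in [(1-\delta)\alpha d_G(S),(1+\delta)\alpha d_G(S)]$.

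The stated corollary $\lambda_H = O(\log n/\delta^2)$ then follows immediately: the image in $H$ of any fixed minimum cut of $G$ has value at most $(1+\delta)\alpha\lambda = O(\log n/\delta^2)$, which upper bounds $\lambda_H$. For the running time, producing $H$ consists of a single pass over the edge list that flips one biased coin of probability $\alpha$ per edge, taking $O(m)$ time in total.

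The main technical task, which is routine once the framework is set up, is verifying that the Chernoff decay strictly dominates the $n^{2(k+1)}$ cut-counting growth layer by layer: any $c_0 > 6$ suffices to make each layer's contribution decay geometrically in $k$, and a slightly larger constant (such as $c_0 = 24$) pushes the summed failure probability to $1/\mathrm{poly}(n)$, yielding the high-probability guarantee.
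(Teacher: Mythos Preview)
The paper does not prove this lemma; it is quoted verbatim as Corollary~2.4 of Karger's sparsification paper \cite{Karger99sparsify} and used as a black box in the appendix. Your argument---Chernoff on each fixed cut followed by a layered union bound via Karger's $n^{2t}$ cut-counting theorem---is exactly the standard proof of this result and is correct as written, so there is nothing to compare against in this paper.
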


We apply the sparsification lemma (\Cref{lem:sparsify}) with parameter $\delta=\Theta\left(\frac{1}{\log n}\right)$ to obtain a sparsifier graph $H$. The graph $H$ is generated by picking each edge independently with probability $\alpha= \Theta\left(\frac{\log^3 n}{\lambda}\right)$ from $G$. 
We may assume wlog that $\lambda>\Omega(\log^3n)$, otherwise the statement is trivial with $H=G$ and $q=p$. So, $\alpha<1$.
$H$ has min-cut value $\lambda_H = O(\log /\delta^2) =O(\log^3 n)$, which is our desired property. 

We choose $q$ such that $1-q =\frac{1-p}{\alpha}$.
Note that $q < p$ because $\alpha < 1$ implies $1-q > 1-p$. 

It is clear that $u_H(q)$ is an unbiased estimator of $\ugp$, since keeping each edge with probability $1-p$ is equivalent to first choosing it with probability $\alpha$ and then keeping it with probability $1-q = \frac{1-p}{\alpha}$.

In the recursive case, we have  $n^{-1-O(1/\log \log n)} \le \ugp \le n^{-\Omega(1/\log \log n)}$ by \Cref{fact:recursive-case-ugp-range}.
Combined with \Cref{fact:ugp-basic-range}, we have $n^{-4}\le p^\lambda \le  n^{-\Omega(1/\log \log n)}$.
Denote $\tau = 1-p$. The above inequality implies
\begin{equation*}
\tau \le 1-e^{-4\ln n/\lambda}\le O\left(\frac{\log n}{\lambda}\right), \quad
1-q = \frac{\tau}{\alpha}\le  O\left(\frac{\frac{\log n}{\lambda}}{\frac{\log^3 n}{\lambda}}\right) = O\left(\frac{1}{\log^2 n}\right).
\end{equation*}

Let $Y_e$ be the indicator that edge $e$ is picked by the random graph ${H}$.  For any edge $e$, we have
\begin{align}
\E\left[q^{Y_e}\right] &= \alpha q + (1-\alpha) = 1-\alpha(1-q) = p \label{eq:qtoY} \\ 
\E\left[q^{2Y_e}\right] &= \alpha q^2 + (1-\alpha) = 1-\tau (1+ q) \le (1-\tau)(1-\tau q) = p\cdot (1-\tau q).\label{eq:qto2Y}
\end{align}
 We can bound $\E\left[q^{2Y_e}\right]$ in two ways:
\begin{align}
\frac{\E\left[q^{2Y_e}\right]}{p} &\le  1-\tau q \label{eq:qto2Yfirst}\\
\frac{\E\left[q^{2Y_e}\right]}{p^2} &\le  \frac{1-\tau q}{p} = \frac{(1-\tau)+\tau(1-q)}{1-\tau} = 1+\frac{\tau(1-q)}{1-\tau} \le 1+2\tau(1-q) = 1+O\left(\frac{1}{\lambda\log n}\right). \label{eq:qto2Ysecond}
\end{align}

Next we calculate the expectation and relative variance of $z_{H}(q)$. Notice that $Y_e$'s are independent for each edge $e$. Use $C_i\Delta C_j$ to denote the symmetric difference $(C_i\setminus C_j)\cup (C_j\setminus C_i)$ over two cuts $C_i, C_j$. Use $d_H(\cdot)$ to denote the cut value function in $H$.  First, we calculate the expectation of $z_{H}(q)$:
\[
\E[z_{H}(q)]
= \E\left[\sum_{C_i} q^{d_H(C_i)}\right] 
= \sum_{C_i}\E\left[q^{\sum_{e\in C_i}Y_e}\right]
= \sum_{C_i}\prod_{e\in C_i} \E\left[q^{Y_e}\right] 
\stackrel{(\ref{eq:qtoY})}{=} \sum_{C_i}p^{c_i}  
= z_G(p).
\]
Next, we bound the second moment of $z_{H}(q)$
\begin{align*}
    \E\left[(z_{H}(q))^2\right] 
    &= \E\left[\sum_{C_i}\sum_{C_j}q^{d_H(C_i)+d_H(C_j)}\right] 
    =\sum_{C_i}\sum_{C_j} \E\left[q^{\sum_{e\in C_i}Y_e + \sum_{e\in C_j}Y_e}\right] \\
    &=\sum_{C_i}\sum_{C_j}\prod_{e\in C_i\cap C_j}\E\left[q^{2Y_e}\right] \prod_{e\in C_i\Delta C_j} \E\left[q^{Y_e}\right]
    \stackrel{(\ref{eq:qtoY})}{=} \sum_{C_i}\sum_{C_j}p^{|C_i\Delta C_j|} \cdot \left(\E\left[q^{2Y_e}\right]\right)^{|C_i\cap C_j|}.
\end{align*}
We partition this sum into three parts and separately bound their ratios with $(\zgp)^2$.

For terms with $C_i=C_j$,
\begin{align}
&\frac{\sum_{C_i} p^{|C_i\Delta C_i|} \cdot \left(\E\left[q^{2Y_e}\right]\right)^{|C_i\cap C_i|}}{(\zgp)^2} 
= \frac{\sum_{C_i} \left(\E\left[q^{2Y_e}\right]\right)^{c_i}}{(\zgp)^2} 
= \frac{\sum_{C_i} p^{c_i}\cdot \left(\frac{\E\left[q^{2Y_e}\right]}{p}\right)^{c_i}}{\sum_{C_i} p^{c_i} \cdot \zgp} 
\le \max_{C_i} \frac{\left(\frac{\E\left[q^{2Y_e}\right]}{p}\right)^{c_i}}{\zgp} \nonumber \\
&\stackrel{(\ref{eq:qto2Yfirst})}{\le} \frac{(1-\tau q)^\lambda}{\zgp} 
\le \left(\frac{1-\tau q}{p}\right)^\lambda
\stackrel{(\ref{eq:qto2Ysecond})}{\le} \left(1+O\left(\frac{1}{\lambda\log n}\right)\right)^\lambda = 1+O\left(\frac{1}{\log n}\right). \label{eq:equal-terms-summation}
\end{align}

For terms with $|C_i\cap C_j|\le \lambda$,
\begin{align*}
&\frac{\sum_{|C_i\cap C_j|\le\lambda} p^{|C_i\Delta C_j|}\cdot \left(\E\left[q^{2Y_e}\right]\right)^{|C_i\cap C_j|}}{(\zgp)^2}
=\frac{\sum_{|C_i\cap C_j|\le \lambda}\, p^{c_i+c_j}\cdot \left(\frac{\E\left[q^{2Y_e}\right]}{p^2}\right)^{|C_i\cap C_j|}}{(\zgp)^2} \\
&\stackrel{(\ref{eq:qto2Ysecond})}{\le} \frac{\sum_{|C_i\cap C_j|\le \lambda}\, p^{c_i+c_j}\left(1+O\left(\frac1{\lambda\log n}\right)\right)^{\lambda}}{\sum_{C_i,C_j}p^{c_i+c_j}}  
\le \left(1+O\left(\frac{1}{\lambda\log n}\right)\right)^\lambda = 1+O\left(\frac{1}{\log n}\right).
\end{align*}

For terms with $|C_i\cap C_j| > \lambda$ and $C_i\ne C_j$, we have 
\begin{align*}
&\frac{\sum_{C_i\ne C_j,|C_i\cap C_j|>\lambda} \, p^{|C_i\Delta C_j|}\cdot \E\left[q^{2Y_e}\right]^{|C_i\cap C_j|}}{(\zgp)^2} 
= \frac{\sum_{C_i\ne C_j,|C_i\cap C_j|>\lambda} \, p^{|C_i\cup C_j|} \cdot \left(\frac{\E\left[q^{2Y_e}\right]}{p}\right)^{|C_i\cap C_j|}}{(\zgp)^2} \\
&\stackrel{(\ref{eq:qto2Yfirst})}{\le} \frac{\sum_{C_i\ne C_j,|C_i\cap C_j|>\lambda} \, p^{|C_i\cup C_j|}\cdot (1-\tau q)^{\lambda}}{(\zgp)^2} 
\le \frac{x_G(p) \cdot (1-\tau q)^{\lambda}}{(\zgp)^2} \quad \text{(by definition of $\xgp$)}.
\end{align*}
Applying $\frac{\xgp}{\zgp} \le \frac1{\log n}$ from \Cref{lem:z-approx-u}, this is at most
\[\frac{1}{\log n}\cdot \frac{(1-\tau q)^\lambda}{\zgp} 
\stackrel{(\ref{eq:equal-terms-summation})}{\le} \frac{1}{\log n}\cdot \left(1+O\left(\frac{1}{\log n}\right)\right) 
\le O\left(\frac1{\log n}\right) .\]
In conclusion, the total relative second moment of $z_H(q)$ is given by
\[\frac{\E[z_{H}(q)^2]}{(\zgp)^2} \le \left( 1+O\left(\frac1{\log n}\right)\right)+\left( 1+O\left(\frac1{\log n}\right)\right)+O\left(\frac1{\log n}\right)=2 + O\left(\frac1{\log n}\right).
\]
Finally, we can apply \Cref{lem:u-from-z} to bound the relative second moment of $u_H(q)$ by $2 + O\left(\frac1{\log n}\right)$.


\end{document}